\def\BibTeX{{\rm B\kern-.05em{\sc i\kern-.025em b}\kern-.08em
    T\kern-.1667em\lower.7ex\hbox{E}\kern-.125emX}}
\newtheorem{myExa}{Example}
\newtheorem{myLemma}{Lemma}
\newcommand{\name}{$\mathsf{MUST}$}
\newcommand{\bi}{$\mathsf{MR}$}
\newcommand{\bii}{$\mathsf{JE}$}
\newcommand{\problem}{MSTM}
\newcommand{\tabincell}[2]{\begin{tabular}{@{}#1@{}}#2\end{tabular}}
\newcommand{\squishlist}{
	\begin{list}{$\bullet$}
		{ \setlength{\itemsep}{1pt}
			\setlength{\parsep}{1pt}
			\setlength{\topsep}{2.5pt}
			\setlength{\partopsep}{0.5pt}
			\setlength{\leftmargin}{1em}
			\setlength{\labelwidth}{1em}
			\setlength{\labelsep}{0.6em}
		}
	}
	\newcommand{\squishend}{
	\end{list}
}
  \newcommand\figcaption{\def\@captype{figure}\caption}
  \newcommand\tabcaption{\def\@captype{table}\caption}
\begin{document}

\title{\textsf{MUST}: An Effective and Scalable Framework for Multimodal Search of Target Modality}


\author{Mengzhao Wang\textsuperscript{1},
        Xiangyu Ke\textsuperscript{1},
        Xiaoliang Xu\textsuperscript{2},
        Lu Chen\textsuperscript{1},
        Yunjun Gao\textsuperscript{1},
        Pinpin Huang\textsuperscript{2},
        Runkai Zhu\textsuperscript{2}\\
\IEEEauthorblockN{\textsuperscript{1}Zhejiang University, Hangzhou, China \textsuperscript{2}Hangzhou Dianzi University, Hangzhou, China}
\IEEEauthorblockA{\textit{\{wmzssy,xiangyu.ke,luchen,gaoyj\}@zju.edu.cn; \{xxl,hpp,runkai.zhu\}@hdu.edu.cn}
}
}

\maketitle

\begin{abstract}
We investigate the problem of multimodal search of target modality, where the task involves enhancing a query in a specific target modality by integrating information from auxiliary modalities. The goal is to retrieve relevant objects whose contents in the target modality match the specified multimodal query. The paper first introduces two baseline approaches that integrate techniques from the Database, Information Retrieval, and Computer Vision communities. These baselines either merge the results of separate vector searches for each modality or perform a single-channel vector search by fusing all modalities. However, both baselines have limitations in terms of efficiency and accuracy as they fail to adequately consider the varying importance of fusing information across modalities. 
To overcome these limitations, the paper proposes a novel framework, \underline{\textbf{Mu}}ltimodal \underline{\textbf{S}}earch of \underline{\textbf{T}}arget Modality, called {\name}. Our framework employs a hybrid fusion mechanism, combining different modalities at multiple stages. Notably, we leverage vector weight learning to determine the importance of each modality, thereby enhancing the accuracy of joint similarity measurement. Additionally, the proposed framework utilizes a fused proximity graph index, enabling efficient joint search for multimodal queries. {\name} offers several other advantageous properties, including pluggable design to integrate any advanced embedding techniques, user flexibility to customize weight preferences, and modularized index construction. Extensive experiments on real-world datasets demonstrate the superiority of {\name} over the baselines in terms of both search accuracy and efficiency. Our framework achieves over 10$\times$ faster search times while attaining an average of 93\% higher accuracy. Furthermore, {\name} exhibits scalability to datasets containing more than 10 million data elements.
\end{abstract}

\begin{IEEEkeywords}
multimodal search, high-dimensional vector, weight learning, proximity graph
\end{IEEEkeywords}

\section{Introduction}
\label{sec: introduction}
Multimodal search \cite{tautkute2019deepstyle,etzold2012context,jandial2022sac,wen2021comprehensive} represents a cutting-edge approach to information retrieval that revolutionizes how we interact with vast and diverse data sources. Traditional search engines have relied predominantly on textual queries to deliver results \cite{10.5555/1796408}. However, with the proliferation of the even expressive multimedia contents, such as images, videos, and audio \cite{etzold2012context,baltruvsaitis2018multimodal}, the need for a more comprehensive search paradigm emerged \cite{vo2019composing,PatelGBY22}. Multimodal search aims to address this challenge by integrating information from multiple modalities, unlocking the potential to provide richer and more contextually relevant search results, surpassing traditional search paradigms in various retrieval tasks \cite{wen2021comprehensive,google_mum,yu2021multi}.
By leveraging advanced techniques in natural language processing \cite{DevlinCLT19}, computer vision \cite{he2016deep}, and data fusion \cite{CLIP2021}, multimodal search systems have the capacity to revolutionize user experiences and enable applications that span industries, from e-commerce and healthcare to smart home systems and autonomous vehicles \cite{10.1145/3539597.3570423,levy2023chatting,ramanishka2020describing,ma2022multimodal,MajumdarSLAPB20}.

\begin{figure}[!tb]
  \centering
  \setlength{\abovecaptionskip}{0cm}
  \setlength{\belowcaptionskip}{-0.4cm}
  \includegraphics[width=\linewidth]{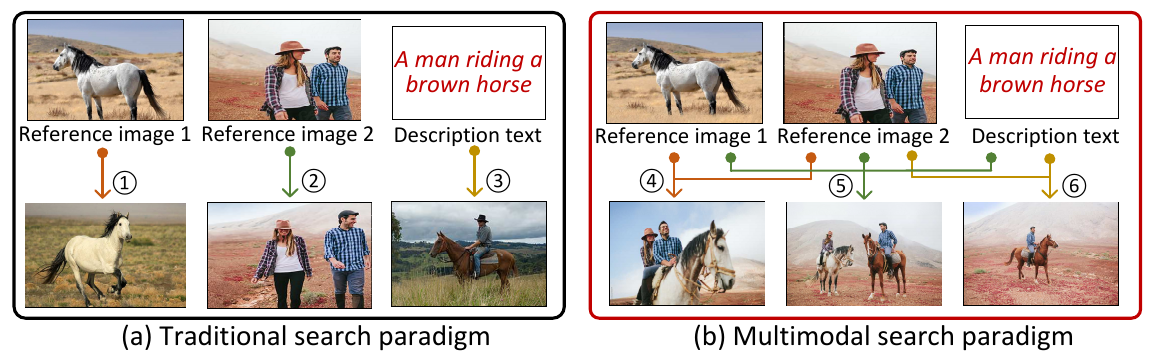}
  \caption{{An example of image search based on two different paradigms. The queries consist of two reference images and a description text. The goal is to search for images that not only resemble the input images, but also modify certain aspects according to the description text.}}
  \label{fig: intro_example}
  \vspace{-0.2cm}
\end{figure}

We investigate a targeted variant of multimodal search, known as {\bf M}ultimodal {\bf S}earch of {\bf T}arget {\bf M}odality (\problem), which enhances data in a specific target modality by integrating information from other modalities. In \problem, the query input includes multiple modalities: one target modality and several auxiliary modality inputs. The target modality offers implicit context, while the auxiliary modalities introduce new traits to the target modality input. The goal is to retrieve relevant objects whose contents in the target modality align with the specified multimodal query. A key feature of {\problem} is its ability to iteratively use a returned target modality example, like an image, as a reference and express differences through auxiliary modalities like text or additional images, allowing users to precisely define search criteria and obtain more tailored results based on preferences and needs.

\begin{myExa}
 In image search tasks, using a single modality as a query may not fully capture users' intentions \cite{zhao2022progressive}. Fig. \ref{fig: intro_example}(a) illustrates an image search using two reference images and a simple text description as queries. Each single modal query yields distinct outputs, highlighting the limitations of relying solely on one modality. 

 In contrast, the right-hand side of Fig. \ref{fig: intro_example} shows that different combinations of query inputs result in diverse outputs, each conveying a wealth of information. Additionally, when setting image 1 as the target modality in query \ding{175}, the emphasis is on the horse, while in query \ding{176}, the auxiliary information in the text guides the focus on the human-horse pairing while preserving all elements in the images. This exemplifies the power of utilizing {\problem} queries to precisely express user preferences and obtain more comprehensive search results.
\end{myExa}

\noindent\textbf{Other Applications}. Apart from its direct use in image search for e-commerce, {\problem} finds diverse applications across various domains. In healthcare, {\problem} enhances decision-making by augmenting medical images with patient electronic health records and symptom descriptions. This enables searching for past medical images with known decision labels, providing medical professionals with a comprehensive view of a patient's condition and facilitating more informed decision-making.
Smart home systems benefit from targeted multimodal fusion, as it allows for tailored responses to users' voice commands, contextual information from sensors, and user profiles and configuration histories. This integration results in more personalized and efficient interactions with smart home devices.
In each scenario, {\problem} empowers these applications to deliver more sophisticated, relevant, and personalized outputs, significantly enhancing the user experience in the digital era.

\noindent\textbf{Possible Solutions and Limitations}. To tackle the {\problem} problem, we propose two baselines: \textit{Multi-streamed Retrieval} ({\bi}) from the Database (DB) and Information Retrieval (IR) communities, and \textit{Joint Embedding} ({\bii}) from the Computer Vision (CV) community. Both baselines utilize advanced embedding techniques to transform multimodal inputs into high-dimensional vectors and subsequently perform vector searches to retrieve results. The main difference between the two lies in how they handle the embedding of a query, leading to distinct implementations of vector search in the context of {\problem}.
In {\bi}, the query is divided into smaller subqueries and separate solutions are applied for each modality \cite{tautkute2019deepstyle}. The results from all candidate sets are merged to obtain the final query result. While this approach can use established single-modal search methods, it still suffers from accuracy and efficiency issues: The candidate sets may be too large or irrelevant due to incomplete, noisy, or ambiguous information in the target modality or auxiliary modalities \cite{zagoris2010mmretrieval,ADBV,zhang2014query}. The evaluation on million-scale data shows that it requires more than $10^4$ candidates per modality to achieve the best top-100 results, yet the recall rate remains low, being less than 0.2 ({Fig. \ref{fig: efficiency evaluation}}).
On the other hand, {\bii} addresses the problem through multimodal learning. This approach embeds the features of all modality inputs into a single vector, allowing for vector search \cite{graph_survey_vldb2021,HNSW,NSG} on a corpus of vectors for the target modality. However, {\bii} faces challenges in synergistically understanding multimodal information. The modality gap introduces ambiguity in determining what information is essential and what can be disregarded \cite{zhang2020joint}, making joint embedding still an open problem \cite{jandial2022sac,DelmasRCL22}. Notably, even with the best joint embedding approach, the top-1 recall rate barely surpasses 0.4 (\textbf{\S \ref{subsec: accuracy evaluation}}).

\noindent\textbf{Our Solution}. We present a novel framework for \underline{\textbf{Mu}}ltimodal \underline{\textbf{S}}earch of \underline{\textbf{T}}arget Modality, named {\name}. This framework utilizes a hybrid fusion mechanism to combine various modalities at multiple stages, enhancing search accuracy by minimizing similarity measurement errors. Additionally, it constructs a fused proximity graph index encompassing all modal information and performs an efficient joint search. Indeed, {\name} distinguishes itself from the two baselines in three main aspects.
\underline{First}, {\name} enables the fusion of multiple modalities through a composition vector generated using multimodal learning models like CLIP \cite{baldrati2022conditioned}. Meanwhile, {\name} still supports the separate embedding of different modalities. Note that the embedding component in {\name} is {\em pluggable}, allowing seamless integration of any newly-devised encoder into the system.
\underline{Second}, {\name} provides a vector weight learning model to obtain the relative weights of different modalities, which projects an object to a unified high-dimensional vector space by concatenating different modal vectors with these weights. The loss function pulls the anchor closer to the positive example and pushes it away from the negative examples, based on the joint similarity in the unified space. Importantly, the learned weights capture the significance of different modalities, not their specific contents, leading to improved generalization across various query workloads. Despite the learned weights, {\name} still allows users to customize their weight preferences if desired.
\underline{Third}, {\name} builds a fused index for all modal information (not a separate index for each modality). Based on this index, it implements a joint search strategy for the multimodal query to obtain results efficiently. {Notably, {\name} employs a general pipeline to construct the fused index by amalgamating fined-grained components, enabling \textit{flexibility} to seamlessly integrate these components from current proximity graphs.} Furthermore, {\name} enhances indexing and search performance by re-assembling index components and optimizing the multi-vector computations.

\noindent\textbf{Contributions}. To the best of our knowledge, this is the first work that systematically explores the {\problem} problem in data embedding, importance mining, indexing, and search strategies.
The main contributions are:

\squishlist

\item We explore the {\problem} problem, which enhances a query in a specific target modality by combining information from auxiliary modalities (\textbf{\S \ref{sec: pre}}). We embed objects using various encoders and build two baselines for {\problem} by integrating the techniques from DB, IR, and CV communities (\textbf{\S \ref{subsec: baselines}}).

\item We present~\name{}, a new framework that uses a hybrid fusion mechanism to improve search accuracy and efficiency for any modality combination of {\problem} (\textbf{\S \ref{sec: framework overview}}). {\name} supports pluggable unimodal and multimodal embedding methods (\textbf{\S \ref{subsec: accuracy evaluation}}) and various graph indexes (\textbf{\S \ref{subsec: ablation}}).

\item We provide a multi-vector representation method for multimodal objects and queries (\textbf{\S \ref{sec: embedding}}). Each modality of an object or query is transformed into a high-dimensional vector via unimodal or multimodal encoders. This way, we can describe an object or query more fully with multiple vectors.

\item We present a lightweight and effective vector weight learning model, to get the relative weights of different modalities (\textbf{\S \ref{sec: vector weight learning}}). The learned weights capture the importance of different modalities and adapt to various query workloads.

\item We provide a component-based index construction pipeline to build a fused index for all modal information and execute a joint search of the multimodal query (\textbf{\S \ref{sec: indexing and search algorithms}}). Our pipeline achieves better performance by re-assembling existing components and optimizing multi-vector computations.

\item We implement~\name{} and evaluate it on five real-world datasets and four extended datasets to verify its accuracy, efficiency, and scalability (\textbf{\S \ref{sec: experiments}}). We show that search accuracy can be improved significantly by multi-stage fusion (\textbf{\S \ref{subsec: accuracy evaluation}}) or combining more modalities (\textbf{\S \ref{subsec: scalability}}).

\squishend

\section{Preliminaries}\label{sec: pre}
In this section, we present the essential terminology and formally define the {\problem} problem. The frequently-used notations are summarized in Table~\ref{tab:notations}.

\begin{table}[!tb]
  \centering
  \setlength{\abovecaptionskip}{0cm}
  \fontsize{8pt}{3.3mm}\selectfont
  \caption{Frequently used notations}
  \label{notations}
  \begin{tabular}{|p{40pt}|p{180pt}|}
    \hline
    \textbf{Notations} & \textbf{Descriptions}\\
    \hline
    $\mathcal{S},o$ & A set of multimodal objects, an object in $\mathcal{S}$\\
    \hline
    $q$ & {A multimodal query input}\\
    \hline
    $o^i,q^i$ & {The data part of $o$, $q$ in the $i$-th modality} \\
    \hline
    $m$ & {The number of modalities in $o$ ($o\in \mathcal{S}$)}\\
    \hline
    $t$ & {The number of modalities in $q$ ($t\leq m$, usually $t= m$)}\\
    \hline
    $\phi_i(\cdot)$ & The encoder for the $i$-th modality\\
    \hline
    $\Phi(\cdot,\cdot,\cdots)$ & A multimodal encoder \\
    \hline
    $IP(\cdot,\cdot)$ & The inner product (\textit{IP}) of two vectors\\
    \hline
    $SME$ & The similarity measure error (Eq. \ref{equ: sme}) \\
    \hline
    $\omega_i$ & The vector weight in the $i$-th modality \\
    \hline
    $\boldsymbol{\hat{q}},\boldsymbol{\hat{o}}$ & The concatenated vectors of $q$, $o$ \\
    \hline
  \end{tabular}
  \vspace{-0.4cm}
  \label{tab:notations}
\end{table}

\noindent\textbf{Object Set.} The object set $\mathcal{S}$ consists of $n$ objects, each $o\in \mathcal{S}$ possessing $m$ modalities ($m\geq 1$), represented as $o^i$ ($0\leq i\leq m-1$). In our focus, $m>1$, indicates that each object in the set has multiple modalities.
The versatility of an object set allows it to represent various types of data. For example, in the context of movies, each object may comprise three modalities—video, image, and text—corresponding to the movie itself, its poster, and introduction, respectively.

\noindent\textbf{Query.} A query $q$ consists of $t$ modalities, each represented by $q^i$ ($0\leq i \leq t-1$, $1\leq t\leq m$). We focus on the case where $t>1$ indicates a multimodal query input. In this context, we specify one of the query modalities as the target, which is used for rendering the search results\footnote{{We can also fuse other modalities into the target modality to form a composition vector, i.e., Option 2 in Fig. \ref{fig: our_framework}(f) and discussion in {\S \ref{sec: framework overview}}.}}. For simplicity, throughout the following discussion, we assume that $q^0$ represents the target modality.
However, it is important to note that users may not always provide a multimodal query input. Our solution for solving {\problem} is designed to accommodate such cases when certain modalities, including the targeted modality, might be absent. Further details are provided in \textbf{\S \ref{sec: discussion}}.

\noindent\textbf{Problem Statement.} Given an object set $\mathcal{S}$, a query input $q$, and a positive integer $k$, the goal of {\bf M}ultimodal {\bf S}earch of the {\bf T}arget {\bf M}odality ({\problem}) problem is to find $k$ objects from $\mathcal{S}$ that best match the query. Specifically, the target modality part of each object in the result set $R$ should closely resemble $q^0$, while also adhering to certain aspects specified by the set $\{q^i|1\leq i \leq t-1\}$, which comprises the auxiliary modalities of the query. 
In the image search example of Fig. \ref{fig: intro_example}, the output image of query \ding{176} contains all elements present in two reference images and also matches the provided text description. This exemplifies a successful search result that accurately captures the user's query intent.

\noindent\textbf{Performance Metric.} To measure the accuracy of the search results, we use the recall rate as the evaluation metric.
Suppose there are $k^{\prime}$ ground-truth objects of $q$ (denoted by $\mathcal{G}$) in the object set $\mathcal{S}$. The recall rate at $k$ is formally defined as below:
\begin{equation}
  \label{equ: recall}
  Recall@k(k^{\prime})=\frac{|{R}\cap\mathcal{G}|}{k^{\prime}}\quad.
\end{equation}

\section{Baselines}\label{subsec: baselines}
To tackle the {\problem} problem, we propose two baselines, namely {\bi} and {\bii}, leveraging current advancements.

\noindent\textbf{Basic Idea.} Given an object set $\mathcal{S}$ and a query $q$, we transform the different modalities into high-dimensional feature vectors through embedding\footnote{We apply state-of-the-art embedding techniques for each modality (please refer to Appendix \ref{appendix: encoders} for the specific encoders used in this paper).
}. These vectors capture the essence of each modality and allow for efficient comparison and retrieval \cite{vo2019composing,CLIP2021}.
To solve {\problem}, we conduct a vector search procedure using one or more vector indexes constructed from the feature vectors of objects in $\mathcal{S}$. The similarity between the query vector and potential result vectors from $\mathcal{S}$ is evaluated using the inner product (\textit{IP}). As such, we can find objects in $\mathcal{S}$ whose target modality parts closely match the multimodal query.

\begin{figure}[!tb]
  \centering
  \setlength{\abovecaptionskip}{0.1cm}
  \setlength{\belowcaptionskip}{-0.6cm}
  \includegraphics[width=\linewidth]{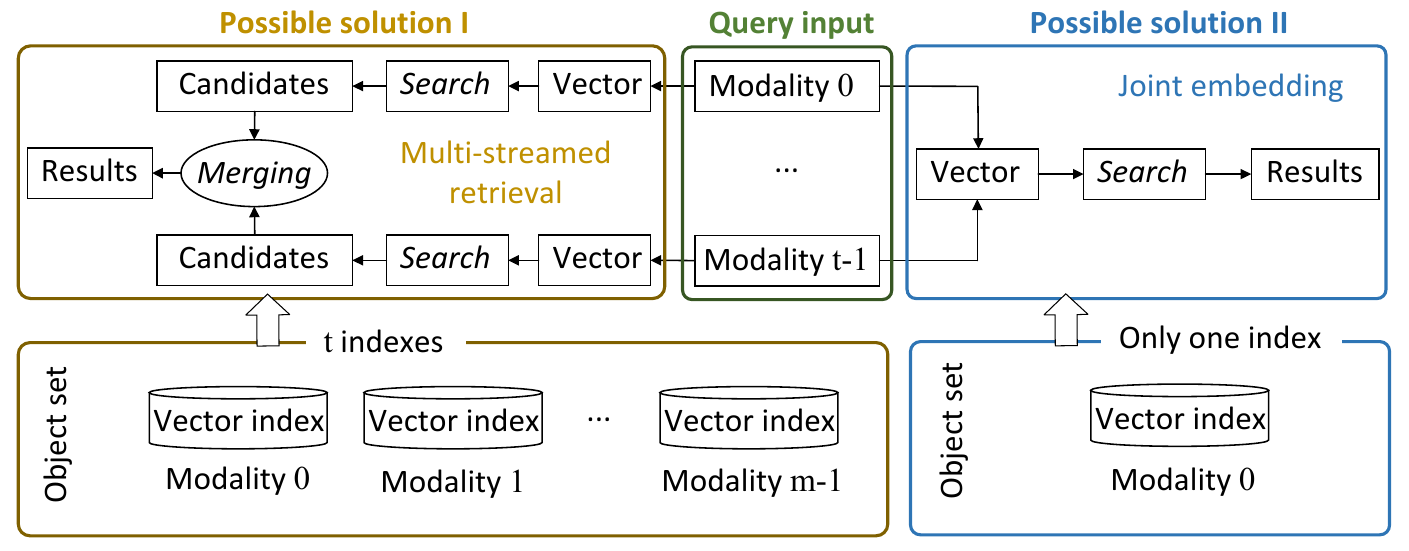}
  \caption{{Overview of two possible baselines for the {\problem} problem.}}
  \label{fig: current_framework}
\end{figure}

\noindent\textbf{Similarity Measurement Error.} Unless stated otherwise, we compute the similarity between vectors using the \textit{IP} metric, and all vectors are normalized. For a given object $o$ in $\mathcal{S}$ and a query input $q$, we calculate the \textit{IP} between their corresponding vector representations $\phi_i(q^i)$ and $\phi_i(o^i)$ as follows:
\begin{equation}
  \label{equ: sm_p}
  {IP(\phi_i(q^i),\phi_i(o^i)) = \phi_i(q^i) \odot \phi_i(o^i)\quad,}
\end{equation}
where $\odot$ denotes element-wise multiplication. The value of $IP(\phi_i(q^i),\phi_i(o^i))$ $\in$ $[0,1]$ indicates the similarity between $o$ and $q$ in the $i$-th modality. A higher value signifies a greater similarity.
For the composition vector $\Phi(q^0, \cdots, q^{t-1})$, we compute the \textit{IP} w.r.t the target modality by
\begin{equation}
  \label{equ: sm_pa}
  {IP(\Phi(q^0,\cdots,q^{t-1}),\phi_0(o^0)) = \Phi(q^0,\cdots,q^{t-1}) \odot \phi_0(o^0)\quad.}
\end{equation}
Current embedding methods can ensure that $\Phi(q^0,\cdots,q^{t-1})$ and $\phi_0(o^0)$ share the same vector space \cite{vo2019composing, baldrati2022conditioned}. Eq. \ref{equ: sm_pa} illustrates the similarity between the query $q$ and the target modality content of object $o$. The query result is the object whose target modality content exhibits the highest similarity to $q$. In our assumption of exact vector search, which entails no errors in the similarity computation, the query accuracy is solely dependent on the similarity measurement error ($SME$). For a result object $r$ and the ground-truth result $a$ w.r.t $q$, the $SME$ is computed as follows:
\begin{equation}
  \label{equ: sme}
  {SME(a,r) = 1-IP(\phi_0(a^0),\phi_0(r^0))\quad.}
\end{equation}
The $SME$ reflects the encoder loss and how well the exact vector search can retrieve the ground-truth object.

\noindent\textbf{Baseline 1: Multi-streamed Retrieval ({\bi}).} As depicted in Fig. \ref{fig: current_framework} (upper left), {\bi} divides the {\problem} into $t$ separate sub-queries, each focusing on a different modality. These individual sub-queries are processed independently to obtain candidate sets of potential results for each modality. To achieve this, {\bi} uses a unimodal encoder $\phi_{i}(\cdot)$ to embed each query element $q^i$ into a vector space, resulting in the feature vector $\phi_{i}(q^i)$ \cite{kennedy2008query}. Subsequently, it builds $m$ vector indexes on $\mathcal{S}$ for all modalities and performs $t$ separate vector search procedures \cite{tautkute2019deepstyle}. Finally, it merges all candidates from individual sub-queries and returns the final results \cite{zagoris2010mmretrieval}. This framework is a common practice in research related to DB and IR \cite{Milvus_sigmod2021, ADBV, wang2022navigable}, and it efficiently handles hybrid queries by effectively merging multiple constraints {\em with known importance}, making it a possible baseline to address \problem.

In hybrid queries for vector similarity search with attribute constraints \cite{Milvus_sigmod2021, ADBV}, the attribute holds higher importance than the feature vector. This allows for straightforward candidate merging by identifying objects that (1) match the attribute of the query and (2) are more similar to the feature vector of the query. However, in {\problem}, the importance of each modality is unknown, making it challenging to directly merge candidates based on their importance.
We take the intersection of all candidates as the final results in {\problem}, and further optimize this framework by replacing $\phi_{0}(q^0)$ with $\Phi(q^0,\cdots,q^{t-1})$ obtained from the joint embedding.

\noindent\textbf{Baseline 2: Joint Embedding ({\bii}).} {\bii} leverages the advancements in multimodal representation learning \cite{vo2019composing,CLIP2021} to address the {\problem} problem. It processes the multimodal query by fusing the target modality and auxiliary modality inputs into a single query vector. In Fig. \ref{fig: current_framework} (upper right), both the target modality and auxiliary modality inputs are jointly embedded to create a unified vector representation $\Phi(q^0,\cdots,q^{t-1})$.
Once the composition vector is obtained, vector search is performed on the vector index constructed from the target modality vectors $\{\phi_0(o^0)|o\in \mathcal{S} \}$. Recently, the CV community has extensively explored multimodal encoders, leading to the design of various joint embedding networks that aim to enhance the quality of embeddings. For instance, TIRG (Text-Image Residual Gating) \cite{vo2019composing} employs a gating-residual mechanism to fuse multiple modalities effectively. Another notable work \cite{baldrati2022conditioned} introduces a combiner network that combines features from multiple modalities, derived from the OpenAI CLIP network \cite{CLIP2021}. However, these existing multimodal encoders also fail to capture the importance of different modalities, which is crucial in solving \problem.

\noindent\textbf{Summary.} Both baselines employ vector search to efficiently retrieve query results, as depicted in Fig. \ref{fig: current_framework}. However, they differ in how they process the embedding of the query $q$, resulting in distinct implementations of the vector search procedure in the context of {\problem}. For detailed explanations of the high-dimensional vector search, please refer to Appendix \ref{appendix: vector search}.

\section{Proposed {\name} Framework: An Overview}
\label{sec: framework overview}

\begin{figure}[!tb]
  \centering
  \setlength{\abovecaptionskip}{0.1cm}
  \setlength{\belowcaptionskip}{-0.6cm}
  \includegraphics[width=\linewidth]{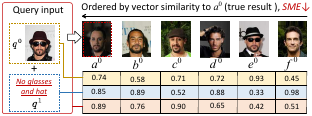}
  \caption{A face retrieval example on CelebA with image and text modalities \cite{celeba}. $a$–$f$ are the returned objects by different methods, only the target modality is shown. The table shows the \textit{IP} between different query vectors and the upper face vectors. Three query vectors are $\phi_0(q^0)$, $\phi_1(q^1)$, and $\Phi(q^0,q^1)$, corresponding to the yellow, blue, and red rows, respectively.}
  \label{fig: similarity_error}
\end{figure}

\begin{figure*}[!tb]
  \centering
  \setlength{\abovecaptionskip}{0cm}
  \setlength{\belowcaptionskip}{-0.6cm}
  \includegraphics[width=\linewidth]{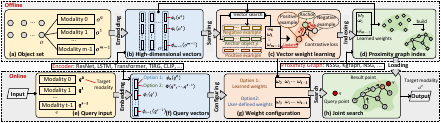}
  \caption{Overview of the working flow of {\name}.}
  \label{fig: our_framework}
\end{figure*}

In the following, we discuss the limitations of the two baselines, {\bi} and {\bii}, in addressing {\problem}. These baselines, while effective in other contexts, face challenges when it comes to handling the importance of different modalities in the multimodal query. We will highlight their shortcomings and propose a new framework, \underline{\textbf{Mu}}ltimodal \underline{\textbf{S}}earch of \underline{\textbf{T}}arget Modality ({\name}), that overcomes these issues. Our new approach aims to adapt to the varying significance of different modalities, enabling more accurate and context-aware multimodal search results in {\problem}.

\begin{myExa}
\label{example: sme}
In Fig. \ref{fig: similarity_error}, we illustrate an example on a real-world face dataset, CelebA \cite{celeba}. The query input $q$ consists of a reference face $q^0$ and a textual constraint $q^1$. The image $a^0$ is the ground-truth face that perfectly matches $q$, and $b^{0}$–$f^{0}$ are other candidate images\footnote{We use the ResNet \cite{he2016deep} to encode these images and calculate their $SME$ w.r.t. $a^0$. We also transform the textual constraint and image-text pair into vectors using the Encoding \cite{wang2015exploring} and CLIP \cite{baldrati2022conditioned}, respectively.}. These images are displayed in $SME$-descending order from right to left, indicating the increasing similarity with $a^0$.  
Additionally, the table in Fig. \ref{fig: similarity_error} presents the \textit{IP} values between different query vectors and the face vectors. For example, the first column of the table provides the \textit{IP} values between $\phi_0(q^0)$ and $\phi_0(a^0)$, $\phi_1(q^1)$ and $\phi_1(a^1)$, and $\Phi(q^0,q^1)$ and $\phi_0(a^0)$, from top to bottom.

The two baselines, {\bi} and {\bii}, differ primarily in how they encode and utilize the multimodal query $q$ and the object set $\mathcal{S}$—employing early or late fusion, respectively.
In {\bi}, $q^0$ and $q^1$ are separately encoded. Fig. \ref{fig: similarity_error} shows that it returns two top-$1$ faces, $e$ and $f$, by searching for the most similar vectors concerning image and text vectors, respectively. However, these images are dissimilar to the ground-truth $a^0$ due to incomplete query intent. Even considering the intersection of the top-$3$ candidate sets $\{ e, a, d \}$ for images and $\{ f, b, d \}$ for text, it still returns image $d$ instead of $a$, indicating the limitation of late fusion.
For {\bii}, $q^0$ and $q^1$ are combined into a composition vector $\Phi(q^0,q^{1})$. By searching for the most similar vector to $\Phi(q^0,q^{1})$ on $\{o^0|o\in \mathcal{S}\}$, it erroneously returns the face $c$, despite using the most advanced joint embedding technique \cite{baldrati2022conditioned}. Even trying to obtain two top-$3$ candidate sets $\{ c, a, b \}$ and $\{ f, b, d \}$ by searching for the closest vectors to $\Phi(q^0,q^{1})$ on $\{o^0|o\in \mathcal{S} \}$ and $\phi_1(q^1)$ on $\{o^1|o\in \mathcal{S} \}$, respectively, and then merging them leads to suboptimal results (please refer to \textbf{\S \ref{subsec: accuracy evaluation}} for evaluation). This approach combines different fusion stages from the two baselines but still fails to account for the importance of different modalities, limiting accuracy and efficiency due to the merging operation.
This motivates us to explore more sophisticated and comprehensive ways to fuse multiple modalities, considering the varying importance of different modalities in the multimodal query.
\end{myExa}

{\name} is a novel framework tailored for addressing the {\problem}, offering high accuracy, efficiency, and scalability. Unlike the two baselines discussed earlier, {\name} adopts a more comprehensive approach by incorporating three {\em pluggable} components that {\em fuse multiple modalities at different levels}. This allows {\name} to leverage all available modality information, taking advantage of the complementary benefits offered by different fusion levels. By doing so, {\name} accurately captures the importance of different modalities while efficiently executing joint search operations on a fused index.
Fig. \ref{fig: our_framework} provides a high-level overview of the {\name} framework, showcasing its key components, as elaborated below:

\noindent\textbf{Embedding.} As depicted in Fig. \ref{fig: our_framework} (left), the {\name} framework offers remarkable flexibility in representing objects or queries using multiple high-dimensional vectors obtained from various unimodal or multimodal encoders. It can seamlessly accommodate any encoder for any combination of modalities, such as using LSTM \cite{greff2016lstm} for text, ResNet \cite{he2016deep} for images, CLIP \cite{baldrati2022conditioned} for text-image pairs, and more. 
For an object set $\mathcal{S}$, {\name} transforms each object $o$ into $m$ vectors from $m$ different modalities, and each query $q$ into $t$ query vectors. Notably, {\name} allows for flexible encoding of the target modality input. It can be encoded independently (Option 1 in Fig. \ref{fig: our_framework}(f)), or fused with other modalities using a multimodal encoder (Option 2 in Fig. \ref{fig: our_framework}(f)).
By default, $\Phi(q^0,\cdots,q^{t-1})$ is represented in the same vector space as $\phi_0(q^0)$ \cite{vo2019composing}, ensuring compatibility and coherence within the framework. This adaptability empowers {\name} to effectively handle diverse multimodal scenarios, making it a powerful and versatile solution for addressing the {\problem}.

\noindent\textbf{Vector Weight Learning.} Innovatively, {\name} introduces a vector weight learning model that discerns the importance of different modalities for similarity measurement between objects. Considering a pair of objects $p$ and $o$, {\name} assigns specific weights to the vector spaces of each modality, effectively adjusting the influence of each vector. As illustrated in Fig. \ref{fig: our_framework}(c), {\name} incorporates the weight $\omega_i$ into $\phi_i(p^i)$ to create a virtual anchor (colored green). This virtual anchor is represented by a concatenated vector $\boldsymbol{\hat{p}}=[ \omega_{0}\cdot \phi_0(p^0),\cdots, \omega_{m-1}\cdot \phi_{m-1}(p^{m-1}) ]$. Similarly, {\name} generates the virtual point for object $o$ by $\boldsymbol{\hat{o}}=[ \omega_{0}\cdot \phi_0(o^0),\cdots, \omega_{m-1}\cdot \phi_{m-1}(o^{m-1}) ]$. The joint similarity between $p$ and $o$ is then computed by the \textit{IP} between $\boldsymbol{\hat{p}}$ and $\boldsymbol{\hat{o}}$.
To achieve the learning of vector weights, {\name} utilizes vector search to identify negative examples that share a high joint similarity with $p$. By employing a contrastive loss function, {\name} moves the virtual anchor away from the virtual points of negative examples and closer to the virtual point of the positive example. Through this process, the weights are adaptively adjusted to reflect the relative importance of different modalities in the similarity measurement.
Ultimately, {\name} outputs the learned weights, which can be effectively used for indexing and search operations

\noindent\textbf{Indexing and Searching.} {\name} constructs a fused proximity graph index based on the joint similarity between objects in the object set $\mathcal{S}$. The weights of different modalities, acquired from the model shown in Fig. \ref{fig: our_framework}(c), are utilized in this process. For a query input $q$ with $t$ query vectors, {\name} employs a merging-free joint search procedure to find the ground-truth object on the fused index. In the fused index $G=(V,E)$ (Fig. \ref{fig: our_framework}(d)), the objects in $\mathcal{S}$ correspond to vertices in $V$, and edges in $E$ represent similar object pairs in terms of their joint similarity. When processing a query input $q$, {\name}'s search procedure initiates from either a random or fixed vertex (e.g., $g$ in Fig. \ref{fig: our_framework}(h)) and explores neighboring vertices in $G$ that are closer to $q$ (e.g., $e,c,a$). The procedure continues until it reaches a vertex that has no neighbors closer to $q$ than itself (e.g., $a$). Throughout this procedure, {\name} calculates the distance of vertices from $q$ using joint similarity.
Regarding the weight options, {\name} provides two choices: (1) learned weights obtained from the offline model (Option 1 in Fig. \ref{fig: our_framework}(g)), and (2) user-defined weights (Option 2 in Fig. \ref{fig: our_framework}(g)). This flexibility allows users to either leverage weights learned from the vector weight learning model or manually specify their own weights for a more customized search experience.

\begin{myExa}
\label{example: our framework}
In the face retrieval example shown in Fig. \ref{fig: similarity_error}, our vector weight learning model outputs the weights $\omega_0=0.80$ and $\omega_1=0.33$ for the two modalities. Leveraging these learned weights, we compute the joint similarity between the query $q$ and the candidate objects. The concatenated vector representation of $q$ is computed as $\boldsymbol{\hat{q}}=[\omega_0 \cdot \Phi(q^0,q^1), \omega_1 \cdot \phi_1(q^1)]$. By using this joint similarity computation, we find that object $a$ has the highest joint similarity to the query $q$ compared to the other candidates. According to Lemma \ref{lemma: joint similarity}, we calculate $IP(\boldsymbol{\hat{q}}, \boldsymbol{\hat{a}})$=0.6622. As a result, {\name} achieves a significantly improved query result by effectively capturing the importance of different modalities and accurately evaluating the joint similarity between objects.
\end{myExa}
\section{Embedding}
\label{sec: embedding}
Deep representation learning has revolutionized the use of various encoders to transform information into high-dimensional vectors, benefiting different downstream tasks \cite{Milvus_sigmod2021}. Traditional encoders represent objects using single vectors from individual modalities. For example, ResNet \cite{he2016deep} encodes face images into vectors. However, recent progress in multimodal learning has introduced encoders that can fuse multiple modalities, such as the CLIP model, which can embed both face and text as a unified vector \cite{jandial2022sac}. Despite these advancements, research indicates that a single-vector representation may be inadequate for unimodal encoders, capturing only partial object information \cite{Milvus_sigmod2021}, and may introduce significant encoder errors for multimodal encoders \cite{DelmasRCL22}. Our experiments confirm that relying on a single-vector representation leads to notably low search accuracy (see Tab. \ref{tab: accuracy mitstates}–\ref{tab: accuracy coco}).

In {\name}, we introduce a novel multi-vector representation method for multimodal objects and queries (refer to Fig. \ref{fig: our_framework}(b) and (f)). This approach generates distinct vector representations for different modalities of an object or query. Importantly, {\name} offers flexibility in encoding the target modality input. It can either be independently encoded (Option 1 in Fig. \ref{fig: our_framework}(f)) or fused with other modalities using a multimodal encoder (Option 2 in Fig. \ref{fig: our_framework}(f)). By default, $\Phi(q^0,\cdots,q^{t-1})$ is represented in the same vector space as $\phi_0(q^0)$ \cite{vo2019composing}, ensuring compatibility and coherence within the framework.

This approach enables us to describe an object or query comprehensively using multiple vectors, resulting in strong generalization capabilities for {\name}. In scenarios where multimodal query input is unavailable, users can still perform conventional single-modal search initially and then achieve improved results through {\problem}. Additionally, the embedding component in {\name} is pluggable, allowing seamless integration of any newly-devised encoders into the system. Further details about the encoders are provided in Appendix \ref{appendix: encoders}.
\section{Vector Weight Learning}
\label{sec: vector weight learning}

In {\name}, we combine all vectors of an object using a set of weights to form a {\em concatenated} vector. These weights serve as indicators of the importance of different modalities in representing the object. By doing so, we achieve the mapping of each object into a {\em unified high-dimensional vector space}, facilitating similarity computation between object pairs through the Inner Product (\textit{IP}) of their concatenated vectors. To determine these weights, we introduce a lightweight vector weight learning model based on {\em contrastive learning}.
To begin, given an anchor object, we acquire its positive and negative examples. Subsequently, we construct a contrastive loss function and minimize it to learn the relative weights. The training pipeline of the model is depicted in Fig. \ref{fig: our_framework}(c).

\subsection{Positive and Negative Examples}

The training data consists of two parts: the anchor set $Q$ (i.e., queries) and a set of their true resultant objects $T$. For each anchor $p\in Q$, there is a corresponding true object in $T$. Positive and negative examples for $p$ are created as follows.

\noindent\textbf{Positive Example}. In $T$, the true  object corresponding to the anchor $p$ is directly assigned as a positive example $p^{+}$.

\noindent\textbf{Negative Examples}. We focus on identifying hard negative examples that are easily confused with the true object of $p$. Using a weight combination ${\omega_0, \omega_1, \cdots, \omega_{m-1} }$, we map $p$ and objects in $T$ into a unified vector space (shadow region in Fig. \ref{fig: our_framework}(c)). In this space, we generate virtual anchor and object points based on their concatenated vectors. Then, through vector search, we obtain the top-$k$ result objects denoted by a set $R$ with the highest similarity to $p$. $R$ is defined as follows:
\begin{equation}
  \label{equ: negative_example}
  R=\arg \max_{R\subset T\wedge |R|=k} \sum_{r \in R} IP(\boldsymbol{\hat{r}},\boldsymbol{\hat{p}})\quad,
\end{equation}
where $\boldsymbol{\hat{r}}=[ \omega_{0}\cdot \phi_0(r^0),\cdots, \omega_{m-1}\cdot \phi_{m-1}(r^{m-1}) ]$ and $\boldsymbol{\hat{p}}=[ \omega_{0}\cdot \phi_0(p^0),\cdots, \omega_{m-1}\cdot \phi_{m-1}(p^{m-1}) ]$ are the concatenated vectors of $r$ and $p$, respectively. We designate false objects in $R$ as negative examples, denoted by $N^{-}=R\setminus \{p^{+}\}$.

\subsection{Loss Function}
Our training objective is to push the virtual anchor away from the virtual points of objects in $N^{-}$ and pull it closer to the virtual point of the object $p^+$. To achieve this, we devise a loss function $L$ based on the well-known contrastive loss \cite{RobinsonCSJ21}. Let $Q$ be a training minibatch of $M$ anchors, and we define the loss function as follows:
\begin{equation}
  \label{equ: loss}
  L=\frac{1}{M} \sum_{p\in Q}-\log \frac{e^{IP(\boldsymbol{\hat{p}},\boldsymbol{\hat{p}^+})}}{e^{IP(\boldsymbol{\hat{p}},\boldsymbol{\hat{p}^+})}+\sum_{p^-\in N^-}e^{IP(\boldsymbol{\hat{p}},\boldsymbol{\hat{p}^-})}}\quad.
\end{equation}
We aim to minimize $L$ to learn the relative weights, starting with a random initialization of weights ${\omega_0, \omega_1, \cdots, \omega_{m-1} }$. These weights are used to compute concatenated vectors of an anchor $p$ and its positive and negative examples, with negative examples obtained through vector search under the current weights. The top-$k$ result objects $R$ are obtained using Eq. \ref{equ: negative_example}. If the positive example $p^+$ is not in $R$, and for all $p^{-}\in N^-$, it holds that $IP(\boldsymbol{\hat{p}},\boldsymbol{\hat{p}^-}) > IP(\boldsymbol{\hat{p}},\boldsymbol{\hat{p}^+})$, the loss $L$ is significant. To minimize this loss, we update the weights using gradient descent in a way that increases $e^{IP(\boldsymbol{\hat{p}},\boldsymbol{\hat{p}^+})}$ while decreasing $\sum_{p^-\in N^-}e^{IP(\boldsymbol{\hat{p}},\boldsymbol{\hat{p}^-})}$.
This weight update encourages the positive example to have a higher \textit{IP} w.r.t. $p$, while pushing the negatives to have a lower \textit{IP}. Subsequently, we can obtain new negatives using the updated weights and continue the weight optimization process. We eventually arrive at a set of  learned weights, under which the true object is more likely to be retrieved as the top result in the search process.
We have the following lemma:

\begin{myLemma}
\label{lemma: joint similarity}
The joint similarity of an object pair is the weighted sum of the similarity of each modality.
\end{myLemma}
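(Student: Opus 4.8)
The plan is to prove the identity directly by unfolding the definition of the concatenated vectors and exploiting the block structure of the inner product. Write the joint similarity of a pair $(p,o)$ as $IP(\boldsymbol{\hat{p}},\boldsymbol{\hat{o}})$, where by construction $\boldsymbol{\hat{p}}=[\omega_0\cdot\phi_0(p^0),\cdots,\omega_{m-1}\cdot\phi_{m-1}(p^{m-1})]$ and $\boldsymbol{\hat{o}}=[\omega_0\cdot\phi_0(o^0),\cdots,\omega_{m-1}\cdot\phi_{m-1}(o^{m-1})]$ are formed by stacking the $m$ per-modality blocks. Since both vectors are partitioned into the same $m$ contiguous blocks --- the $i$-th block of $\boldsymbol{\hat{p}}$ being $\omega_i\phi_i(p^i)$ and likewise for $\boldsymbol{\hat{o}}$ --- the dot product over the full coordinate set splits as a sum of dot products taken block-by-block, with no cross terms between different modalities.

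The key step is then purely algebraic: within the $i$-th block the common scalar $\omega_i$ factors out of each argument, so $\langle \omega_i\phi_i(p^i),\, \omega_i\phi_i(o^i)\rangle = \omega_i^2\,\langle \phi_i(p^i),\phi_i(o^i)\rangle = \omega_i^2\cdot IP(\phi_i(p^i),\phi_i(o^i))$. Summing over $i$ yields
\begin{equation*}
IP(\boldsymbol{\hat{p}},\boldsymbol{\hat{o}}) = \sum_{i=0}^{m-1}\omega_i^2\cdot IP(\phi_i(p^i),\phi_i(o^i)),
\end{equation*}
which exhibits the joint similarity as a weighted sum of the per-modality similarities, modality $i$ carrying weight $\omega_i^2$. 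When the target modality is encoded by a multimodal encoder (Option 2), the only change is to substitute $\Phi(q^0,\cdots,q^{t-1})$ for $\phi_0(q^0)$ in the zeroth block; the decomposition is unaffected because $\Phi$ is assumed to live in the same vector space as $\phi_0$.

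Since the algebra is elementary, the real work is in fixing the precise statement rather than in the derivation. The main point to pin down is the coefficient: because both the anchor and the candidate are scaled by the \emph{same} weights, modality $i$ contributes with $\omega_i^2$, not $\omega_i$; I would state the lemma with this squared coefficient (or, equivalently, reparametrize so that the reported weight is $\omega_i^2$) and flag that ``weighted'' in the informal statement refers to these nonnegative block coefficients. I would also reconcile the $IP(\cdot,\cdot)=\phi\odot\phi$ notation of Eq.~\ref{equ: sm_p}: here $\odot$ must be read as the summed element-wise product, i.e.\ the scalar inner product, so that each per-modality term is the scalar similarity in $[0,1]$ and the block-additivity used above is exactly the distributivity of the dot product over a direct-sum decomposition. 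Finally, as a sanity check I would confirm the formula against Example~\ref{example: our framework}, verifying that $\omega_0^2\cdot IP(\Phi(q^0,q^1),\phi_0(a^0)) + \omega_1^2\cdot IP(\phi_1(q^1),\phi_1(a^1))$ reproduces the reported value $IP(\boldsymbol{\hat{q}},\boldsymbol{\hat{a}})=0.6622$ with $\omega_0=0.80$ and $\omega_1=0.33$.
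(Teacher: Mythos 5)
Your proposal is correct and follows essentially the same route as the paper's proof: unfold the concatenated vectors, split the inner product block-by-block across modalities, and factor out the common scalar in each block to obtain $IP(\boldsymbol{\hat{p}},\boldsymbol{\hat{o}})=\sum_{i=0}^{m-1}\omega_i^2\cdot IP(\phi_i(p^i),\phi_i(o^i))$, matching the paper's equation exactly, including the $\omega_i^2$ coefficient. Your additional remarks (reading $\odot$ as the summed element-wise product, flagging the squared weight, and the sanity check against Example~\ref{example: our framework}) are sensible clarifications but do not change the argument.
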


\begin{proof}
For two objects $a$ and $b$, their concatenated vectors are $\boldsymbol{\hat{a}}$ and $\boldsymbol{\hat{b}}$. 
The \textit{IP} between $\boldsymbol{\hat{a}}$ and $\boldsymbol{\hat{b}}$ can be computed by 
\begin{equation}
\begin{split}
    IP(\boldsymbol{\hat{a}},\boldsymbol{\hat{b}}))=\boldsymbol{\hat{a}} \odot \boldsymbol{\hat{b}}
    & =\sum_{i=0}^{m-1}\omega_{i}^2 \cdot IP(\phi_i(a^i),\phi_i(b^i))\quad,
\end{split}
\end{equation}
where $IP(\phi_i(a^i),\phi_i(b^i))$ indicates the similarity between $a$ and $b$ in the $i$-th modality.
\end{proof}

The weight learning process described above plays a crucial role in capturing the significance of different modalities for representing objects. By learning the relative weights, we can effectively incorporate information from multiple modalities into the similarity computation. The search process then benefits from a holistic view of object representations, considering the diverse user intentions captured by different modalities. In the face retrieval case (Example \ref{example: our framework}) and our experimental studies (e.g., Fig. \ref{fig: mitstates_case}), this more comprehensive representation of objects leads to more meaningful and precise search results.

\subsection{Generalization Analysis of Weight}
\label{subsec: weight analysis}

The weight-learning component in our approach eliminates the need for specific weights for each query input. This is achieved by {\em learning query-independent weights that are associated with the modalities themselves}, rather than the specific content within each modality. As a result, we can employ a fixed set of weights to compute the joint similarity between any query and object in the dataset.
Consider two extreme query cases on a dataset containing image and text modalities. In \underline{Case 1}, the text describes what is already present in the given image, while in \underline{Case 2}, the text describes something not depicted in the given image. In both scenarios, our system, {\name}, consistently embeds the image with the text semantics using Option 2 in Fig. \ref{fig: our_framework}(f), while also separately embedding the text semantics using a unimodal encoder. For any object $o$, {\name} computes the joint similarity between $o$ and both types of queries using the same weights (refer to the {\em Learned Weights} section in {\S \ref{subsec: query requirement}}). The similarity value is determined by the inner product (\textit{IP}) between the modalities, as stated in Lemma \ref{lemma: joint similarity}.
Indeed, capturing the differences between image and text contents can be effectively achieved using specific vectors rather than weights, allowing us to represent the unique characteristics of each modality while avoiding the impracticality of assigning individual weights to each object in large-scale scenarios. By employing weights to capture the importance of different modalities instead of contents, we achieve better generalization across various query workloads.

In {\name}, users have the option to use custom weights for specific purposes, such as giving more weight to the text modality. In this case, the learned weights can be replaced by user-defined weights, which would prioritize objects that are more similar to the emphasized modality. The evaluation of this option is provided in Tab. \ref{tab: query type user-defined weights}. Note that assigning proper weights manually can be challenging in practice. Based on our experiments (\textbf{\S~\ref{subsec: ablation}}), we observe that different weights significantly affect the recall rate of {\problem} (Fig. \ref{fig: ablation weight learning}).
\section{Indexing and Searching}
\label{sec: indexing and search algorithms}

To address the efficiency and scalability challenges associated with enumerating potential objects, {\name} adopts an approximate method that balances accuracy and efficiency. This involves constructing a fused index based on the similarity of concatenated vectors. Specifically, we utilize a proximity graph index \cite{graph_survey_vldb2021}, which is a sota method in the vector search domain\footnote{{Please refer to Appendix \ref{appendix: pg-based index} for detailed related work discussion about vector search}.}.
In the fused index, $G=(V,E)$, each vertex $v \in V$ represents an object $v \in \mathcal{S}$\footnote{We use the same symbol for an object and its corresponding vertex.}, and each edge $(v,u) \in E$ captures a closely related object pair $(v,u)$ via joint similarity. 
The index can reduce the search space and navigate us to a true object by visiting only a few objects in $\mathcal{S}$, leading to better efficiency.
We further improve indexing and search performance by re-assembling index components and optimizing the multi-vector computations, respectively. This ensures that our system remains efficient in handling large-scale datasets.

\subsection{Index Construction}
\label{subsec: index construction}
We present a general pipeline (Algorithm \hyperref[alg: index]{1}) for constructing fine-grained proximity graphs on CGraph\footnote{CGraph refers to the Directed Acyclic Graph framework \cite{CGraph}.}. The pipeline is composed of five flexible components (\ding{172}–\ding{176}). By decomposing any current proximity graph \cite{graph_survey_vldb2021} into these components, we can seamlessly integrate them into our pipeline\footnote{In our evaluations, we implemented some representative proximity graph algorithms, which are detailed in \textbf{\S \ref{subsec: ablation}}.}. 
Furthermore, to enhance the capabilities of our pipeline, we amalgamate components from several state-of-the-art algorithms in the context of concatenated vectors, culminating in the creation of a new indexing algorithm.

\setlength{\textfloatsep}{0pt}
\begin{algorithm}[t]
\label{alg: index}
  \caption{\textsc{Construct Fused Index}}
  \LinesNumbered
  \KwIn{Object set $\mathcal{S}$, maximum number of neighbors $\gamma$, maximum iterations $\varepsilon$}
  \KwOut{{Fused Index} ${G}=({V},{E})$ and seed vertex $g$}
  
  ${V} \gets \mathcal{S}$; ${E} \gets \emptyset $
  
  \ForAll(\tcc*[f]{\ding{172}}){$o \in {V}$}{
    $N(o) \gets$ $\gamma$ random objects {\Comment{\texttt{\textcolor{blue}{Neighbor set}}}}
  }
  \While({\Comment{\texttt{\textcolor{blue}{NNDescent}}}}){iterations $\leq$ $\varepsilon$}{
  \ForAll{$o$$\in$${V}$ and $v$$\in$$N(o)$ and $u$$\in$ $N(v) \setminus N(o)$}{
        $z\gets \arg\min_{z\in N(o)}IP(\boldsymbol{\hat{o}},\boldsymbol{\hat{z}})$
        
        \If{$IP(\boldsymbol{\hat{o}},\boldsymbol{\hat{u}})> IP(\boldsymbol{\hat{o}},\boldsymbol{\hat{z}})$}{
          $N(o)\gets N(o)\setminus \{z\} \cup \{u\}$
        }
  }
  }
  \ForAll({\tcc*[f]{\ding{173}}}){$o$ $\in$ ${V}$ and $v$ $\in$ $N(o)$}{
    $C(o) \gets N(o) \cup N(v)$ {\Comment{\texttt{\textcolor{blue}{Candidate neighbor}}}}
  }
  
  \ForAll({\tcc*[f]{\ding{174}}}){$o \in {V}$}{
    $v$$\gets$$\arg\max_{v\in C(o)}IP(\boldsymbol{\hat{o}},\boldsymbol{\hat{v}})$; $N(o)$$\gets$$N(o)\cup \{v\}$
    
    \While{$C(o)\neq \emptyset$ and $|N(o)|< \gamma$}{
     $v$$\gets$$\arg\max_{v\in C(o)}IP(\boldsymbol{\hat{o}},\boldsymbol{\hat{v}})$; $C(o)$$\gets$$C(o)$$\setminus$$\{v\}$
    
    \ForAll({\Comment{{\texttt{\textcolor{blue}{MRNG strategy\cite{NSG}}}}}}){$u\in N(o)$}{
      \If{$IP(\boldsymbol{\hat{o}},\boldsymbol{\hat{v}}) > IP(\boldsymbol{\hat{u}},\boldsymbol{\hat{v}})$}{
        $N(o) \gets N(o) \cup \{v\}$
      }
    }
  }
  }
  
  $g\gets$ nearest vertex to $\frac{1}{|V|}\sum_{o\in V}\boldsymbol{\hat{o}}$ {\tcc*[f]{\ding{175}}}
  
  Ensure connectivity by BFS from $g$ {\tcc*[f]{\ding{176}}}
  
  return ${G}=({V}, {E})$ and $g$ {\Comment{{\texttt{\textcolor{blue}{$E=\bigcup_{o\in V} N(o)$}}}}}
\end{algorithm}

\noindent\textbf{\ding{172} Initialization.} This component is responsible for generating the initial neighbors for each object in $\mathcal{S}$. We start with randomly selecting a set of objects as neighbors $N(o)$ for any given object $o \in \mathcal{S}$ (Lines 2-3). The neighbor set $N(o)$ is then updated iteratively by visiting the neighbors $N(v)$ of each object $v$ in $N(o)$ (Lines 4-8).
During the update process, for each object $u$ in $N(v)$ where $u \notin N(o)$, we find the object $z$ that minimizes $IP(\boldsymbol{\hat{o}},\boldsymbol{\hat{z}})$.
If $IP(\boldsymbol{\hat{o}},\boldsymbol{\hat{u}}) \geq IP(\boldsymbol{\hat{o}},\boldsymbol{\hat{z}})$, we replace $z$ with $u$ in $N(o)$. This iterative process is performed for all objects in $\mathcal{S}$ to create a high-quality initial graph.
The evaluation conducted indicates that only three iterations are sufficient to achieve a graph quality of over 90\% (for detailed evaluation, refer to {Appendix \ref{appendix: index parameters}}).

\noindent\textbf{\ding{173} Candidate Acquisition.} This component obtains some candidate neighbors $C(o)$ for each vertex $o$ in $V$ from the initial graph. These candidates will serve as the potential final neighbors.
For each vertex $o$ in $V$, we get $C(o)$ by combining $o$'s initial neighbors and their neighbors (Lines 9-10).

\noindent\textbf{\ding{174} Neighbor Selection.} In this component, we apply a filtering process to the candidate neighbors $C(o)$ and carefully select the final neighbors $N(o)$ for each vertex $o$ in $V$. The primary objective is to diversify the distribution of neighbors, which is essential for ensuring search efficiency.
To achieve this, we employ the MRNG strategy \cite{NSG} (Lines 11-17). For each vertex $o$ in $V$, we first clear its set of final neighbors $N(o)$, then extract the vertex $v$ that is closest to $o$ from the candidate set $C(o)$ and include it in $N(o)$. Subsequently, we iteratively select vertices from $C(o)$ that are closest to $o$ and satisfy the condition $IP(\boldsymbol{\hat{o}},\boldsymbol{\hat{v}}) > IP(\boldsymbol{\hat{u}},\boldsymbol{\hat{v}})$ for all $u$ in the current set $N(o)$. If this condition is met for a vertex $v$, we add it to $N(o)$. This selection process ensures a diversified distribution of neighbors (as demonstrated in \textbf{Lemma \ref{lemma: neighbor selection}}).
Notably, this approach has been widely acknowledged in the literature \cite{HNSW, NSG, DPG}, and our experiments further corroborate its effectiveness in the context of {\problem} (\textbf{\S \ref{subsec: efficiency evaluation}}).
The parameter $\gamma$ is carefully evaluated, and additional details regarding its assessment can be found in Appendix \ref{appendix: index parameters}.

\begin{myLemma}
\label{lemma: neighbor selection}
For any two neighbors $u$ and $v$ in $N(o)$, the angle $\angle uov$ (denoted by $\theta(u,o,v)$) is at least $60^{\circ}$.
\end{myLemma}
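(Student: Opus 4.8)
The plan is to reduce the inner-product-based acceptance rule of component \ding{174} to an equivalent statement about Euclidean distances, and then invoke an elementary fact: the largest angle of any triangle is at least $60^{\circ}$. The enabling observation is that \emph{every} concatenated vector has the \emph{same} norm. Since each per-modality vector $\phi_i(\cdot)$ is normalized (as assumed in \S\ref{subsec: baselines}), we have $\|\boldsymbol{\hat{o}}\|^2 = \sum_{i=0}^{m-1}\omega_i^2\,\|\phi_i(o^i)\|^2 = \sum_{i=0}^{m-1}\omega_i^2 =: r^2$ for every object $o$, a quantity that depends only on the shared weights $\{\omega_i\}$ and not on $o$. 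Thus all vertices of the fused index live on a common sphere of radius $r$.

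Working on this sphere, I would use the identity $\|\boldsymbol{\hat{x}}-\boldsymbol{\hat{y}}\|^2 = 2r^2 - 2\,IP(\boldsymbol{\hat{x}},\boldsymbol{\hat{y}})$, which holds whenever $\|\boldsymbol{\hat{x}}\| = \|\boldsymbol{\hat{y}}\| = r$. Under this identity a larger inner product corresponds \emph{exactly} to a smaller distance, so every $IP$-comparison made by the algorithm translates verbatim into a distance comparison. Two facts then follow. First, because the inner loop of \ding{174} repeatedly extracts the candidate maximizing $IP(\boldsymbol{\hat{o}},\cdot)$, any neighbor $u$ already in $N(o)$ when $v$ is tested satisfies $IP(\boldsymbol{\hat{o}},\boldsymbol{\hat{u}}) \geq IP(\boldsymbol{\hat{o}},\boldsymbol{\hat{v}})$, i.e. $\mathrm{dist}(o,u) \leq \mathrm{dist}(o,v)$. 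Second, the MRNG acceptance test $IP(\boldsymbol{\hat{o}},\boldsymbol{\hat{v}}) > IP(\boldsymbol{\hat{u}},\boldsymbol{\hat{v}})$ becomes $\mathrm{dist}(o,v) < \mathrm{dist}(u,v)$.

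Combining the two gives $\mathrm{dist}(u,v) > \mathrm{dist}(o,v) \geq \mathrm{dist}(o,u)$, so in the triangle with vertices $o,u,v$ the edge $uv$ is strictly the longest side. The angle opposite the longest side is the largest angle of the triangle, and since the three angles sum to $180^{\circ}$, the largest is at least $60^{\circ}$. Hence $\theta(u,o,v) = \angle uov \geq 60^{\circ}$, as claimed. (The only degenerate configuration consistent with these inequalities places $o$ strictly between $u$ and $v$, giving $\angle uov = 180^{\circ}$, which also satisfies the bound.)

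The hard part will be the equal-norm reduction: the entire argument rests on the concatenated vectors sharing a common sphere, which is precisely what makes the $IP$-to-distance translation exact and monotone. If the per-modality vectors were not normalized, or if the weighting were applied so that $\|\boldsymbol{\hat{o}}\|$ varied across objects, then an inner-product comparison would no longer coincide with a distance comparison and the clean triangle argument would break. I would therefore state the normalization assumption explicitly at the outset and verify that $r$ is object-independent; the remaining steps — the distance identity, the processing-order inequality, and the triangle-angle fact — are routine once this reduction is secured.
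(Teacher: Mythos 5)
Your proof is correct, and at its core it uses the same ingredients as the paper's own argument: the nearest-first insertion order of component \ding{174}, the MRNG acceptance test, and the side--angle relationship in the triangle $\triangle uov$. The difference is organizational and one of rigor. The paper argues by contradiction: it assumes $\theta(u,o,v) < 60^{\circ}$, deduces angle inequalities, translates them into \textit{IP} comparisons, and shows the algorithm could not have admitted both $u$ and $v$ (with a Case 1/Case 2 split on which vertex is closer to $o$). You argue directly: taking the insertion order as given, you extract the two inequalities $\mathrm{dist}(o,u) \leq \mathrm{dist}(o,v) < \mathrm{dist}(u,v)$ and conclude that $uv$ is strictly the longest side, so the opposite angle $\angle uov$ is at least $60^{\circ}$; this eliminates the case analysis and reads more cleanly. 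More substantively, you supply a justification the paper only asserts: the paper's step ``smaller \textit{IP} values imply longer sides'' is valid only because all concatenated vectors share a common norm $r^2 = \sum_i \omega_i^2$ (per-modality vectors being normalized and the weights being object-independent), which makes the \textit{IP}-to-distance translation exact and monotone via $\|\boldsymbol{\hat{x}}-\boldsymbol{\hat{y}}\|^2 = 2r^2 - 2\,IP(\boldsymbol{\hat{x}},\boldsymbol{\hat{y}})$. Your version is also slightly more careful than the paper's Eq.~\ref{equ: distance_relationship}, which implicitly takes $r=1$ even though the learned weights need not satisfy $\sum_i \omega_i^2 = 1$. In short: same mathematical route, but your write-up closes a gap the paper leaves implicit.
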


\begin{proof}
\textit{(Sketch.)} 
Assuming $\theta(u,o,v) < 60^\circ$ for two neighbors $u$ and $v$ in $N(o)$, we find that the sum of the angles $\theta(o,v,u)$ and $\theta(o,u,v)$ exceeds $120^\circ$ in the triangle $\triangle uov$. Here, the inner product (\textit{IP}) of two vertices is used to measure the side length between them, and smaller \textit{IP} values imply longer sides. By comparing the \textit{IP}, either $\theta(o,v,u) > \theta(o,u,v)$ (i.e., $\theta(o,v,u) > 60^\circ$) or $\theta(o,u,v) > \theta(o,v,u)$ (i.e., $\theta(o,u,v) > 60^\circ$). \underline{Case 1:} if $\theta(o,v,u) > \theta(o,u,v)$, it implies $IP(\boldsymbol{\hat{o}},\boldsymbol{\hat{v}}) > IP(\boldsymbol{\hat{o}},\boldsymbol{\hat{u}})$, resulting in vertex $v$ being added to $N(o)$ before $u$. Consider the assumption, we have $IP(\boldsymbol{\hat{u}},\boldsymbol{\hat{v}}) > IP(\boldsymbol{\hat{o}},\boldsymbol{\hat{u}})$. Therefore, vertex $u$ cannot be added to $N(o)$. \underline{Case 2:} If $\theta(o,u,v) > \theta(o,v,u)$, we can swap the positions of $u$ and $v$ in $\triangle uov$ and arrive at the same conclusion as in Case 1. We put the detailed proof in Appendix \ref{appendix: proof of lemma 2}.
\end{proof}

\noindent\textbf{\ding{175} Seed Preprocessing.} We select a fixed seed as a start vertex for searching of different queries. We first compute the centroid of all vertices in $V$ with their concatenated vectors. We then compute the \textit{IP} between each vertex and centroid to find the vertex closest to the centroid as the seed (Line 18).

\noindent\textbf{\ding{176} Connectivity.} We perform a breadth-first search (BFS) from the seed. In case the BFS cannot reach all vertices in $V$ from the seed, a connection is established between a visited vertex and an unvisited vertex. This connection bridges the gap between previously unexplored regions of the graph and the BFS is continued until all vertices are reachable from the seed (Line 19), thereby enhancing the search accuracy.

\subsection{Joint Search}
\label{subsec: joint search}
Upon receiving a multimodal query input $q$, {\name} conducts a joint search across all modalities using the fused index. Initially, $q$ is transformed into $t$ query vectors (Fig. \ref{fig: our_framework}(f)) and concatenated with a set of weights to be a virtual query point (Fig. \ref{fig: our_framework}(g)). When $t=m$, $q$ is mapped into the same vector space as the objects in $\mathcal{S}$, enabling the computation of the inner product (\textit{IP}) between $q$ and the objects in $\mathcal{S}$ based on Lemma \ref{lemma: joint similarity}. However, if $t\neq m$, the concatenated vectors compute the \textit{IP} by setting $\omega_i=0$ for $t\leq i \leq m-1$.
Next, the search process begins at the seed and employs greedy routing within the fused index to obtain approximate top-$k$ results (Fig. \ref{fig: our_framework}(h)).

\begin{algorithm}[t]
\label{alg: search}
  \caption{\textsc{Joint Search}}
  \LinesNumbered
  \KwIn{Fused index $G=(V,E)$, multimodal query $q$, seed vertex $g$, number of results $k$, result set size $l$ ($>k$)}
  \KwOut{approximate top-$k$ results of $q$}
  
  ${R}$$\gets$$\{g\}$; ${H}$$\gets$$\emptyset$
  
  ${C}$ $\gets$ $l-1$ random vertices
  
  ${R}$ $\gets$ $R\cup C$ {\Comment{{\texttt{\textcolor{blue}{sorted by \textit{IP} to $q$}}}}}
  
  \While({{\Comment{{\texttt{\textcolor{blue}{unvisited vertices}}}}}}){($R$ $\setminus$ $H$) $\neq$ $\emptyset$}{
  $v\gets$ unvisited nearest vertex to $q$ in $R$ {\Comment{{\texttt{\textcolor{blue}{$v\notin H$}}}}}
  
  $H$ $\gets$ $H$ $\cup$ $\{v\}$ {\Comment{{\texttt{\textcolor{blue}{mark $v$ as visited}}}}}
  
  \ForAll{$u \in {N(v)}$ and $u\notin H$}{
  
    $z$ $\gets$ $\arg \min_{z\in R}IP(\boldsymbol{\hat{q}},\boldsymbol{\hat{z}})$
    
    \If{$IP(\boldsymbol{\hat{q}},\boldsymbol{\hat{z}})<IP(\boldsymbol{\hat{q}},\boldsymbol{\hat{u}})$}{
      $R\gets R\setminus \{z\} \cup \{u\}$ {\Comment{{\texttt{\textcolor{blue}{update $R$}}}}}
    }
  }
  }
  
  return top-$k$ nearest vertices in $R$
\end{algorithm}

Algorithm \hyperref[alg: search]{2} presents the joint search procedure of {\name}. During the greedy routing, two key data structures, $R$ and $H$, are utilized (Line 1). $R$ represents the result set with a fixed size of $l$ and is initialized with the seed vertex $g$ and $l-1$ randomly chosen vertices. On the other hand, $H$ is a set that keeps track of visited vertices, effectively avoiding redundant vector computations. 
The iterative greedy routing (Lines 4-10) selects unvisited vertices from $R$ closest to the query point $q$. It calculates the \textit{IP} between each neighbor of $v$ and $q$, updating $R$ accordingly. The process continues until all vertices in $R$ are visited, yielding the top-$k$ nearest vertices.
In practice, users have the flexibility to balance accuracy and efficiency by tuning the parameter $l$. The value of $l$ determines the size of the result set $R$ and influences the trade-off between accuracy and efficiency.
We conduct evaluations of $l$ in {Appendix \ref{appendix: search parameters}}. Given the number of iterations $\eta$, we have the following lemma to ensure the joint similarity is non-decreasing during searching:
\begin{myLemma}
The sum of the \textit{IP} between the query $q$ and the vertices in $R$ is a monotonically non-decreasing function of $\eta$, denoted by $f(\eta)$.
\end{myLemma}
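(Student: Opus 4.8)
The plan is to analyze how the result set $R$ evolves across iterations of the greedy routing loop (Lines 4--10 of Algorithm 2) and show that the quantity $f(\eta) = \sum_{z \in R}IP(\boldsymbol{\hat{q}},\boldsymbol{\hat{z}})$ never decreases from one iteration to the next. The key observation is that $R$ has a \emph{fixed} size $l$ throughout the search, so the only way its contents change is through the swap operation at Line 9: a vertex $z$ is removed and a neighbor $u$ is inserted. The entire burden of the proof therefore falls on controlling the effect of each such swap on the sum.

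First I would fix an arbitrary iteration of the outer while-loop and track every modification to $R$ during the inner for-loop (Lines 7--10). Each candidate neighbor $u$ of the current vertex $v$ triggers the selection of $z = \arg\min_{z\in R}IP(\boldsymbol{\hat{q}},\boldsymbol{\hat{z}})$, the current \emph{least} similar member of $R$. The update $R \gets R\setminus\{z\}\cup\{u\}$ is performed only under the guard $IP(\boldsymbol{\hat{q}},\boldsymbol{\hat{z}}) < IP(\boldsymbol{\hat{q}},\boldsymbol{\hat{u}})$ at Line 8. Hence whenever a swap actually occurs, the inserted element $u$ has strictly larger inner product with $q$ than the removed element $z$, so the change in $f$ is exactly $IP(\boldsymbol{\hat{q}},\boldsymbol{\hat{u}}) - IP(\boldsymbol{\hat{q}},\boldsymbol{\hat{z}}) > 0$; if the guard fails, $R$ is unchanged and $f$ stays constant. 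Thus every individual swap weakly increases $f$.

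Next I would argue monotonicity at the granularity the lemma actually asks for, namely as a function of the outer iteration count $\eta$. Between consecutive values of $\eta$, the set $R$ undergoes a finite sequence of the swaps just analyzed (one per qualifying neighbor $u \in N(v)$), and since each link in this chain weakly increases $f$, their composition does as well. Formally I would write $f(\eta+1) - f(\eta) = \sum_{\text{swaps}}\bigl(IP(\boldsymbol{\hat{q}},\boldsymbol{\hat{u}}) - IP(\boldsymbol{\hat{q}},\boldsymbol{\hat{z}})\bigr) \ge 0$, where the sum runs over the swaps executed during iteration $\eta$ and each summand is nonnegative by the Line~8 guard. A clean induction on $\eta$ then establishes $f(\eta+1) \ge f(\eta)$ for all $\eta$, which is precisely the claimed non-decreasing property.

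The main obstacle, I expect, is a subtlety rather than a hard inequality: one must confirm that the $\arg\min$ defining $z$ is always recomputed against the \emph{current} $R$ after each prior swap within the same inner loop, so that the telescoping chain of swaps is well-defined and each guard genuinely references the configuration it acts on. A second point needing care is the initialization (Lines 1--3), where $R$ is seeded with $g$ and $l-1$ random vertices \emph{before} any iteration; I would treat $f(0)$ as the value on this initial $R$ so that the monotonicity statement has a well-defined base case, and note that the random seeding does not affect the argument since it precedes the loop. Neither point involves computation, but both must be stated explicitly for the telescoping sum to be rigorous.
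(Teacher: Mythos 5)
Your proof is correct and takes essentially the same approach as the paper: both arguments rest on the observation that the guard at Line 8 ensures every update replaces the least-similar element $z = \arg\min_{z\in R}IP(\boldsymbol{\hat{q}},\boldsymbol{\hat{z}})$ with a strictly more-similar neighbor $u$, so $f$ can only remain constant or increase. If anything, your version is more careful than the paper's, which implicitly analyzes a single swap per iteration, whereas you explicitly aggregate the multiple swaps that can occur within one iteration of the outer loop via a telescoping sum and fix the base case $f(0)$ at initialization.
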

\begin{proof}
In the joint search process of {\name}, let's consider any two consecutive iterations $\eta=i$ and $\eta=j$ ($i<j$). We denote $R_i$ and $R_j$ as the $R$ sets after the $i$-th and $j$-th iterations, respectively. Additionally, let $z$ be the vertex farthest from $q$ in $R_i$.
During the $j$-th iteration, we encounter two cases for any neighbor $u$ of the current visited vertex.
\underline{Case 1:} If $IP(\boldsymbol{\hat{q}},\boldsymbol{\hat{z}})\geq IP(\boldsymbol{\hat{q}},\boldsymbol{\hat{u}})$, $R_i$ remains unchanged, resulting in $f(i)=f(j)$.
\underline{Case 2:} If $IP(\boldsymbol{\hat{q}},\boldsymbol{\hat{z}})<IP(\boldsymbol{\hat{q}},\boldsymbol{\hat{u}})$, $z$ is replaced by $u$ in $R_j$, leading to $f(j)=f(i)-IP(\boldsymbol{\hat{q}},\boldsymbol{\hat{z}})+IP(\boldsymbol{\hat{q}},\boldsymbol{\hat{u}})$, which implies $f(i)<f(j)$.
Thus, we can deduce that $i<j$ implies $f(i) \leq f(j)$, demonstrating that $f(\eta)$ is a monotonically non-decreasing function of $\eta$.
\end{proof}

\noindent\textbf{Optimizing Multi-vector Computation.} In our approach, the joint search, particularly the multi-vector computation, constitutes the most time-consuming part. For each object pair, we must compute $m$ similarities between high-dimensional vectors. It is well-documented in the literature \cite{SONG, adsampling} that vector computation can consume up to 90\% of the total search time in many real-world datasets.
When processing an object $u$ (Line 7 in Algorithm \hyperref[alg: search]{2}), we need to compute the inner product of $\boldsymbol{\hat{u}}$ and $\boldsymbol{\hat{q}}$. The resulting inner product value is then used for the similarity comparison (Line 9 in Algorithm 2) with the most dissimilar object $z$ in $R$. If $u$ is more similar to $q$ than $z$, we update $R$ with $u$ based on this inner product value. However, if $u$ is less similar to $q$ than $z$, we can simply discard $u$. In this case, there is no need to compute the exact value of $IP(\boldsymbol{\hat{q}},\boldsymbol{\hat{u}})$. Since the vectors are normalized, we have 
\begin{equation}
  \label{equ: distance_relationship}
  IP(\boldsymbol{\hat{q}},\boldsymbol{\hat{u}})=1 - \frac{1}{2} \cdot ||\boldsymbol{\hat{q}},\boldsymbol{\hat{u}}||^2 \quad,
\end{equation}
where $||\boldsymbol{\hat{q}},\boldsymbol{\hat{u}}||$ is the Euclidean distance between $\boldsymbol{\hat{q}}$ and $\boldsymbol{\hat{u}}$. As the $||\boldsymbol{\hat{q}},\boldsymbol{\hat{u}}||^2$ increases, $IP(\boldsymbol{\hat{q}},\boldsymbol{\hat{u}})$ decreases. Therefore, we scan the vectors of $\boldsymbol{\hat{u}}$ incrementally and compute the partial square Euclidean distance based on the scanned $x$ vectors as
\begin{equation}
  \label{equ: partial_distance}
  \widetilde{||\boldsymbol{\hat{q}},\boldsymbol{\hat{u}}||}^2 = \sum_{i=0}^{x-1}\omega_{i}^2 \cdot ||\phi_i(u^i),\phi_i(q^i)||^2 \quad,
\end{equation}
where $||\phi_i(u^i),\phi_i(q^i)||$ is the Euclidean distance between the vectors in the $i$-th modality. Then, we can compute the partial \textit{IP} $\widetilde{IP(\boldsymbol{\hat{q}},\boldsymbol{\hat{u}})}$ by applying Eq. \ref{equ: partial_distance} to Eq. \ref{equ: distance_relationship}. We check whether $IP(\boldsymbol{\hat{q}},\boldsymbol{\hat{z}})\geq \widetilde{IP(\boldsymbol{\hat{q}},\boldsymbol{\hat{u}})}$, if it holds, we can discard $u$ immediately, otherwise, we continue to consider the next vector until scanning all vectors (i.e., $x=m$) or $\widetilde{IP(\boldsymbol{\hat{q}},\boldsymbol{\hat{u}})}$ is not more than $IP(\boldsymbol{\hat{q}},\boldsymbol{\hat{z}})$. As we will demonstrate in our experiment, this optimization significantly improves the search efficiency without incurring any accuracy loss (Lemma \ref{lemma: multi-vector computation}).

\begin{myLemma}
\label{lemma: multi-vector computation}
By utilizing the multi-vector computation optimization, we can safely discard the object $u$ that satisfies $IP(\boldsymbol{\hat{q}},\boldsymbol{\hat{z}})\geq IP(\boldsymbol{\hat{q}},\boldsymbol{\hat{u}})$ by the partial \textit{IP} $\widetilde{IP(\boldsymbol{\hat{q}},\boldsymbol{\hat{u}})}$. Furthermore, when $IP(\boldsymbol{\hat{q}},\boldsymbol{\hat{z}})<IP(\boldsymbol{\hat{q}},\boldsymbol{\hat{u}})$, we can obtain the exact value of $IP(\boldsymbol{\hat{q}},\boldsymbol{\hat{u}})$.
\end{myLemma}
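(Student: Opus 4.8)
The plan is to reduce both assertions of the lemma to a single bounding property: that the partial inner product $\widetilde{IP(\boldsymbol{\hat{q}},\boldsymbol{\hat{u}})}$ computed after scanning the first $x$ modalities always \emph{over-estimates} the true value $IP(\boldsymbol{\hat{q}},\boldsymbol{\hat{u}})$, and coincides with it exactly once all $m$ modalities have been scanned (i.e.\ $x=m$). Since the optimization in Eq.~\ref{equ: partial_distance} is phrased in terms of squared Euclidean distance, I would first argue the monotonicity at the level of distances and then transfer it to inner products via the order-reversing identity in Eq.~\ref{equ: distance_relationship}.

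The core estimate is immediate from non-negativity. Each summand $\omega_i^2 \cdot \|\phi_i(u^i),\phi_i(q^i)\|^2$ appearing in Eq.~\ref{equ: partial_distance} is a squared weight times a squared Euclidean distance, hence $\geq 0$. Therefore $\widetilde{\|\boldsymbol{\hat{q}},\boldsymbol{\hat{u}}\|}^2$ is non-decreasing in $x$ and never exceeds the full distance $\|\boldsymbol{\hat{q}},\boldsymbol{\hat{u}}\|^2 = \sum_{i=0}^{m-1}\omega_i^2\cdot\|\phi_i(u^i),\phi_i(q^i)\|^2$, with equality precisely at $x=m$. Feeding this through Eq.~\ref{equ: distance_relationship}, where $IP = 1 - \tfrac12\|\cdot\|^2$ is order-reversing in the distance, yields $\widetilde{IP(\boldsymbol{\hat{q}},\boldsymbol{\hat{u}})} \geq IP(\boldsymbol{\hat{q}},\boldsymbol{\hat{u}})$ for every $x$, again with equality at $x=m$. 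This is the only inequality the proof needs, and it is consistent with the modality-wise decomposition guaranteed by Lemma~\ref{lemma: joint similarity}.

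With this bound in hand, both claims follow by chaining inequalities. For the discard correctness, suppose the early-termination test $IP(\boldsymbol{\hat{q}},\boldsymbol{\hat{z}}) \geq \widetilde{IP(\boldsymbol{\hat{q}},\boldsymbol{\hat{u}})}$ fires at some scan step $x$; combining with the over-estimation bound gives $IP(\boldsymbol{\hat{q}},\boldsymbol{\hat{z}}) \geq \widetilde{IP(\boldsymbol{\hat{q}},\boldsymbol{\hat{u}})} \geq IP(\boldsymbol{\hat{q}},\boldsymbol{\hat{u}})$, so $u$ is genuinely no closer to $q$ than $z$ and may be dropped without affecting $R$ — no accuracy loss. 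Conversely, for the exactness claim, assume $IP(\boldsymbol{\hat{q}},\boldsymbol{\hat{z}}) < IP(\boldsymbol{\hat{q}},\boldsymbol{\hat{u}})$. Then for every $x$ we have $\widetilde{IP(\boldsymbol{\hat{q}},\boldsymbol{\hat{u}})} \geq IP(\boldsymbol{\hat{q}},\boldsymbol{\hat{u}}) > IP(\boldsymbol{\hat{q}},\boldsymbol{\hat{z}})$, so the termination test never fires; the scan proceeds to $x=m$, at which point $\widetilde{IP(\boldsymbol{\hat{q}},\boldsymbol{\hat{u}})} = IP(\boldsymbol{\hat{q}},\boldsymbol{\hat{u}})$ is returned exactly.

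The argument is short, so the only genuine pitfall is bookkeeping rather than depth: one must track the sign reversal when passing from the growing partial distance to the shrinking partial inner product, and verify that the comparison in the algorithm is made against the \emph{upper} bound $\widetilde{IP}$ rather than a lower bound. A minor edge case worth a sentence is $\omega_i = 0$ (e.g.\ an absent modality with $t \neq m$), but the corresponding term simply vanishes while remaining non-negative, so the monotonicity and the bound are unaffected.
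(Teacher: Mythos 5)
Your proof is correct and follows essentially the same route as the paper's: both rely on the non-negativity of each weighted per-modality squared distance to show the partial distance grows monotonically in $x$, transfer this through the order-reversing identity of Eq.~\ref{equ: distance_relationship} to conclude $\widetilde{IP(\boldsymbol{\hat{q}},\boldsymbol{\hat{u}})} \geq IP(\boldsymbol{\hat{q}},\boldsymbol{\hat{u}})$ with equality at $x=m$, and then chain inequalities for the discard case and observe the test can never fire in the exactness case. Your phrasing of the key fact as an explicit over-estimation bound (rather than the paper's ``once the test fires it remains true for larger $x$'') is a cosmetic difference only, and your extra remark on $\omega_i=0$ is a harmless bonus.
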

\begin{proof}
According to Eq. \ref {equ: distance_relationship}, a larger value of $||\boldsymbol{\hat{q}}, \boldsymbol{\hat{u}} ||^2$ corresponds to a smaller value of $IP(\boldsymbol{\hat{q}}, \boldsymbol{\hat{u}})$. As we incrementally scan the vectors, $\widetilde {||\boldsymbol{\hat{q}}, \boldsymbol{\hat{u}}||}^2$ gradually increases while $\widetilde{IP (\boldsymbol{\hat{q}}, \boldsymbol{\hat{u}})}$ gradually decreases. Let $x$ be the number of scanned vectors. Once $IP (\boldsymbol {\hat{q}}, \boldsymbol {\hat{z}}) \geq \widetilde {IP (\boldsymbol{\hat{q}}, \boldsymbol{\hat{u}})}$ is true for the first time, it remains true for larger values of $x$. Therefore, we can safely terminate the multi-vector computation when $IP(\boldsymbol{\hat{q}},\boldsymbol{\hat{z}})\geq \widetilde{IP(\boldsymbol{\hat{q}},\boldsymbol{\hat{u}})}$. In the case where $IP(\boldsymbol{\hat{q}},\boldsymbol{\hat{z}})<IP(\boldsymbol{\hat{q}},\boldsymbol{\hat{u}})$, we have $IP(\boldsymbol{\hat{q}},\boldsymbol{\hat{z}})<\widetilde{IP(\boldsymbol{\hat{q}},\boldsymbol{\hat{u}})}$ for any value of $x$. Hence, in this case,  we scan all $m$ vectors and obtain the exact value of $IP(\boldsymbol{\hat{q}},\boldsymbol{\hat{u}})$.
\end{proof}
\section{Experiments}
\label{sec: experiments}
We thoroughly evaluate {\name} across six key aspects: (1) accuracy, (2) case study, (3) efficiency, (4) scalability, (5) query workloads, and (6) ablation studies. Kindly refer to our GitHub repository: https://github.com/ZJU-DAILY/MUST for our source code, datasets, and additional evaluations.

\setlength{\textfloatsep}{1.55\baselineskip plus 0.2\baselineskip minus 0.4\baselineskip}
\begin{table}[!tb]
  \centering
  \setlength{\abovecaptionskip}{0.05cm}
  \setstretch{0.8}
  \fontsize{6.5pt}{3.3mm}\selectfont
  \caption{Dataset statistics ($\star$ marks the target modality).}
  \label{tab: Dataset}
  \setlength{\tabcolsep}{.011\linewidth}{
  \begin{tabular}{|l|l|l|l|l|l|}
    \hline
    \textbf{Dataset} & \textbf{\# Modality} & \textbf{\# Object} & \textbf{\# Query} & \textbf{Type} & \textbf{Source}\\
    \hline
    CelebA \cite{celeba} & 2 & 191,549 & 34,326 & Image$^{\star}$,Text & real-world \\
    \hline
    MIT-States \cite{mit-states} & 2 & 53,743 & 72,732 & Image$^{\star}$,Text & real-world \\
    \hline
    Shopping \cite{shopping} & 2 & 96,009 & 47,658 & Image$^{\star}$,Text & real-world \\
    \hline
    MS-COCO \cite{neculai2022probabilistic} & 3 & 19,711 & 1237 & Image$^{\star}$ $\times$2,Text & real-world \\
    \hline
    CelebA+ \cite{celeba} & 4 & 191,549 & 34,326 & Image$^{\star}$ $\times$3,Text & real-world \\
    \hline
    ImageText1M \cite{sift1m} & 2 & 1,000,000 & 10,000 & Image$^{\star}$,Text & semi-synthetic \\
    \hline
    AudioText1M \cite{msong1m} & 2 & 992,272 & 200 & Audio$^{\star}$,Text & semi-synthetic \\
    \hline
    VideoText1M \cite{uqv1m} & 2 & 1,000,000 & 10,000 & Video$^{\star}$,Text & semi-synthetic \\
    \hline
    ImageText16M \cite{bigann} & 2 & 16,000,000 & 10,000 & Image$^{\star}$,Text & semi-synthetic \\
    \hline
  \end{tabular}
  }\vspace{-0.5cm}
\end{table}

\vspace{-0.2cm}
\subsection{Experimental Setting}
\noindent\textbf{Datasets.} We use nine datasets obtained from public sources, each with varying modalities and cardinalities, as shown in Tab. \ref{tab: Dataset}. Unless specified otherwise, the queries consist of the same number of modalities as the objects in each dataset (i.e., $t=m$). For more details, kindly refer to {Appendix \ref{appendix: datasets}}.

\noindent\textbf{Compared Methods.} We compare our proposed {\name} with two baselines: Multi-streamed Retrieval (abbr. {\bi}) and Joint Embedding (abbr. {\bii}). To ensure a fair comparison, we use the same encoders and proximity graph index in all competitors.

\noindent\textbf{Metrics.} We measure search accuracy for a batch of queries by mean recall rate ({\em $Recall@k(k^{\prime})$}, Eq. \ref{equ: recall}) and mean similarity measure error (\textit{SME}, Eq. \ref{equ: sme}). We use queries per second ($QPS$) to measure search efficiency. $QPS$ is the number of queries ($\#q$) divided by the total response time ($\tau$), i.e., $\#q/\tau$.

\begin{table}[!tb]
  \centering
  \setlength{\abovecaptionskip}{0.05cm}
  \setlength{\belowcaptionskip}{0.1cm}
  \setstretch{0.8}
  \fontsize{6.5pt}{3.3mm}\selectfont
  \caption{Search accuracy on MIT-States.}
  \label{tab: accuracy mitstates}
  \setlength{\tabcolsep}{.002\linewidth}{
  \begin{tabular}{|c|l|l|l|l|l|}
    \hline
    \textbf{Framework} & \textbf{Encoder} & \textbf{\textit{Recall}@1(1)} & \textbf{\textit{Recall}@5(1)} & \textbf{\textit{Recall}@10(1)} & \textbf{\textit{SME}} \\
    \hline
    \multirow{2}*{\boldsymbol{\bii}} & TIRG & 0.1181 & 0.3027 & 0.4175 & 0.1574 \\
    \cline{2-6}
    ~ & CLIP & 0.2236 & 0.4979 & 0.6187 & 0.1382 \\
    \hline
    \multirow{8}*{\boldsymbol{\bi}} & ResNet17+LSTM & 0.3998 & 0.6336 & 0.7106 & 0.1222 \\
    \cline{2-6}
    ~ & ResNet50+LSTM & 0.5401 & 0.7104 & 0.7639 & 0.1012 \\
    \cline{2-6}
    ~ & ResNet17+Transformer & 0.2435 & 0.4110 & 0.4931 & 0.1381 \\
    \cline{2-6}
    ~ & ResNet50+Transformer & 0.3112 & 0.4475 & 0.5142 & 0.1404 \\
    \cline{2-6}
    ~ & TIRG+LSTM & 0.3768 & 0.6574 & 0.7691 & 0.1283 \\
    \cline{2-6}
    ~ & TIRG+Transformer & 0.2830 & 0.4918 & 0.5834 & 0.1395 \\
    \cline{2-6}
    ~ & CLIP+LSTM & 0.4911 & 0.7619 & 0.8436 & 0.1108 \\
    \cline{2-6}
    ~ & CLIP+Transformer & 0.3707 & 0.5912 & 0.6751 & 0.1285 \\
    \hline
    \multirow{8}*{\tabincell{c}{\boldsymbol{\name}\\ \textbf{(ours)}}} & ResNet17+LSTM & 0.5275 & 0.7897 & 0.8780 & 0.0915 \\
    \cline{2-6}
    ~ & ResNet50+LSTM & \textbf{0.6655($\uparrow$23.2\%)} & \textbf{0.8558($\uparrow$12.3\%)} & \textbf{0.9127($\uparrow$8.2\%)} & \textbf{0.0738} \\
    \cline{2-6}
    ~ & ResNet17+Transformer & 0.3325 & 0.4828 & 0.5548 & 0.1272 \\
    \cline{2-6}
    ~ & ResNet50+Transformer & 0.3743 & 0.4866 & 0.5367 & 0.1344 \\
    \cline{2-6}
    ~ & TIRG+LSTM & 0.4202 & 0.7012 & 0.8137 & 0.1184 \\
    \cline{2-6}
    ~ & TIRG+Transformer & 0.3131 & 0.4800 & 0.5543 & 0.1333 \\
    \cline{2-6}
    ~ & CLIP+LSTM & 0.5376 & 0.7859 & 0.8678 & 0.1006 \\
    \cline{2-6}
    ~ & CLIP+Transformer & 0.4190 & 0.5262 & 0.5731 & 0.1229 \\
    \hline
  \end{tabular}
  }
\end{table}

\begin{table}[!tb]
  \vspace{-0.1cm}
  \centering
  \setlength{\abovecaptionskip}{0.05cm}
  \setlength{\belowcaptionskip}{-0.1cm}
  \setstretch{0.8}
  \fontsize{6.5pt}{3.3mm}\selectfont
  \caption{Search accuracy on CelebA.}
  \label{tab: accuracy celeba}
  \setlength{\tabcolsep}{.004\linewidth}{
  \begin{tabular}{|c|l|l|l|l|l|}
    \hline
    \textbf{Framework} & \textbf{Encoder} & \textbf{\textit{Recall}@1(1)} & \textbf{\textit{Recall}@5(1)} & \textbf{\textit{Recall}@10(1)} & \textbf{\textit{SME}} \\
    \hline
    \multirow{2}*{\boldsymbol{\bii}} & TIRG & 0.2725 & 0.5258 & 0.6220 & 0.1896 \\
    \cline{2-6}
    ~ & CLIP & 0.3644 & 0.7006 & 0.7789 & 0.1453 \\
    \hline
    \multirow{4}*{\boldsymbol{\bi}} & ResNet17+Encoding & 0.3337 & 0.5477 & 0.6233 & 0.1724 \\
    \cline{2-6}
    ~ & ResNet50+Encoding & 0.3098 & 0.5029 & 0.5717 & 0.2047 \\
    \cline{2-6}
    ~ & TIRG+Encoding & 0.3275 & 0.5707 & 0.6622 & 0.1875 \\
    \cline{2-6}
    ~ & CLIP+Encoding & 0.4578 & 0.7319 & 0.7990 & 0.1416 \\
    \hline
    \multirow{4}*{\tabincell{c}{\boldsymbol{\name}\\ \textbf{(ours)}}} & ResNet17+Encoding & 0.5701 & 0.7888 & 0.8446 & 0.1087 \\
    \cline{2-6}
    ~ & ResNet50+Encoding & 0.5423 & 0.7539 & 0.8106 & 0.1293 \\
    \cline{2-6}
    ~ & TIRG+Encoding & 0.4932 & 0.7377 & 0.8099 & 0.1433 \\
    \cline{2-6}
    ~ & CLIP+Encoding & \textbf{0.6388($\uparrow$39.5\%)} & \textbf{0.8583($\uparrow$17.3\%)} & \textbf{0.9024($\uparrow$12.9\%)} & \textbf{0.0952} \\
    \hline
  \end{tabular}
  }\vspace{-0.6cm}
\end{table}

\begin{table}[!tb]
  \centering
  \setlength{\abovecaptionskip}{0.05cm}
  \setlength{\belowcaptionskip}{0.4cm}
  \setstretch{0.8}
  \fontsize{6.5pt}{3.3mm}\selectfont
  \caption{Search accuracy on Shopping (T-shirt).}
  \label{tab: accuracy shopping}
  \setlength{\tabcolsep}{.003\linewidth}{
  \begin{tabular}{|c|l|l|l|l|l|}
    \hline
    \textbf{Framework} & \textbf{Encoder} & \textbf{\textit{Recall}@1(1)} & \textbf{\textit{Recall}@5(1)} & \textbf{\textit{Recall}@10(1)} & \textbf{\textit{SME}} \\
    \hline
    \boldsymbol{\bii} & TIRG & 0.1320 & 0.4005 & 0.5162 & 0.0964 \\
    \hline
    \multirow{2}*{\boldsymbol{\bi}} & ResNet17+Encoding & 0.0027 & 0.0190 & 0.0399 & 0.1379 \\
    \cline{2-6}
    ~ & TIRG+Encoding & 0.1320 & 0.4015 & 0.5206 & 0.0964 \\
    \hline
    \multirow{2}*{\tabincell{c}{\boldsymbol{\name}\\ \textbf{(ours)}}} & ResNet17+Encoding & 0.4208 & 0.6931 & 0.7973 & 0.0743 \\
    \cline{2-6}
    ~ & TIRG+Encoding & \textbf{0.4669($\uparrow$253.7\%)} & \textbf{0.7585($\uparrow$88.9\%)} & \textbf{0.8507($\uparrow$63.4\%)} & \textbf{0.0651} \\
    \hline
  \end{tabular}
  }\vspace{-0.5cm}
\end{table}

\noindent\textbf{Setup and Parameters.} All experiments are conducted on a Linux server equipped with an Intel(R) Xeon(R) Gold 6248R CPU running at 3.00GHz and 755G memory. We perform three repeated trials and report the average results for all evaluation metrics. Due to the space limitation, we put the detailed settings in Appendix \ref{appendix: setup and parameters}.

\subsection{Accuracy Evaluation}
\label{subsec: accuracy evaluation}

In our evaluation of all methods, we employ various encoders on four real-world datasets (cf. Appendix \ref{appendix: encoders}).
For {\bii}, we utilize multimodal encoders such as TIRG \cite{vo2019composing}, CLIP \cite{CLIP2021}, and MPC \cite{neculai2022probabilistic} to embed all modalities into the vector space of the target modality.
In the case of {\bi} and {\name}, we employ unimodal encoders, such as ResNet \cite{he2016deep} and Transformer \cite{DevlinCLT19}, to individually embed each modality. Additionally, we obtain a composition vector using a multimodal encoder (such as CLIP \cite{CLIP2021}), which is then used to replace the vector representation of the target modality.

\begin{table}[!tb]
  \centering
  \setlength{\abovecaptionskip}{0.05cm}
  \setlength{\belowcaptionskip}{0.35cm}
  \setstretch{0.8}
  \fontsize{6.5pt}{3.3mm}\selectfont
  \caption{Search accuracy on MS-COCO.}
  \label{tab: accuracy coco}
  \setlength{\tabcolsep}{.003\linewidth}{
  \begin{tabular}{|c|l|l|l|l|}
    \hline
    \textbf{Framework} & \textbf{Encoder} & \textbf{\textit{Recall}@10(1)} & \textbf{\textit{Recall}@50(1)} & \textbf{\textit{Recall}@100(1)} \\
    \hline
    \textbf{\boldsymbol{\bii}} & MPC & 0.0202 & 0.0865 & 0.1512 \\
    \hline
    \multirow{2}*{\boldsymbol{\bi}} & MPC+GRU+ResNet50 & 0.0647 & 0.1827 & 0.2741 \\
    \cline{2-5}
    ~ & ResNet50+GRU+ResNet50 & 0.0493 & 0.1633 & 0.2425 \\
    \hline
    \multirow{2}*{\tabincell{c}{\boldsymbol{\name}\\ \textbf{(ours)}}} & MPC+GRU+ResNet50 & 0.0825 & 0.2272 & 0.3363 \\
    \cline{2-5}
    ~ & ResNet50+GRU+ResNet50 & \textbf{0.0914($\uparrow$41.3\%)} & \textbf{0.2498($\uparrow$36.7\%)} & \textbf{0.3711($\uparrow$35.4\%)} \\
    \hline
  \end{tabular}
  }\vspace{-0.5cm}
\end{table}

\begin{figure}[!t]
  \setlength{\abovecaptionskip}{0cm}
  \setlength{\belowcaptionskip}{0cm}
  \centering
  \includegraphics[width=\linewidth]{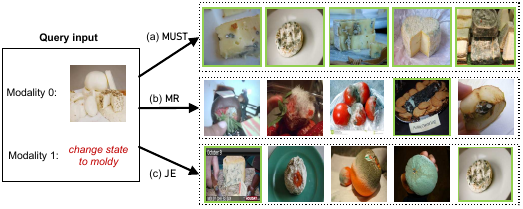}
  \caption{Top-5 examples of different frameworks on MIT-States. The green box marks the ground-truth objects.}
  \label{fig: mitstates_case}
  \vspace{-0.4cm}
\end{figure}

\begin{figure*}[!th]
\setstretch{0.9}
\fontsize{8pt}{4mm}\selectfont
\begin{minipage}{0.06\textwidth}
  \setlength{\abovecaptionskip}{0cm}
  \setlength{\belowcaptionskip}{0cm}
  \centering
  \scriptsize
  \stackunder[0.5pt]{\includegraphics[scale=0.13]{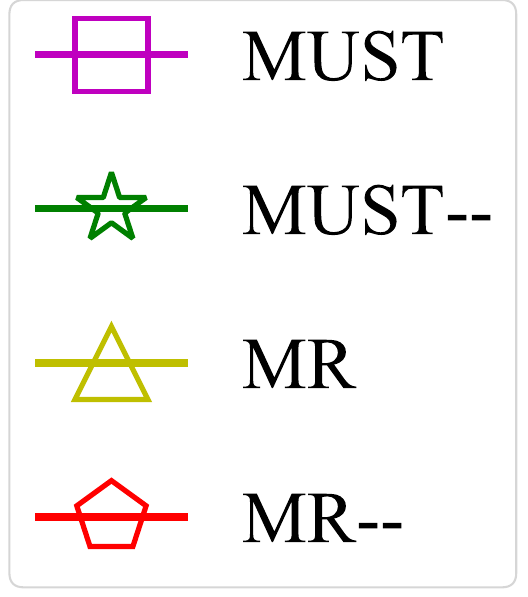}}{}\vspace{0.8cm}
\end{minipage}
\begin{minipage}{0.45\textwidth}
  \setlength{\abovecaptionskip}{0cm}
  \setlength{\belowcaptionskip}{0cm}
  \centering
  \scriptsize
  \stackunder[0.5pt]{\includegraphics[scale=0.14]{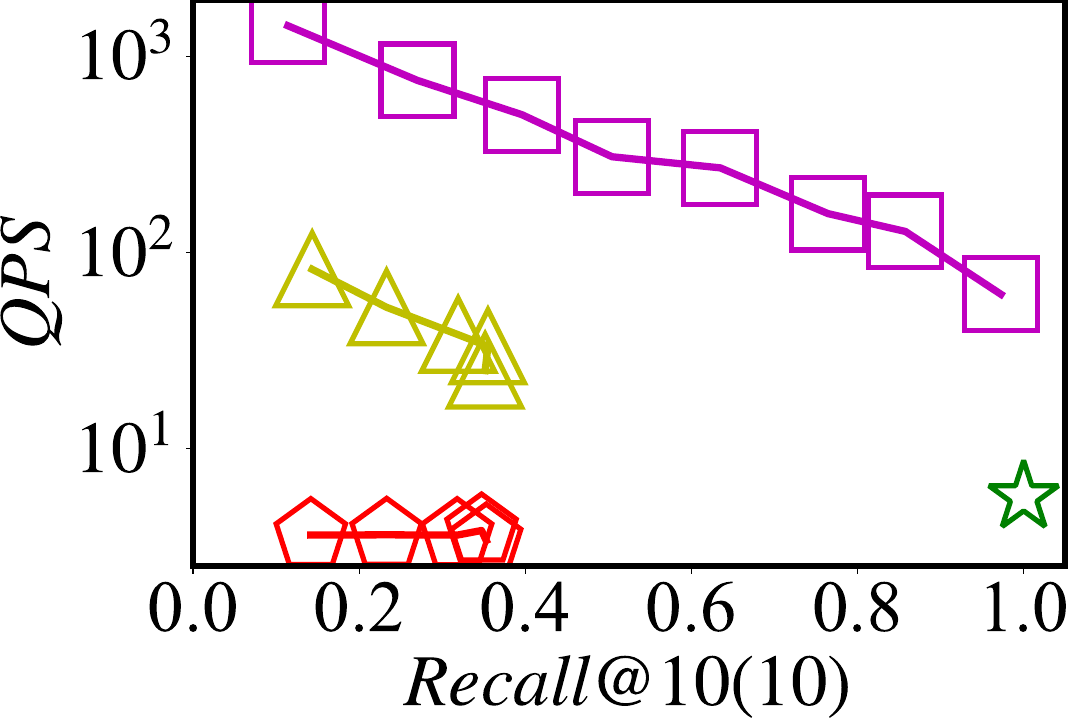}}{(a) ImageText1M}
  \stackunder[0.5pt]{\includegraphics[scale=0.14]{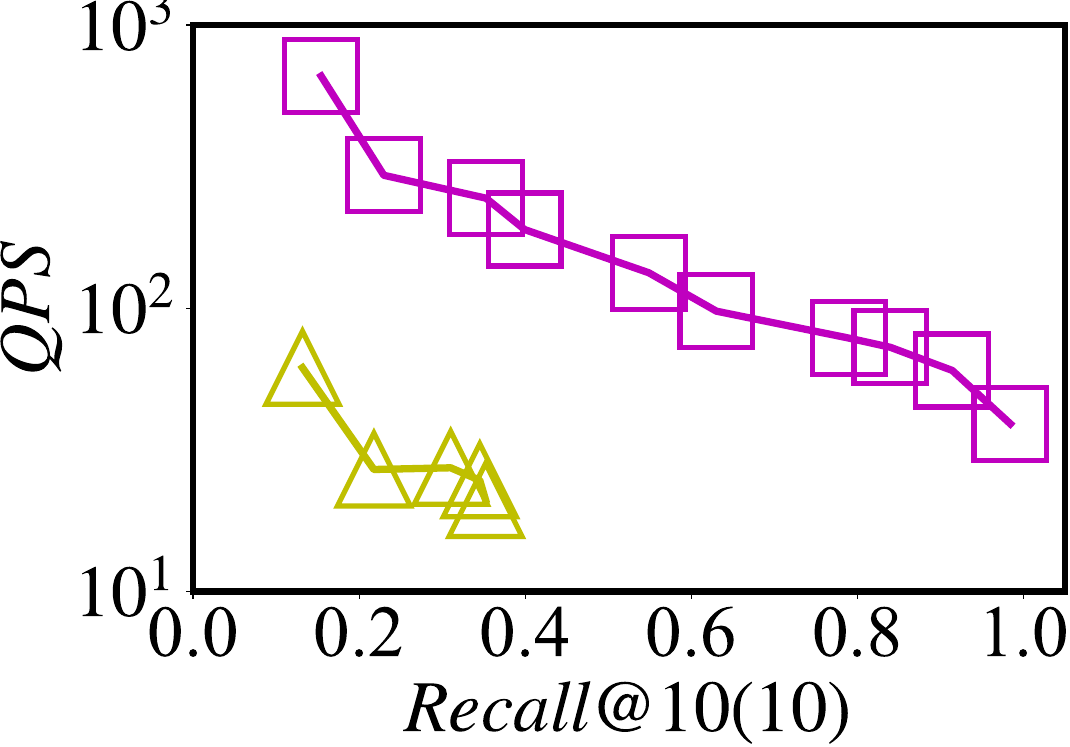}}{(b) AudioText1M}
  \stackunder[0.5pt]{\includegraphics[scale=0.14]{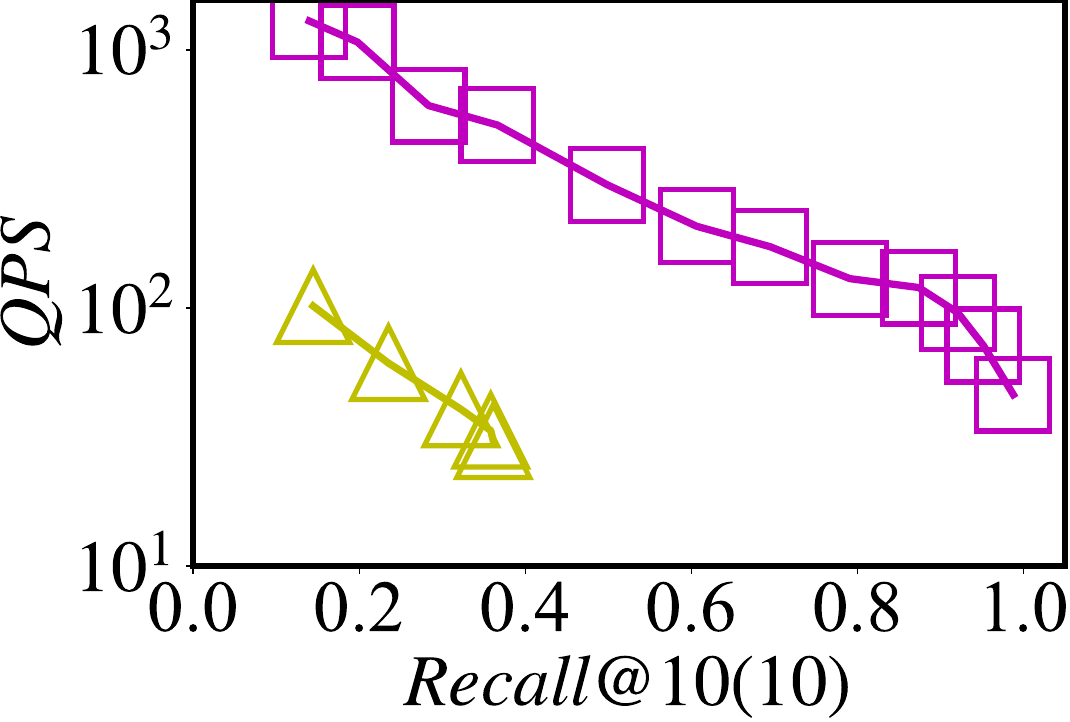}}{(c) VideoText1M}
  \newline
  \caption{Efficiency evaluation of different methods.}
  \label{fig: efficiency evaluation}
\end{minipage}\hspace{0.2cm}
\begin{minipage}{0.484\textwidth}
  \centering
  \setlength{\abovecaptionskip}{0cm}
  \setlength{\belowcaptionskip}{0cm}
  \setstretch{0.8}
  \fontsize{6.5pt}{3.3mm}\selectfont
  \tabcaption{Response time comparison (in seconds) of {\name}{-}{-} and {\name} when ${Recall@10(10)} > 0.99$ under different data volumes. The value in parentheses shows the percentage of response time decrease by using {\name}.}
  \label{tab: scalability search time}
  \setlength{\tabcolsep}{.008\linewidth}{
  \begin{tabular}{|l|l|l|l|l|l|}
    \hline
    \textbf{Scale} & 1M & 2M & 4M & 8M & 16M \\
    \hline
    {\name}{-}{-} & 15.4 & 32.8 & 67.5 & 129.9 & 266.9 \\
    \hline
    {\name} & 2.7 ($\downarrow$82.5\%) & 2.7 ($\downarrow$91.8\%) & 3.4 ($\downarrow$95.0\%) & 3.4 ($\downarrow$97.4\%) & 4.4 ($\downarrow$98.4\%) \\
    \hline
  \end{tabular}
  }
\end{minipage}
\vspace{-0.3cm}
\end{figure*}

\begin{figure*}[!th]
\setlength{\abovecaptionskip}{0cm}
\setstretch{0.9}
\fontsize{8pt}{4mm}\selectfont
\begin{minipage}{0.3\textwidth}
  \setlength{\abovecaptionskip}{0cm}
  \setlength{\belowcaptionskip}{-0.2cm}
  \centering
  \scriptsize
  \stackunder[0.5pt]{\includegraphics[scale=0.14]{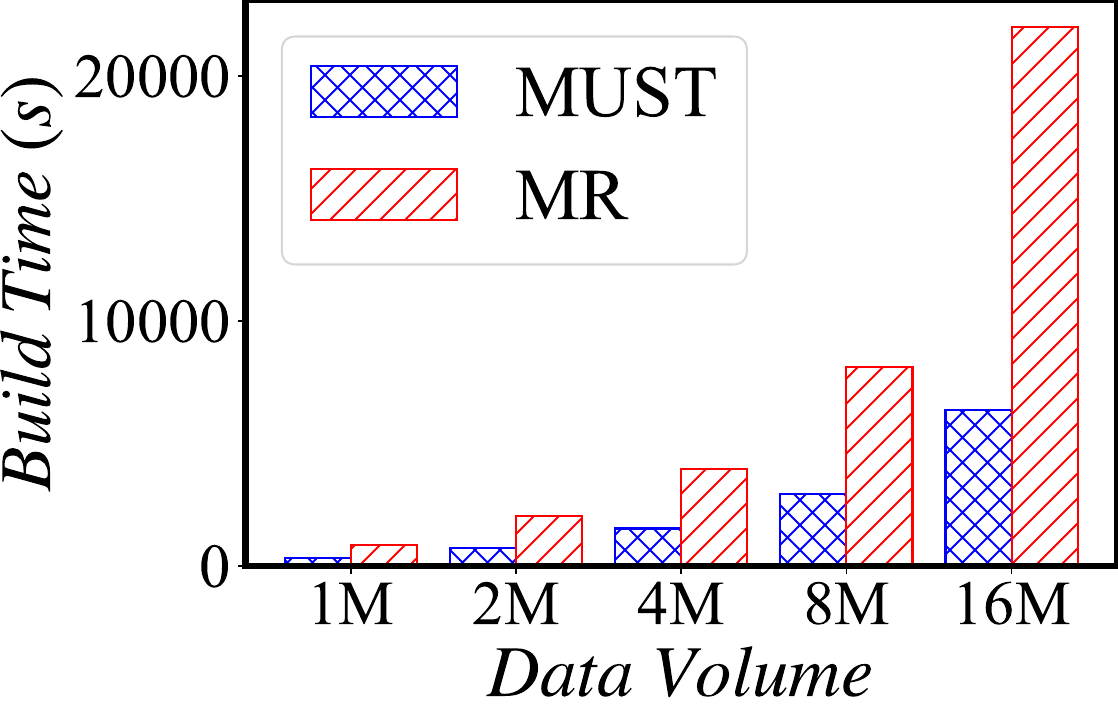}}{(a) Build time (s)}
  \stackunder[0.5pt]{\includegraphics[scale=0.14]{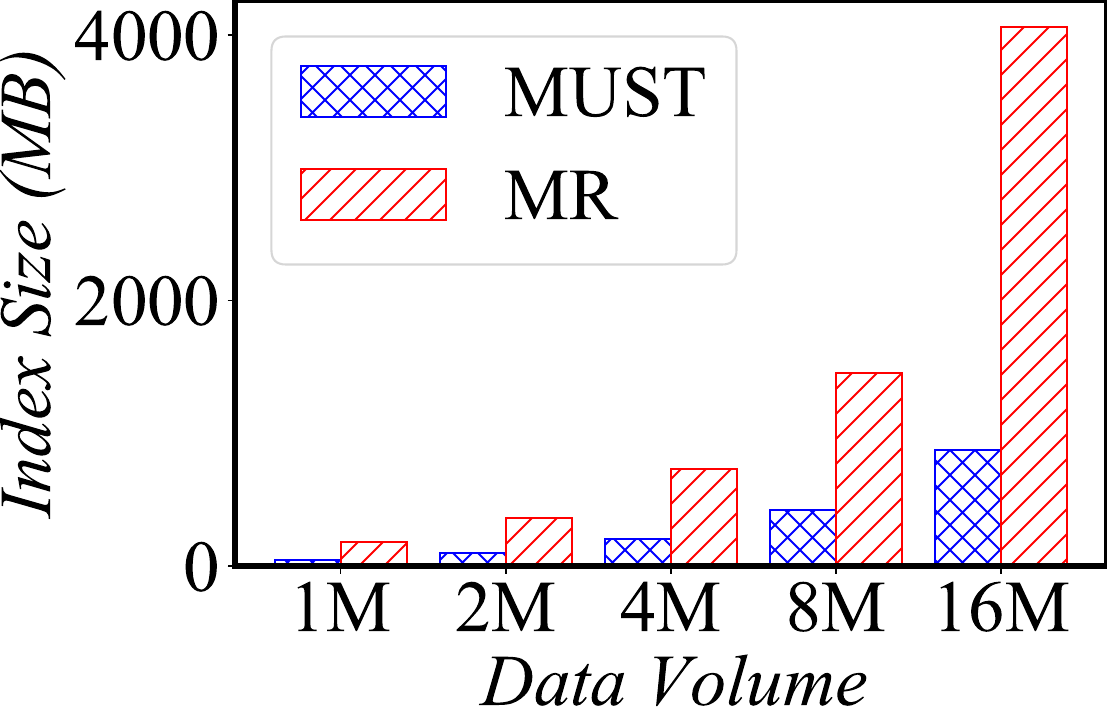}}{(b) Index size (MB)}
  \newline
  \caption{Effect of different data volumes.}
  \label{fig: scalability index build}
\end{minipage}\hspace{0.2cm}
\begin{minipage}{0.24\textwidth}
  \centering
  \setlength{\abovecaptionskip}{0.2cm}
  \setlength{\belowcaptionskip}{0cm}
  \setstretch{0.8}
  \fontsize{6.5pt}{3.3mm}\selectfont
  \tabcaption{Recall rates with different numbers of modalities on CelebA+.}
  \label{tab: scalability number of modalities}
  \setlength{\tabcolsep}{.008\linewidth}{
  \begin{tabular}{|l|l|l|l|}
    \hline
    \textbf{\# Modality} ($m$) & 2 & 3 & 4 \\
    \hline
    \textbf{{\bi}} (\textit{Recall}@1(1)) & 0.4578 & 0.4613 & 0.4599 \\
    \hline
    {\name} (\textit{Recall}@1(1)) & 0.6388 & 0.6771 & 0.6956 \\
    \hline
  \end{tabular}
  }
\end{minipage}\hspace{0.2cm}
\begin{minipage}{0.43\textwidth}
  \setlength{\abovecaptionskip}{0cm}
  \setlength{\belowcaptionskip}{-0.2cm}
  \centering
  \scriptsize
  \stackunder[0.5pt]{\includegraphics[scale=0.14]{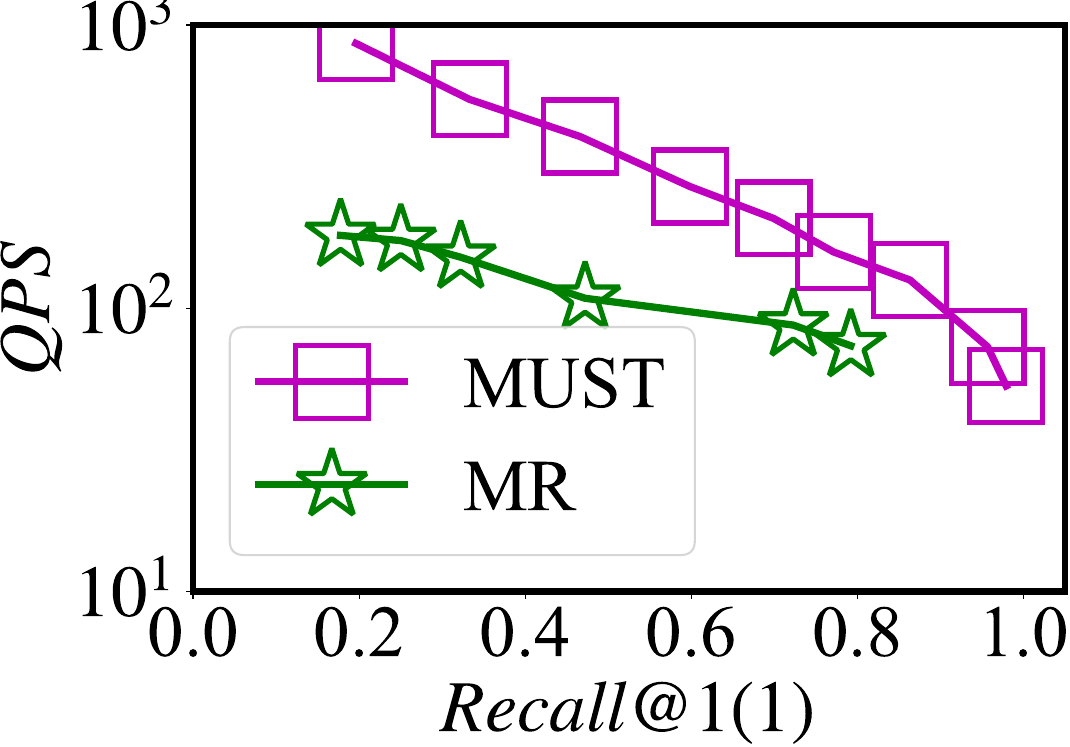}}{(a) $k=1$}
  \stackunder[0.5pt]{\includegraphics[scale=0.14]{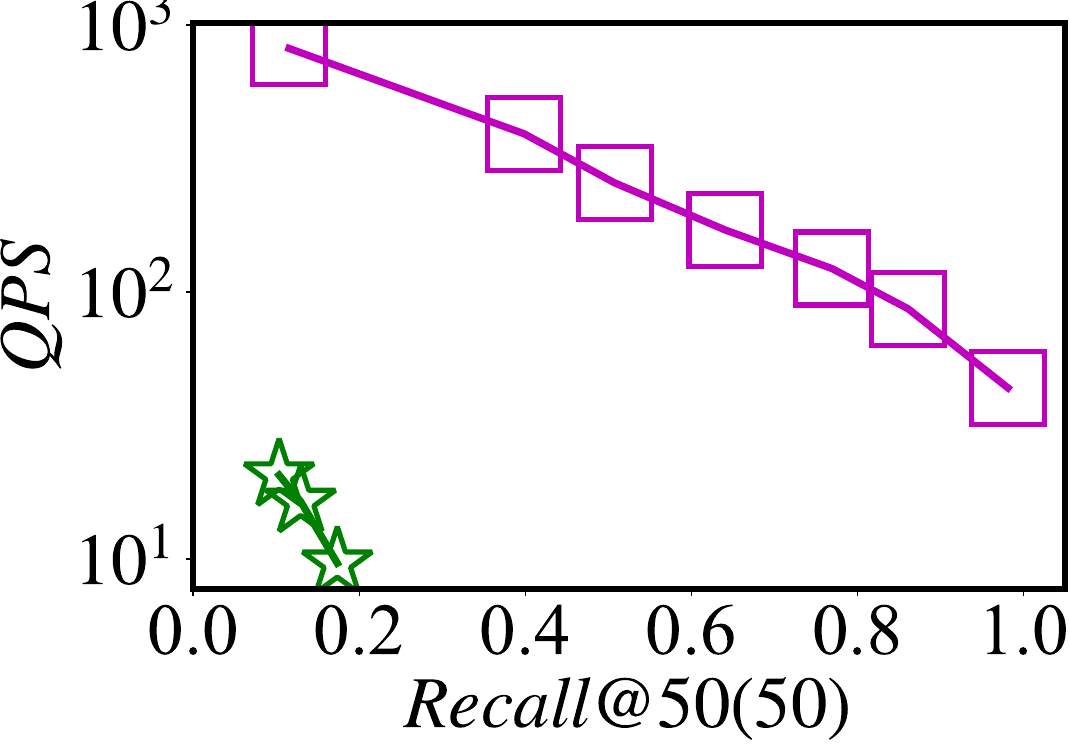}}{(b) $k=50$}
  \stackunder[0.5pt]{\includegraphics[scale=0.14]{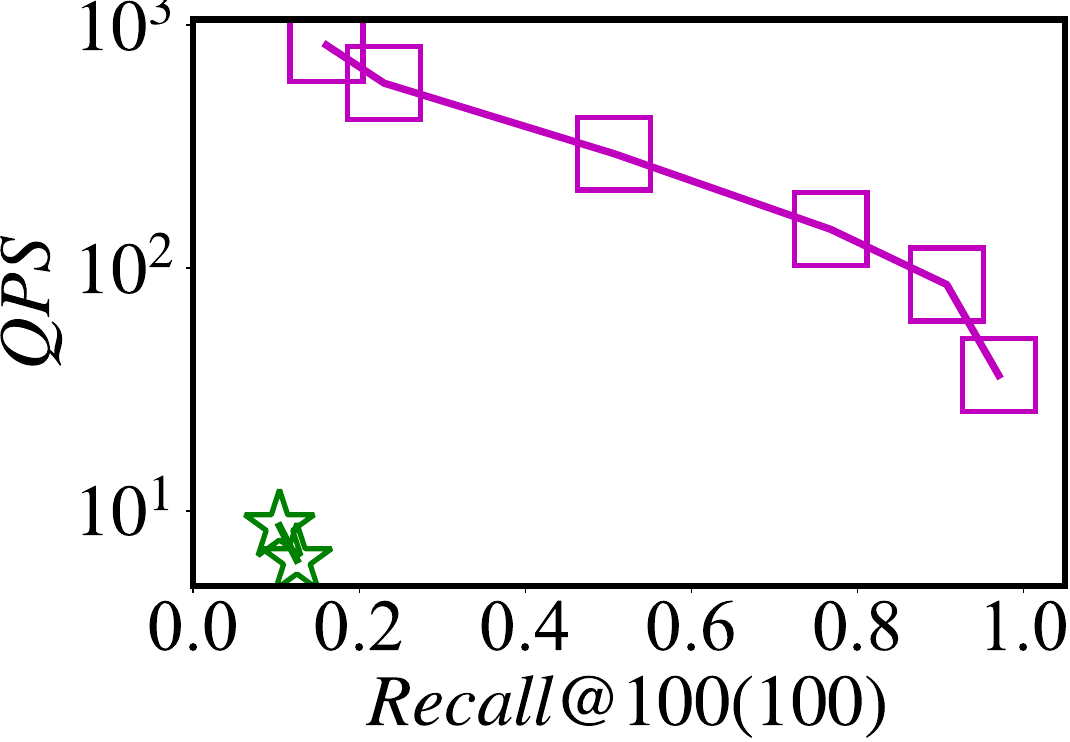}}{(c) $k=100$}
  \newline
  \caption{Effect of different $k$.}
  \label{fig: query type k}
\end{minipage}
\vspace{-0.3cm}
\end{figure*}

Tab. \ref{tab: accuracy mitstates}–\ref{tab: accuracy coco} show the search accuracy and $SME$ of the three frameworks.
We have three major observations as summarized below: \underline{First}, {\name} significantly outperforms its competitors on all experimental datasets. Notably, {\name} achieves at least 198\% and 23\% improvement over {\bii} and {\bi} respectively for their best $Recall@1(1)$ on MIT-States. {\name} also reduces the $SME$ on all datasets. 
\underline{Second}, different encoders yield varying recall rates, e.g., CLIP, being a state-of-the-art multimodal encoder, achieves the highest accuracy in single-vector representation. This underscores the importance of encoder selection in achieving optimal performance. An advantage of {\name} is its pluggable embedding component, which allows seamless integration of newly-devised encoders.
\underline{Third}, multi-vector representation exhibits higher recall rates. For example, {\bi} (CLIP+LSTM) and {\name} (CLIP+LSTM) are better than {\bii} (CLIP) on MIT-States. Even with the same multi-vector representation, {\name} consistently achieves larger improvements compared to {\bi}. On the most challenging MS-COCO dataset, both {\bi} and {\name} demonstrate impressive performance compared to {\bii}, which struggles due to fusing three modalities, leading to larger embedding errors.

\subsection{Case Study}
\label{subsec: case study}
Fig. \ref{fig: mitstates_case} shows some case studies of the {\problem} problem on MIT-States. We use the best encoder for each framework based on Table \ref{tab: accuracy mitstates} (CLIP for {\bii}, ResNet50+LSTM for {\bi} and {\name}). We give a query input with an \textit{image of fresh cheese} and a \textit{text description of ``change state to moldy''}, and show the top-5 search results from different frameworks. The results show that {\name} outperforms its competitors. The objects returned by {\name} all satisfy the multimodal constraints, while most objects from {\bi} and {\bii} only match some of the requirements. We provide more recall examples of other queries and datasets in Appendix \ref{appendix: recall examples}.

\subsection{Efficiency Evaluation}
\label{subsec: efficiency evaluation}
We evaluate the performance of {\name}'s fused index and joint search strategy on three million-scale datasets. We conduct comparisons with {\bi}\footnote{For fairness, we exclude {\bii} as it only utilizes a single-vector representation and exhibits much lower accuracy.}, which applies the same index and search strategy to each vector set. Additionally, we implement brute-force versions for vector search of both {\name} and {\bi}, labeled as {\name}{-}{-} and {\bi}{-}{-}, respectively. 

Fig. \ref{fig: efficiency evaluation} presents the QPS vs Recall comparison, where we adjust the parameter $l$ in Algorithm \hyperref[alg: search]{2} to achieve different recall rates. For clarity, we exclude {\name}{-}{-} and {\bi}{-}{-} on AudioText1M and VideoText1M due to their slow performance. The results yield the following observations:
\underline{First}, {\name} is 10$\times$ faster than {\bi} with the same recall rate, which can be attributed to the time-consuming merging operation employed by {\bi}.
\underline{Second}, the recall rate of {\bi} is less than 0.4. We find that the merging operation causes the accuracy of {\bi} to become non-increasing with increasing $l$. Initially, the recall rate of {\bi} increases as $l$ grows, owing to the increased chances of finding the target when intersecting results from each modality. However, as $l$ further increases, the size of the intersection often exceeds $k$, making it challenging to identify the top-$k$ results. Note that the importance of different modalities in {\problem} is unknown, which may necessitate additional optimization for selecting the top-$k$ objects from a large set.
\underline{Third}, both {\name} and {\bi} are more than 10$\times$ faster than their brute-force counterparts, which indicates the effectiveness of our indexing and searching strategies.

\subsection{Scalability}
\label{subsec: scalability}

\noindent\textbf{Data Volume ($n$).} Tab. \ref{tab: scalability search time} shows the response time of {\name} and {\name}{-}{-} with varying $n$.
We find that the response time of {\name}{-}{-} increases linearly with the growth of $n$. In contrast, {\name} exhibits only a slight increase in response time even with large $n$ and reduces the response time by up to 98.4\% when $n$ is 16 million. Fig. \ref{fig: scalability index build} illustrates that {\name}'s build time and index size are significantly lower than {\bi}, affirming its efficiency and scalability in large-scale scenarios.

\noindent\textbf{Number of Modalities ($m$).} Tab. \ref{tab: scalability number of modalities} reports the recall rates of {\name} and {\bi} with different $m$. Overall, the recall rate increases with $m$ for both methods, as more information leads to more accurate results. However, the challenge of merging becomes more pronounced in {\bi} as the number of modalities increases. As a result, the recall rate of $m=4$ in {\bi} is even lower than that of $m=3$. This highlights {\name}'s capability to effectively handle multiple modalities.

\subsection{Query Workloads}
\label{subsec: query requirement}

\begin{figure*}[!th]
\setstretch{0.9}
\fontsize{8pt}{4mm}\selectfont
\begin{minipage}{0.33\textwidth}
\centering
  \setlength{\abovecaptionskip}{0.3cm}
  \setlength{\belowcaptionskip}{0cm}
  \setstretch{0.8}
  \fontsize{6.5pt}{3.3mm}\selectfont
  \tabcaption{Effect of different user-defined weights. ${q}$ is the query input and ${r}$ is the returned result.}
  \label{tab: query type user-defined weights}
  \setlength{\tabcolsep}{.008\linewidth}{
  \begin{tabular}{|l|l|l|l|l|l|l|}
    \hline
    \multirow{2}*{\textbf{Weights}} & $\omega_0^2$ & 0.5 & 0.6 & 0.7 & 0.8 & 0.9 \\
    \cline{2-7}
    ~ & $\omega_1^2$ & 0.5 & 0.4 & 0.3 & 0.2 & 0.1 \\
    \hline
     \multicolumn{2}{|c|}{${IP(\phi_0(q^0),\phi_0(r^0))}$} & {0.6915} & {0.7009} & {0.7440} & {0.8286} & {0.9301} \\
    \hline
    \multicolumn{2}{|c|}{${IP(\phi_1(q^1),\phi_1(r^1))}$} & 0.9999 & 0.9960 & 0.9748 & 0.9242 & 0.8525 \\
    \hline
  \end{tabular}
  }
\end{minipage}\hspace{0.4cm}
\begin{minipage}{0.27\textwidth}
  \centering
  \setlength{\abovecaptionskip}{0.2cm}
  \setlength{\belowcaptionskip}{0cm}
  \setstretch{0.8}
  \fontsize{6.5pt}{3.3mm}\selectfont
  \tabcaption{Effect of single query modality on MIT-States.}
  \label{tab: query modalities}
  \setlength{\tabcolsep}{.008\linewidth}{
  \begin{tabular}{|c|l|l|l|}
    \hline
    \textbf{Modality} & \textbf{Encoder} & \textbf{\textit{Recall}@1(1)} & \textbf{\textit{Recall}@5(1)} \\
    \hline
    \multirow{2}*{\textbf{Target}} & ResNet17 & 0.0268 & 0.1103 \\
    \cline{2-4}
    ~ & ResNet50 & 0.0363 & 0.1393 \\
    \hline
    \multirow{2}*{\textbf{Auxiliary}} & LSTM & 0.2747 & 0.4343 \\
    \cline{2-4}
    ~ & Transformer & 0.2601 & 0.2641 \\
    \hline
  \end{tabular}
  }
\end{minipage}\hspace{0.1cm}
\begin{minipage}{0.355\textwidth}
  \setlength{\abovecaptionskip}{0cm}
  \setlength{\belowcaptionskip}{-0.3cm}
  \centering
  \scriptsize
  \stackunder[0.5pt]{\includegraphics[scale=0.15]{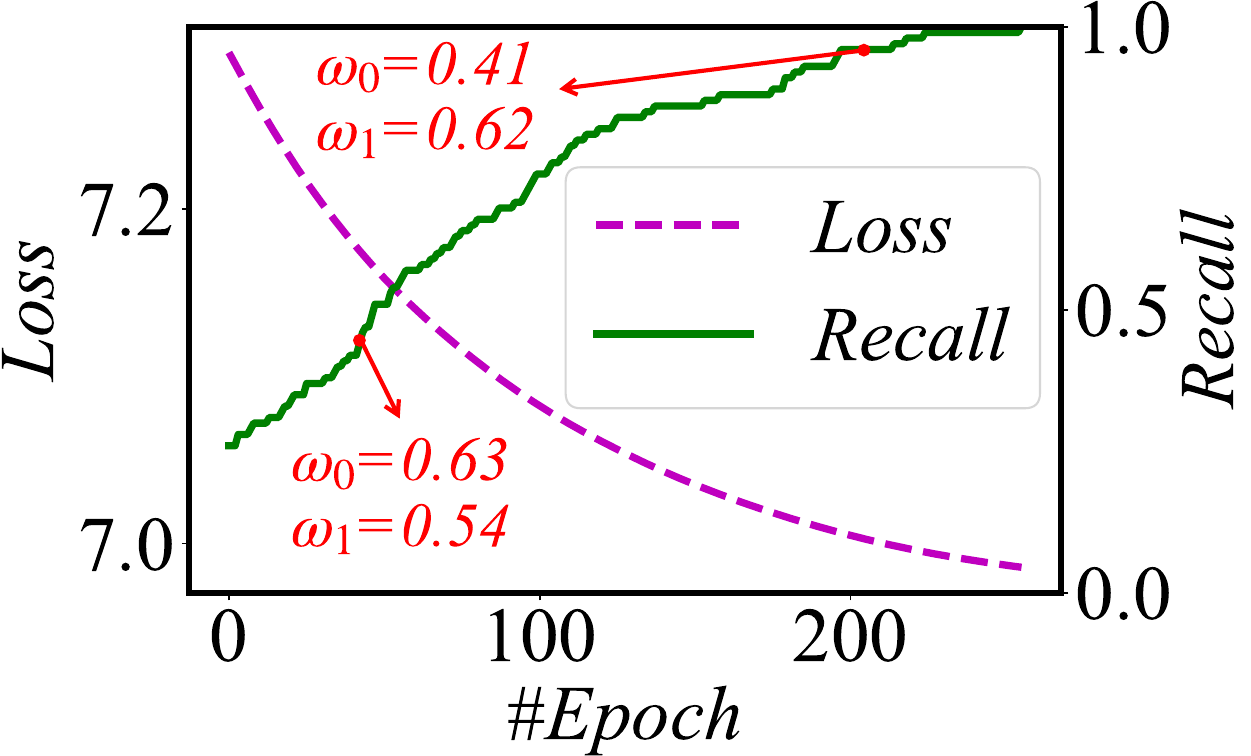}}{(a) Hard}\hspace{0.1cm}
  \stackunder[0.5pt]{\includegraphics[scale=0.15]{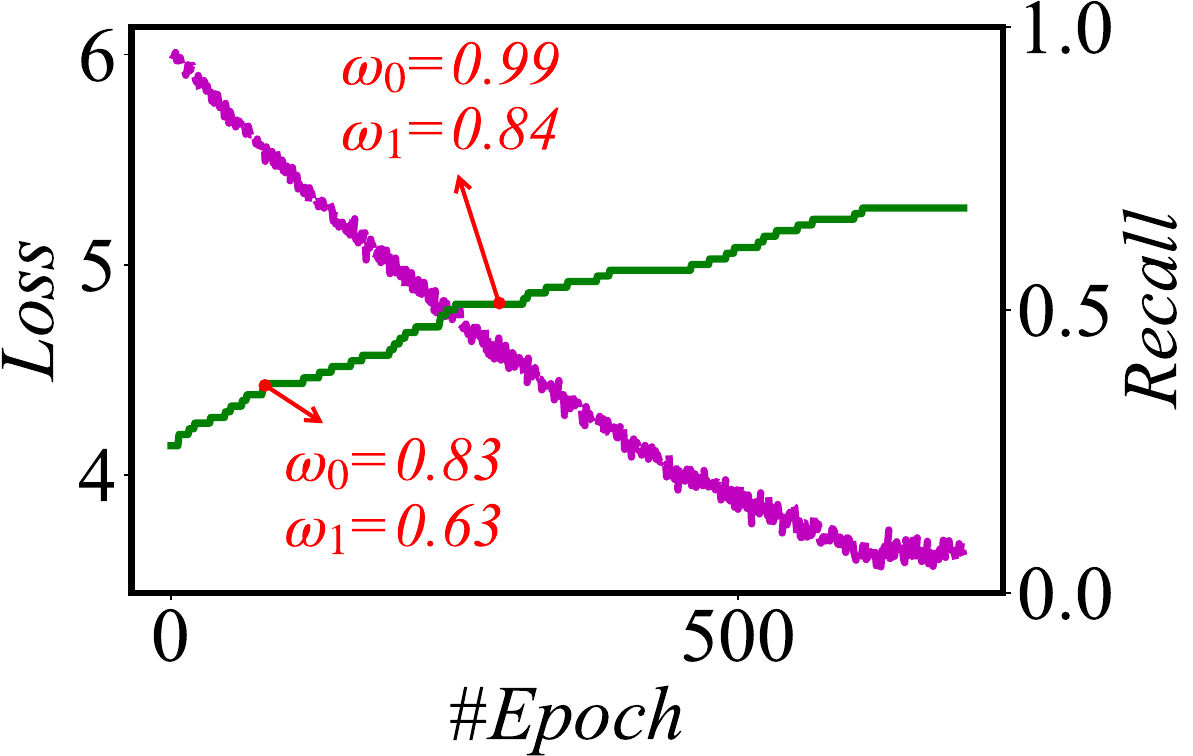}}{(b) Random}
  \newline
  \caption{Effect of different negatives.}
  \label{fig: ablation weight learning}
\end{minipage}
\vspace{-0.1cm}
\end{figure*}

\setlength{\textfloatsep}{0cm}
\setlength{\floatsep}{0cm}
\begin{figure*}[!th]
\setlength{\abovecaptionskip}{0cm}
\setstretch{0.9}
\fontsize{8pt}{4mm}\selectfont
\begin{minipage}{0.765\textwidth}
  \setlength{\abovecaptionskip}{0cm}
  \setlength{\belowcaptionskip}{-0.5cm}
  \centering
  \scriptsize
  \stackunder[0.5pt]{\includegraphics[scale=0.22]{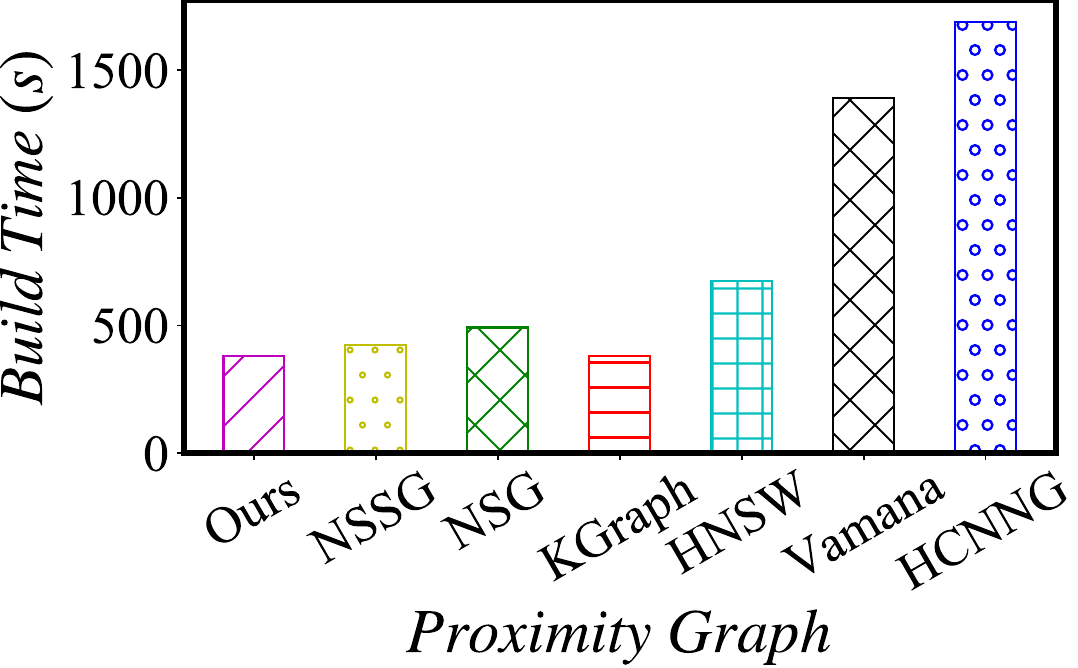}}{{(a) Construction time}}\hspace{0.15cm}
  \stackunder[0.5pt]{\includegraphics[scale=0.22]{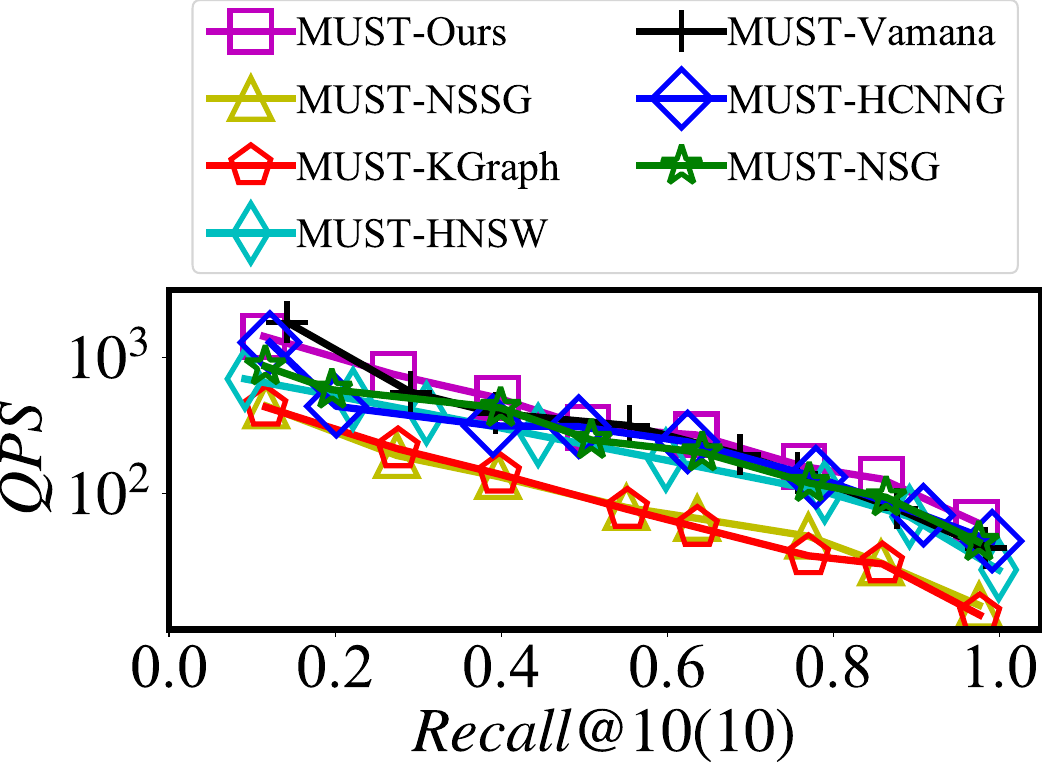}}{{(b) Search performance}}\hspace{0.2cm}
  \stackunder[0.5pt]{\includegraphics[scale=0.22]{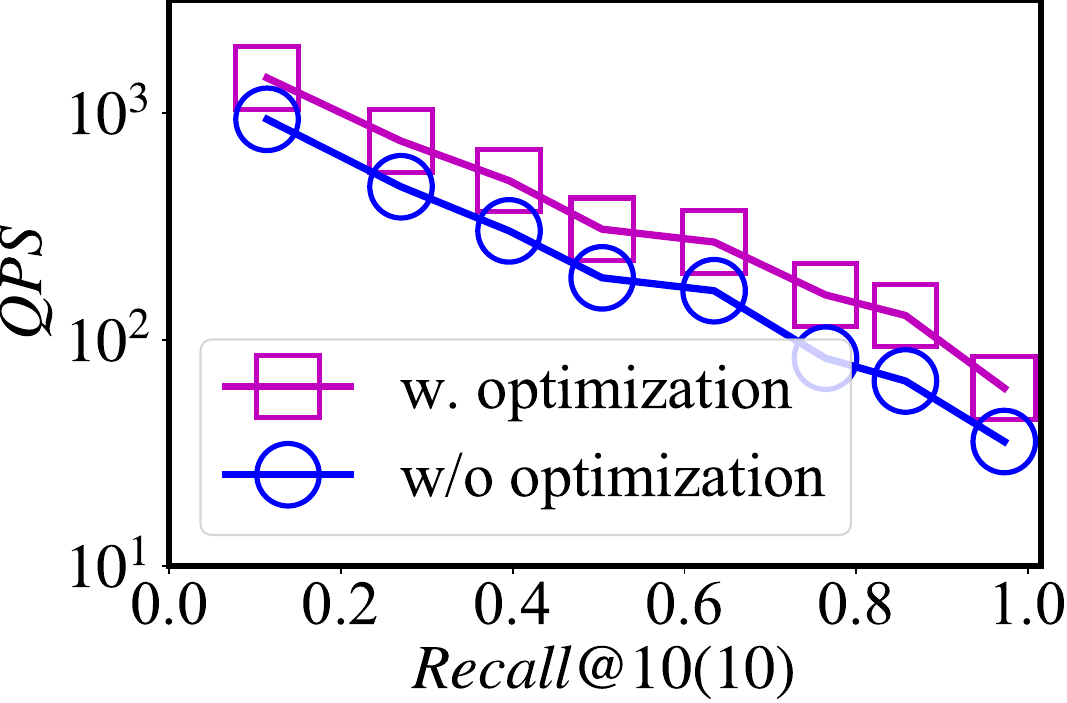}}{(c) Multi-vector computation}
  \newline
  \caption{Effect of index pipeline and search optimization.}
  \label{fig: ablation indexing}
\end{minipage}
\begin{minipage}{0.22\textwidth}
  \centering
  \setlength{\abovecaptionskip}{0cm}
  \setlength{\belowcaptionskip}{-0.5cm}
  \includegraphics[width=\linewidth]{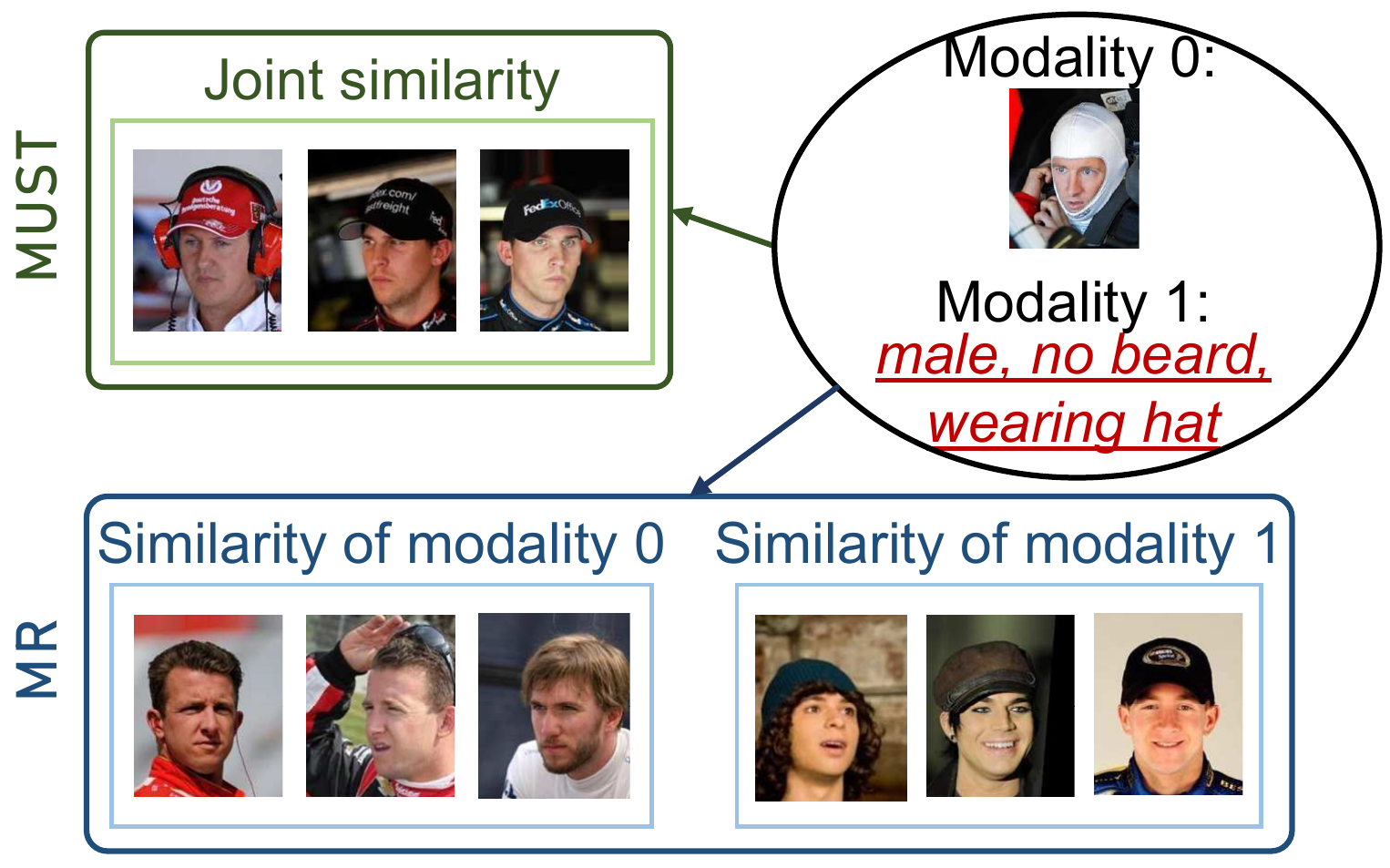}
  \caption{{An example of the top-3 neighbors on different indexes.}}
  \label{fig: index show}
\end{minipage}
\vspace{-0.1cm}
\end{figure*}

\noindent\textbf{Number of Results ($k$).} In Fig. \ref{fig: query type k}, we compare the search performance of {\name} and {\bi} with different $k$ on ImageText1M. The results show that {\name} consistently outperforms {\bi} for any $k$. Moreover, {\name} brings more improvements on larger $k$ while {\bi} has a limited recall rate and \textit{QPS} (cf. \textbf{\S \ref{subsec: efficiency evaluation}}). This is because {\bi} requires more candidates from each modality when $k$ is larger, which makes merging even more challenging, e.g., the number of candidates is 1,300 for the best $Recall@1(1)$, and 10,500 for the best $Recall@100(100)$.

\noindent\textbf{Learned Weights.} We investigate the impact of different queries with fixed learned weights on MIT-States. In Fig. \ref{fig: mitstates_case}, the original text describes something not present in the given image. To create a new query input, we retain the reference image while modifying the text description to \textit{``remove the fresh state''}. This query now describes what is already present in the given image.
Both queries are executed with the same learned weights on {\name}, and we observe that they yield identical query results. This compelling result verifies the generalization capability of the fixed learned weights, highlighting that the learned weights reflect the importance of different modalities, independent of their specific content.

\noindent\textbf{User-defined Weights.} {Tab. \ref{tab: query type user-defined weights} shows the effect of {\name} with different user-defined weights on MIT-States. We calculate the mean similarity over one modality for a batch of query inputs and returned objects. For example, when $\omega_0^2=\omega_1^2=0.5$, the mean \textit{IP} between modality 0 of query inputs and returned objects is 0.6915 and between modality 1 is 0.9999. To get an object whose modality 0 is more similar to the query input, users can increase the weight of modality 0. When $\omega_0^2=0.9$ and $\omega_1^2=0.1$, the returned object has higher similarity to the query input in modality 0. Thus, we can get customized results by adjusting the weight configuration.}

\noindent\textbf{Number of Query Modalities ($t$).} We study how different $t$ values in the queries affect the search accuracy on MIT-States. Tab. \ref{tab: query modalities} shows the search accuracy when only one modality is used in the queries (i.e., $t=1$). Compared with multimodal queries ($t=2$, cf. Tab. \ref{tab: accuracy mitstates}), the single-modal queries have lower search accuracy. Thus, using more query modalities is crucial for the quality of query results.

\subsection{Ablation Study}
\label{subsec: ablation}

\noindent\textbf{Vector Weight Learning Model.}
We compare the proposed hard negative acquisition strategy and the random selection. Fig. \ref{fig: ablation weight learning} shows the loss and recall rate w.r.t. the epoch on ImageText1M. We can observe that the model using the hard negatives converges faster compared to the model using the random ones. Additionally, the learned weights from the hard negatives lead to a higher recall rate, demonstrating the effectiveness of our strategy.
It is important to highlight that the weight learning model is remarkably efficient, as it takes less than 200 seconds to train on all datasets. In contrast, the embedding models used in the process require over 12 hours to train. As a result, the vector weight learning model is lightweight and imposes minimal additional training cost.

\noindent\textbf{Proximity Graph.}
{We implement six proximity graphs in {\name}: KGraph \cite{NNDescent}, NSG \cite{NSG}, NSSG \cite{NSSG}, HNSW \cite{HNSW}, Vamana \cite{DiskANN}, and HCNNG \cite{HCNNG}.} We also re-assemble KGraph, NSG, and NSSG according to \textbf{\S \ref{subsec: index construction}} to form our fused index. We evaluate their indexing and search performance on ImageText1M. Fig. \ref{fig: ablation indexing}(b) shows that our method is more efficient than the competitors. Fig. \ref{fig: ablation indexing}(a) shows the index construction time of different methods. Our optimized one is faster than the others. Therefore, our pipeline facilitates the design of proximity graph algorithms and can improve performance even without new optimization. 
Fig. \ref{fig: index show} shows the visualization of three neighbors for an object on CelebA. The vertex and its neighbors in {\name}'s index balance the importance of different modalities and have a better joint similarity. The vertex and its neighbors in {\bi}’s indexes only consider the similarity in one modality.

\noindent\textbf{Multi-vector Computation Optimization.} Fig. \ref{fig: ablation indexing}(c) shows the effect of multi-vector computation optimization on ImageText1M. This optimization improves the search efficiency without affecting the search accuracy. This is because we can skip some vector computations without losing accuracy by scanning the vectors of each object and query incrementally (cf. Lemma \ref{lemma: multi-vector computation}). This optimization is more significant in high-accuracy regions than in low-accuracy regions.
\section{Discussion}
\label{sec: discussion}
\noindent\textbf{Single Modality Inputs.}
{In scenarios where users provide single-modal query inputs but seek more personalized results, {\name} adapts by refining the query iteratively using a returned target modality example. For instance, in image retrieval, users may only provide text input. {\name} can then generate an output image based on the given text, serving as a reference for users to enhance their query by adding additional text. This interactive process enables users to create more complete multimodal query inputs and obtain the desired results using {\name}, even if the initial inputs are incomplete or imprecise.}

\noindent\textbf{Index Updates.}
{The index in {\name} relies on a proximity graph algorithm, and the efficacy of dynamic updates depends on the specific proximity graph employed. While certain algorithms, like KGraph \cite{NNDescent} and NSG \cite{NSG}, do not support dynamic updates, others, such as HNSW \cite{HNSW} and Vamana \cite{DiskANN}, adeptly handle dynamic updates by incrementally inserting data points.  For instance, upon the arrival of a new object, its embedding vector can be used to search for neighbors in the index, updating them accordingly. However, it is crucial to note that all existing proximity graph algorithms necessitate periodic reconstruction to maintain optimal performance \cite{ADBV}. For example, a deleted data point is not immediately removed from the index, and it can be marked with a data-status bitset. This is because the data point may be essential to ensure the connectivity of the proximity graph. The actual deletion takes place during the reconstruction process. However, this process is time-consuming for proximity graph algorithms, which affects the scalability of {\name} in dynamic data update scenarios. Therefore, supporting efficient index updates in {\name} remains an ongoing concern.}

\section{Conclusion}
\label{sec: discussion and conclusion}

In this study, we thoroughly investigate the {\problem} problem and proposed a novel and effective framework called {\name}. Our framework introduces a hybrid fusion mechanism that intelligently combines different modalities at multiple stages, capturing their relative importance and accurately measuring the joint similarity between objects. Additionally, we have developed a fused proximity graph index and an efficient joint search strategy tailored for multimodal queries. {\name} exhibits the capability to handle interactive multimodal search scenarios, wherein users may lack query inputs for certain modalities. The comprehensive experimental results demonstrate the superior performance of {\name} compared to the baselines, showcasing its advantages in terms of accuracy, efficiency, and scalability. For more in-depth discussions and analysis, we refer readers to Appendix \ref{appendix: discussion}.

Looking ahead, we plan to further enrich {\name} by incorporating additional encoders such as the OpenAI embeddings \cite{openai_embedding_api} and Hugging Face embeddings \cite{hugging_face_embedding_api}.

\section*{Acknowledgment}
This work was supported in part by the NSFC under Grants No. (62102351, 62025206, U23A20296) and the Yongjiang Talent Programme (2022A-237-G). Lu Chen is the corresponding author of the work.

\bibliographystyle{IEEEtran}
\bibliography{IEEEabrv, mybibli.bib}

\clearpage

\appendix

\subsection{Proof of Lemma \ref{lemma: neighbor selection}}
\label{appendix: proof of lemma 2}

\begin{proof}
We consider the scenario where there exist two vertices $u$ and $v$ in the set of final neighbors $N(o)$ for a given vertex $o$, such that the angle between them, denoted as $\theta(u,o,v)$, is less than $60^\circ$. In the triangle $\triangle uov$, the sum of the angles $\theta(o,v,u)$ and $\theta(o,u,v)$ exceeds $120^\circ$. Here, the inner product (IP) of two vertices is used to measure the side length between them, and smaller IP values imply longer sides in the triangle. Therefore, we can conclude that either $\theta(o,v,u) > \theta(o,u,v)$ (i.e., $\theta(o,v,u) > 60^\circ$) or $\theta(o,u,v) > \theta(o,v,u)$ (i.e., $\theta(o,u,v) > 60^\circ$).

\underline{Case 1:} If $\theta(o,v,u) > \theta(o,u,v)$, it follows that $IP(\boldsymbol{\hat{o}},\boldsymbol{\hat{v}}) > IP(\boldsymbol{\hat{o}},\boldsymbol{\hat{u}})$, indicating that the vertex $v$ will be added to $N(o)$ before $u$ (Line 14 in Algorithm \hyperref[alg: index]{1}). Since $\theta(u,o,v) < 60^\circ < \theta(o,v,u)$, we have $IP(\boldsymbol{\hat{u}},\boldsymbol{\hat{v}}) > IP(\boldsymbol{\hat{o}},\boldsymbol{\hat{u}})$. As a result, according to Line 16 in Algorithm \hyperref[alg: index]{1}, vertex $u$ cannot be added to $N(o)$, which contradicts the initial assumption that $u$ is in $N(o)$.

\underline{Case 2:} If $\theta(o,u,v) > \theta(o,v,u)$, we can swap the positions of $u$ and $v$ in the triangle $\triangle uov$ and arrive at the same conclusion as in Case 1.

These cases demonstrate that the assumption of having two neighbors with an angle less than $60^\circ$ in $N(o)$ is not feasible, and thus, it is ensured that the selected neighbors in $N(o)$ maintain an angle of at least $60^\circ$ between each other, as described in Algorithm \hyperref[alg: index]{1}. This property ensures the effectiveness and correctness of the pipeline in constructing the final neighbor sets.
\end{proof}

\subsection{Encoders Used in Our Experiments}
\label{appendix: encoders}

\noindent\textbf{ResNet.} ResNet is a type of deep neural network that utilizes residual blocks and skip connections to facilitate the training of deep networks and mitigate the issue of vanishing or exploding gradients. It was proposed by researchers at Microsoft Research in 2015 \cite{he2016deep} and achieved success in the ImageNet classification task with a 152-layer network. ResNet can also be applied to other visual recognition tasks, including object detection and segmentation. In our experiments, we employed ResNet17 and ResNet50 as encoders for the image modality. These are variations of ResNet with different numbers of layers. ResNet17 consists of 17 layers, while ResNet50 consists of 50 layers. Additionally, ResNet50 adopts a bottleneck design for its residual blocks, which reduces the parameter count and accelerates the training process. Both ResNet17 and ResNet50 can serve as feature extractors for tasks such as object detection or segmentation.

\vspace{0.1cm}
\noindent\textbf{LSTM.} LSTM stands for Long Short-Term Memory, which is a type of recurrent neural network (RNN) designed for processing sequential data, including speech and video \cite{greff2016lstm}. LSTM incorporates feedback connections and a specialized structure called a cell, enabling it to store and update information over long time intervals. It also employs three gates (input, output, and forget) to regulate the flow of information into and out of the cell. LSTM finds applications in various tasks such as speech recognition, machine translation, and handwriting recognition. Furthermore, LSTM can be combined with convolutional neural networks (CNNs) to form a convolutional LSTM network, which proves useful for tasks like video prediction or object tracking. In our experiments, we utilized LSTM as the encoder for the text modality.

\vspace{0.1cm}
\noindent\textbf{Transformer.} The Transformer is a deep learning model introduced in 2017 for natural language processing tasks, such as machine translation and text summarization \cite{DevlinCLT19}. Unlike recurrent neural networks, the Transformer does not process sequential data in a sequential manner. Instead, it employs attention mechanisms to capture dependencies between words or tokens. The Transformer comprises two main components: an encoder and a decoder. The encoder takes an input sentence and converts it into a sequence of vectors known as encodings. The decoder takes these encodings and generates an output sentence. Both the encoder and decoder consist of multiple layers, each containing a multi-head self-attention module and a feed-forward neural network module. Additionally, the Transformer utilizes positional encodings to incorporate positional information for each word in the sentence. In our experiments, we also employed the Transformer to encode the text modality.

\vspace{0.1cm}
\noindent\textbf{GRU.} 
A Gated Recurrent Unit (GRU) encoder is a type of recurrent neural network (RNN) that can encode input sequences of varying lengths into fixed-length feature vectors \cite{zhang2017gru}. It selectively updates and resets its hidden state based on the input and previous state, allowing it to capture both short-term and long-term dependencies within the sequence. This concise representation generated by the GRU encoder can be utilized for various tasks, including machine translation, speech recognition, and natural language understanding. In our experiments, we employed the GRU encoder to encode the text modality.

\vspace{0.1cm}
\noindent\textbf{Encoding.} 
In our experiments, we utilized ordinal encoding \cite{wang2015exploring} to encode the structured text description. This technique transforms categorical data into numerical data by assigning integer values to categories based on their rank or order. For instance, if a feature has three categories: ``low'', ``medium'', and ``high'', they can be encoded as 1, 2, and 3 respectively. Ordinal encoding is appropriate for categorical features that possess a natural ordering, such as grades, sizes, ratings, and so on. The original categories can be restored by reversing the ordinal encoding process, which involves mapping the integer values back to their respective categories.

\vspace{0.1cm}
\noindent\textbf{TIRG.} 
TIRG, which stands for Text-Image Residual Gating, is a method used to merge image and text features for image retrieval tasks \cite{vo2019composing}. It involves modifying the features of the query image using text, while maintaining the resulting feature vector within the same space as the target image. This is accomplished through a gated residual connection, which enhances the encoding and learning of representations. In our experiments, we utilized TIRG to encode image-text pairs as composition vectors.

\vspace{0.1cm}
\noindent\textbf{CLIP.} 
CLIP, which stands for Contrastive Language-Image Pre-Training, is a neural network model developed by OpenAI that has demonstrated remarkable achievements in multi-modal zero-shot learning \cite{CLIP2021}. It is trained on a large dataset of image-text pairs collected from the web and learns to associate images with their corresponding textual descriptions. CLIP exhibits impressive generalization capabilities and has been successfully applied to various tasks, including fine-grained art classification, image generation, zero-shot video retrieval, event classification, and visual commonsense reasoning. In our experiments, we also employed CLIP to encode image-text pairs as composition vectors.

\vspace{0.1cm}
\noindent\textbf{MPC.} 
MPC, which stands for Multimodal Probabilistic Composer, is a model designed to encode multiple modalities from diverse visual and textual sources \cite{neculai2022probabilistic}. It utilizes a probabilistic rule to combine probabilistic embeddings and employs a probabilistic similarity metric to measure the distance between them. The functioning of MPC is as follows: given information from various visual or textual modalities, it first learns probabilistic embeddings for each modality. These embeddings are then merged using a probabilistic composer, resulting in a probabilistic compositional embedding. This embedding is subsequently matched with the probabilistic embedding of the desired image by minimizing a probabilistic distance metric. MPC is capable of processing more than two queries by applying the probabilistic composer to a set of probabilistic embeddings. In our experiments, we employed MPC to encode image-text-image triples as composition vectors.

\subsection{High-dimensional Vector Search}
\label{appendix: vector search}
Vector search is a fundamental task with applications across various domains \cite{DPG, LiZAH20}, and it has received significant attention in recent years due to advancements in representation learning methods \cite{HNSW, Manu_zilliz}.
However, exact vector search can be computationally expensive, prompting researchers to focus on developing approximate techniques that strike a balance between accuracy and efficiency using vector indexes \cite{zhang2022leqat, PQ, HVS, li2022deep}.
Current vector search methods can be categorized into four types based on how the index is constructed: tree-based methods \cite{DasguptaF08, LuWWK20, MujaL14}, quantization-based methods \cite{PQ, ScaNN, AndreKS15}, hashing-based methods \cite{HuangFZFN15, GongWOX20, LiZSWT020}, and proximity graph-based methods \cite{HNSW, NSG, NSSG}. Recent works \cite{DPG, NSSG, zhao2023towards} have demonstrated that proximity graph-based methods achieve a favorable trade-off between accuracy and efficiency, making them well-suited for handling large-scale vector search tasks.

\begin{figure}
  \setlength{\abovecaptionskip}{0cm}
  \setlength{\belowcaptionskip}{0cm}
  \centering
  \footnotesize
  \stackunder[0.5pt]{\includegraphics[scale=0.28]{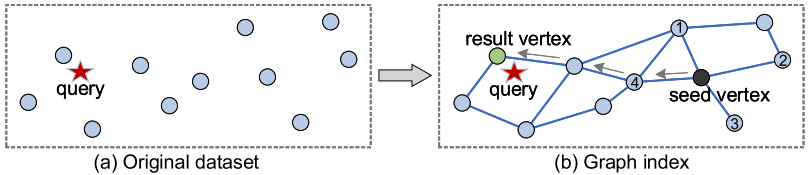}}{}
  \newline
  \caption{An example of indexing and search based on proximity graph \cite{graph_survey_vldb2021}.}
  \label{fig: pg-based index}
\end{figure}

\subsection{Proximity Graph-Based Index Algorithm}
\label{appendix: pg-based index}
Proximity graph-based algorithms have gained popularity for vector similarity search, particularly in high-dimensional spaces, due to their ability to strike a balance between efficiency and accuracy by capturing neighbor relationships between vectors \cite{NSG}. Major high-tech companies like Microsoft \cite{DiskANN} and Alibaba \cite{NSG} utilize these algorithms. To enable online query serving, an offline proximity graph index needs to be built on the dataset of feature vectors. This graph consists of vertices representing the vector data points and edges representing pairwise similarities or distances between vectors. Different algorithms, such as NSG \cite{NSG} or KGraph \cite{NNDescent}, employ various graph construction methods.

In a recent survey \cite{graph_survey_vldb2021}, a comprehensive analysis of proximity graph-based index algorithms is provided, including their performance, strengths, and potential pitfalls. Fig. \ref{fig: pg-based index} illustrates an example of finding the nearest vertex to a query vector $q$ using a proximity graph index. The process begins with a seed vertex (the black vertex), which can be randomly selected or fixed \cite{graph_survey_vldb2021}. It then visits its neighbors and computes their distances to $q$. Vertex 4 is chosen as the next visiting vertex because it is the closest among the seed's neighbors. This process continues until it reaches the green vertex, which has no neighbors closer to $q$ than itself. The search process relies on various factors, such as the seed acquisition strategy \cite{HVS} and the routing technique \cite{LiZAH20}. 

It is worth noting the following remarks:

\noindent\textit{(1) Lack of Theoretical Guarantee}. Although state-of-the-art proximity graph index algorithms lack theoretical guarantees \cite{graph_survey_vldb2021, HNSW}, their superiority in real-world scenarios has been validated by numerous research works \cite{graph_survey_vldb2021, HNSW, NSG, DPG, HVS, NSSG, DiskANN} and industrial applications \cite{annbenchmark, lanns}.

\noindent\textit{(2) Flexibility and Customization in {\name}}. In {\name}, we have designed a general pipeline that allows components from existing proximity graph algorithms to be easily integrated. Moreover, our pipeline supports custom-optimized components, which can inspire further research and experimentation.

Overall, proximity graph-based indexes provide an effective solution for vector similarity search, and their practical performance has been well-established in real-world scenarios.

\subsection{Motivation for Vector Weight Learning}
\label{appendix: motivation for vector weight learning}
In our framework, we aim to combine $m$ vectors of an object with $m$ modalities by assigning weights to each vector, resulting in a concatenated vector. This concept is inspired by similar cases, such as calculating multi-metric distance in multi-metric spaces \cite{zhu2022desire}. However, determining the importance or relevance of different vectors is a challenging task. Current methods often rely on user-defined weights \cite{zhu2022desire, franzke2016indexing}, which has two limitations.

\noindent\textit{(1) Lack of User-friendliness}. Assigning proper weights to different vectors is not user-friendly since users may not have the necessary knowledge or understanding to determine appropriate weights. This manual weight assignment process can be subjective and may not reflect the true importance of each vector. Based on our experiments, we have observed that different weights significantly affect the recall rate of {\problem}, as shown in Fig. \ref{fig: ablation weight learning} of the main text.

\noindent\textit{(2) Inapplicability for Offline Index Construction}. To build a fused index, as proposed in our paper, the weights for measuring the similarity between objects need to be known in advance during the offline index construction phase. However, relying on user-defined weights obtained online is not suitable for this purpose. Online weight assignment may introduce inconsistency and hinder the offline index construction process, which requires a consistent set of weights.

To address these limitations, there is a need for an automated approach to learn the vector weights that overcomes the user-defined weight assignment challenge and enables efficient offline index construction. By automatically learning the weights, we can ensure that the similarity computation process captures the true relevance and contribution of each vector, making it more objective and reliable.

\begin{figure}
  \setlength{\abovecaptionskip}{0.1cm}
  \setlength{\belowcaptionskip}{0cm}
  \centering
  \footnotesize
  \stackunder[0.5pt]{\includegraphics[scale=0.19]{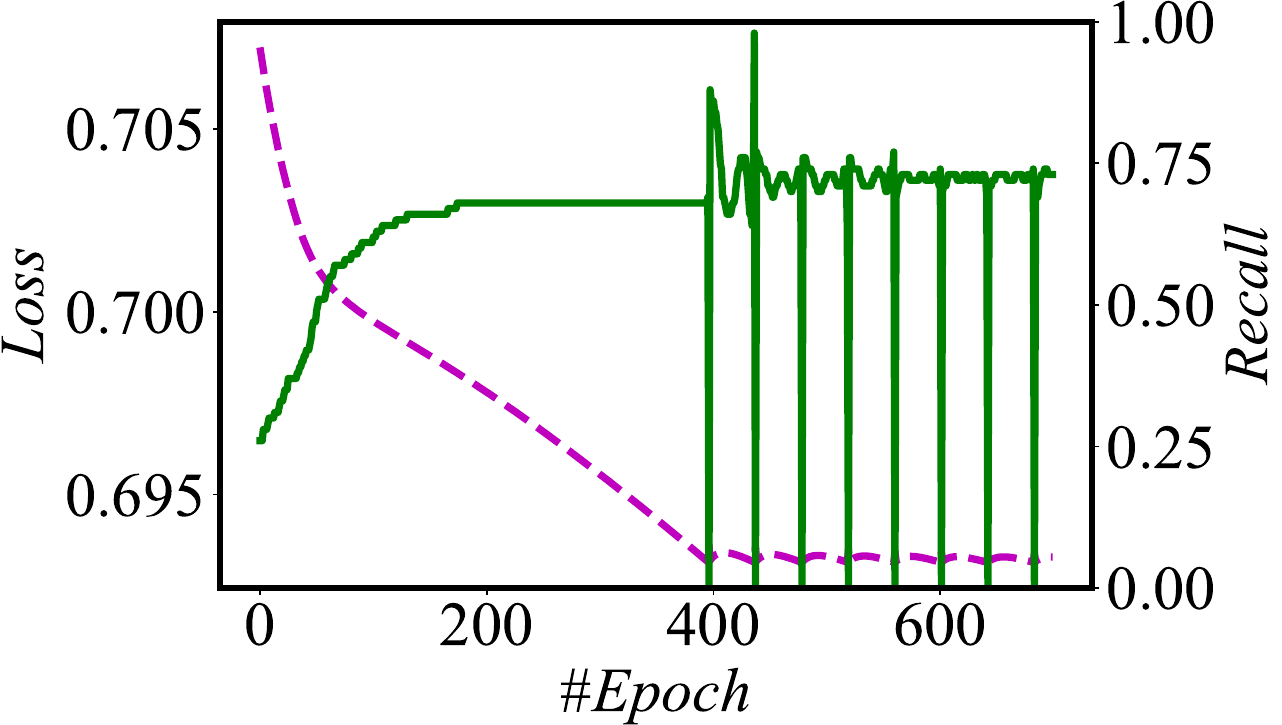}}{(a) $|N^{-}|=1$}\hspace{0.2cm}
  \stackunder[0.5pt]{\includegraphics[scale=0.19]{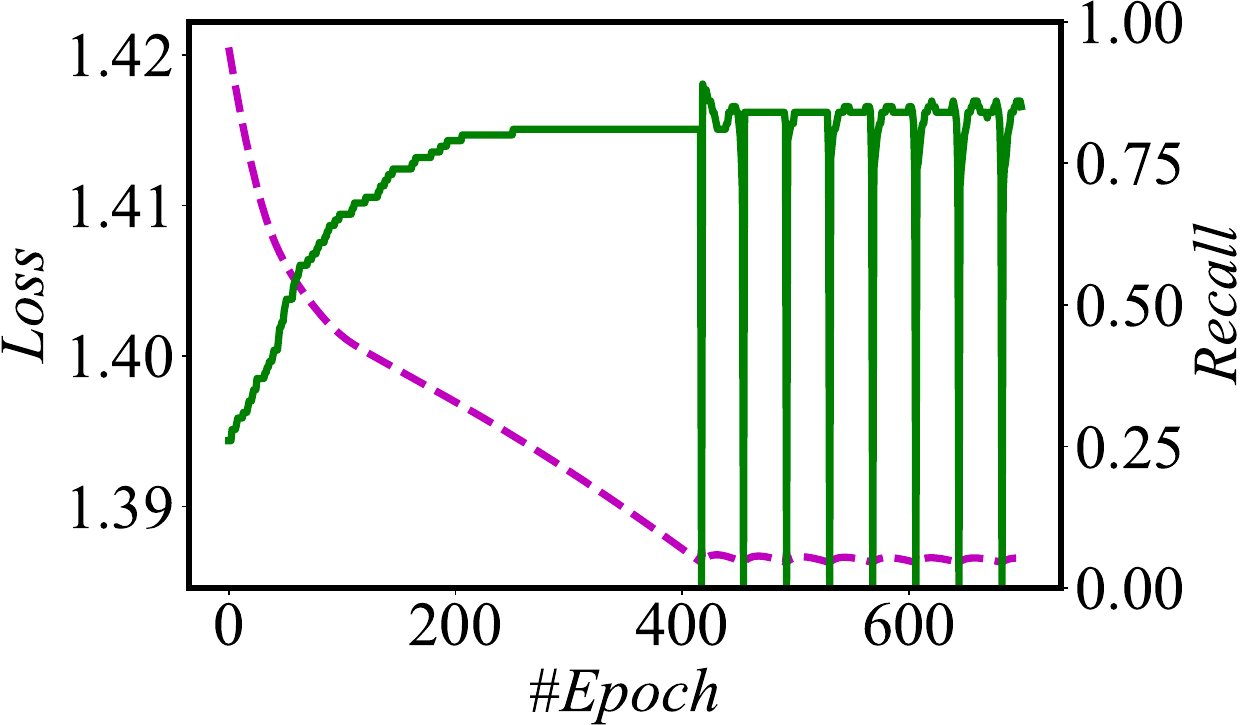}}{(b) $|N^{-}|=2$}
  \newline
  \stackunder[0.5pt]{\includegraphics[scale=0.19]{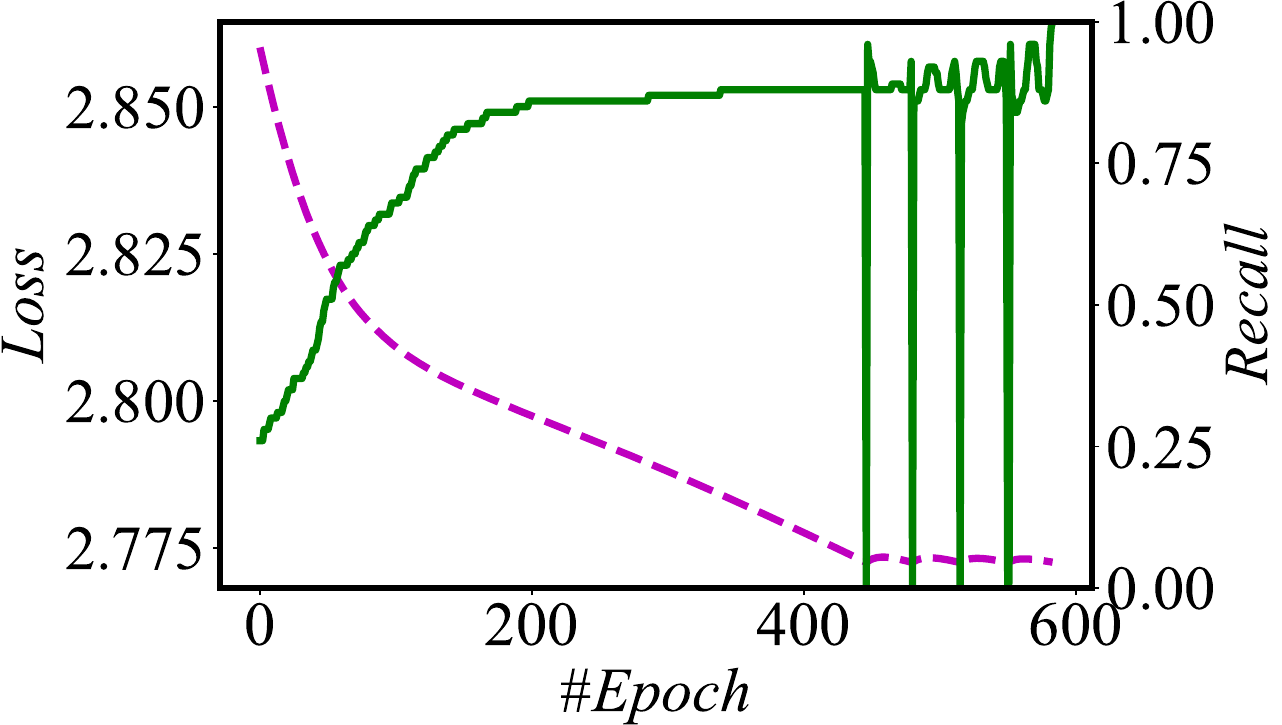}}{(c) $|N^{-}|=4$}\hspace{0.2cm}
  \stackunder[0.5pt]{\includegraphics[scale=0.19]{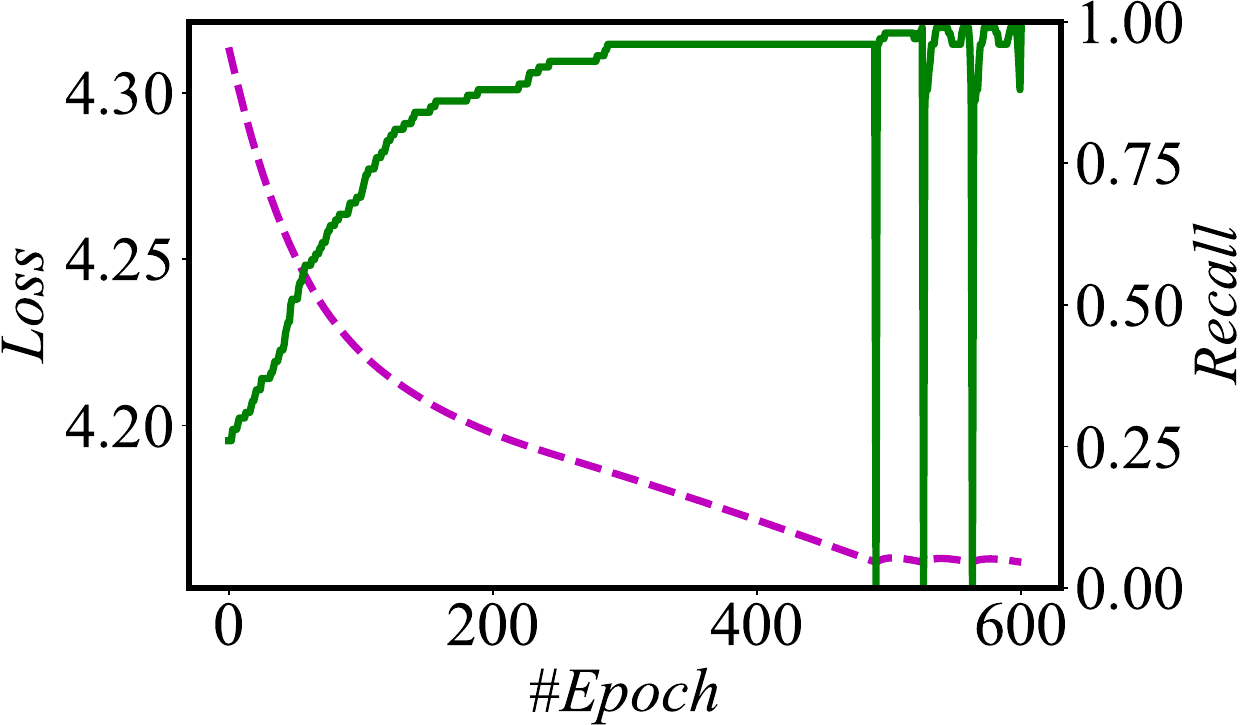}}{(d) $|N^{-}|=6$}
  \newline
  \stackunder[0.5pt]{\includegraphics[scale=0.19]{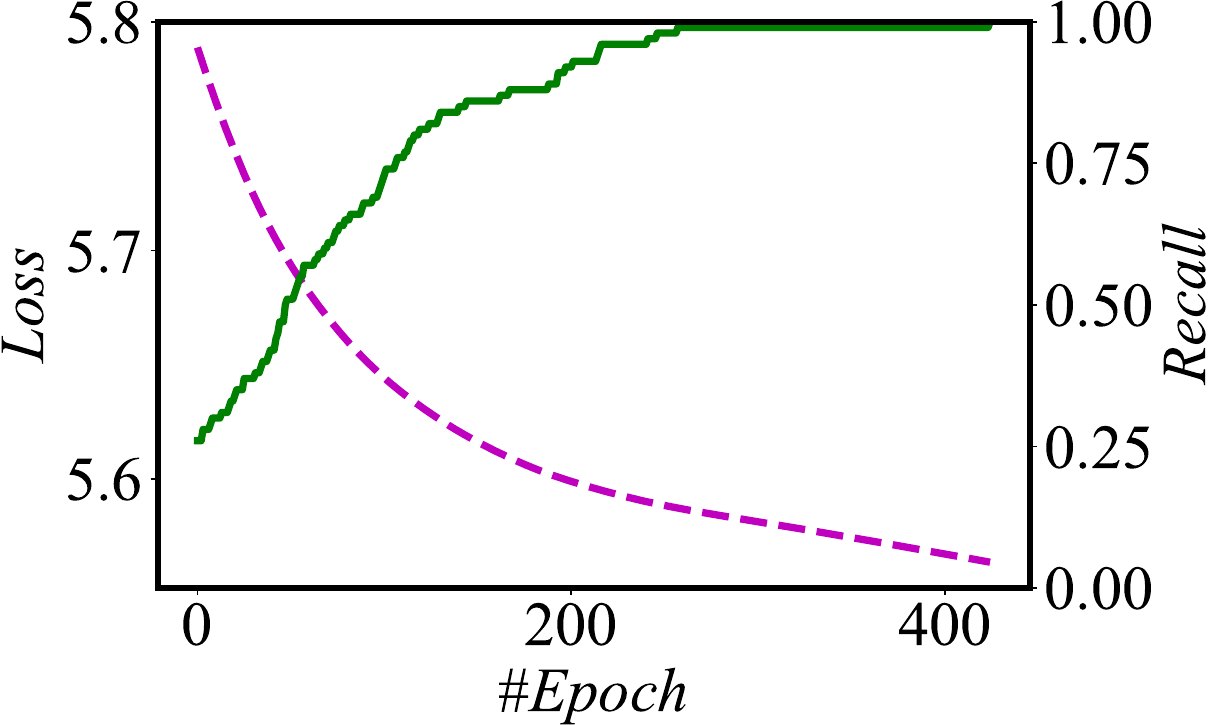}}{(e) $|N^{-}|=8$}\hspace{0.2cm}
  \stackunder[0.5pt]{\includegraphics[scale=0.19]{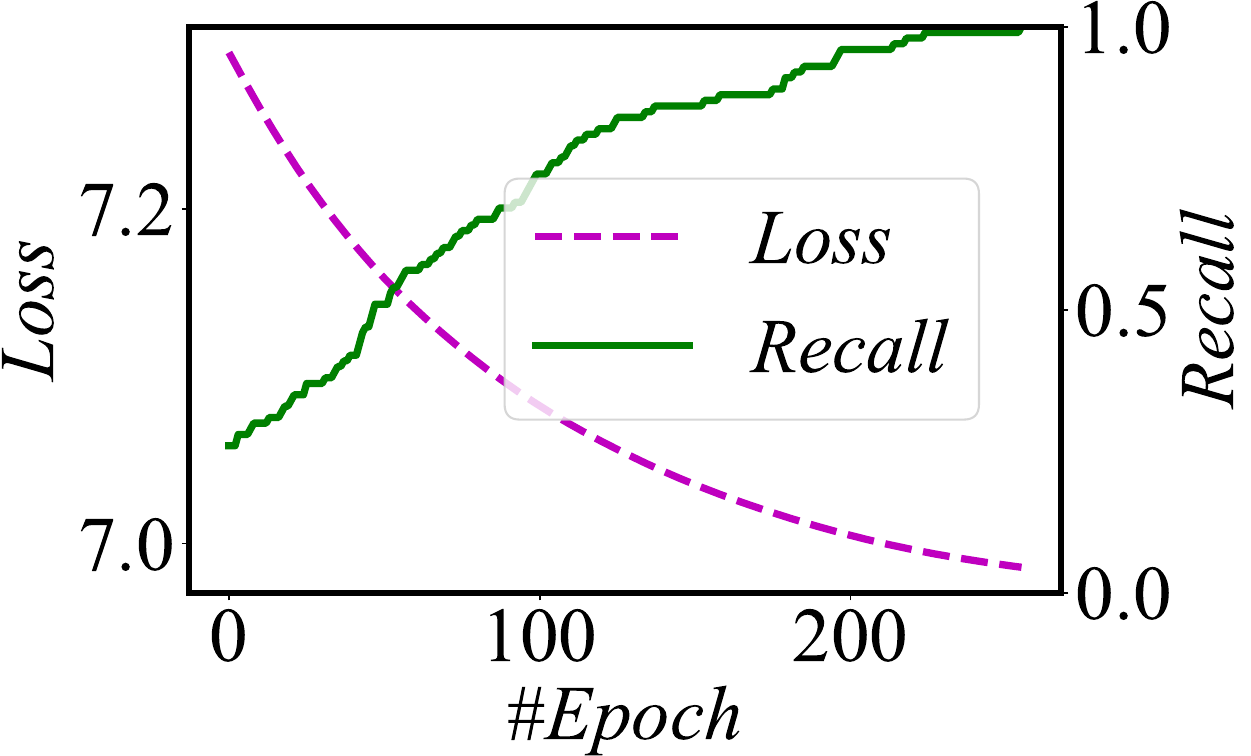}}{(f) $|N^{-}|=10$}
  \newline
  \caption{Effect of different number of negatives in vector weight learning.}
  \label{fig: appendix example number}
\end{figure}

\subsection{More Details of Setup and Parameters}
\label{appendix: setup and parameters}
We adopt the same training hyperparameters as the original papers of the encoders to obtain the embedding vectors. The encoder configuration remains consistent across all three frameworks. Our training pipeline is implemented in PyTorch for the vector weight learning model, and we utilize the Pybind library to invoke the vector similarity search kernel written in C++. The learning rate was set to 0.002, and the training was conducted for 700 iterations by default.

The indexing components and search codes were implemented in C++ using CGraph \cite{CGraph} and compiled with g++6.5. We built all indexes in parallel using 64 threads and executed the search procedure using a single thread, following the common setting in related work \cite{graph_survey_vldb2021,NSG}.

All experiments were conducted on a Linux server equipped with an Intel(R) Xeon(R) Gold 6248R CPU running at 3.00GHz and 755G memory. We performed three repeated trials and reported the average results for all evaluation metrics. The learned weights can be found in Appendix \ref{appendix: weights setting}.

\subsection{Study of the Number of Negative Examples}
\label{appendix: number negative examples}
In our study, we investigate the impact of the number of negative examples (e.g., $|N^{-}|$) on model training. We evaluate this effect by examining the loss and recall rate curves on the hard negatives using the ImageText1M dataset. As shown in Fig. \ref{fig: appendix example number}, we observe that increasing the number of negative examples generally leads to better training results. This means that including more negative examples during training helps improve the model's performance in identifying positive examples and distinguishing them from hard negatives.

However, it is important to consider the trade-off between training quality and training efficiency. As the number of negative examples increases, the training time also tends to increase. Therefore, it is necessary to find a proper balance between the training quality and efficiency by selecting an appropriate value for $|N^{-}|$.

By analyzing the loss and recall rate curves for different values of $|N^{-}|$, we can determine the optimal number of negative examples that achieves a satisfactory training effect without excessively increasing the training time. This ensures that the model is trained effectively while taking into account practical considerations.

\subsection{Parameters for Fused Index}
\label{appendix: index parameters}
The construction of the fused index relies on two important parameters: the maximum number of neighbors ($\gamma$) and the maximum iterations ($\varepsilon$). These parameters have an impact on the index size, index build time, and search performance. In this section, we evaluate these parameters in more detail.

\begin{figure}
  \setlength{\abovecaptionskip}{0.1cm}
  \setlength{\belowcaptionskip}{0cm}
  \centering
  \footnotesize
  \stackunder[0.5pt]{\includegraphics[scale=0.22]{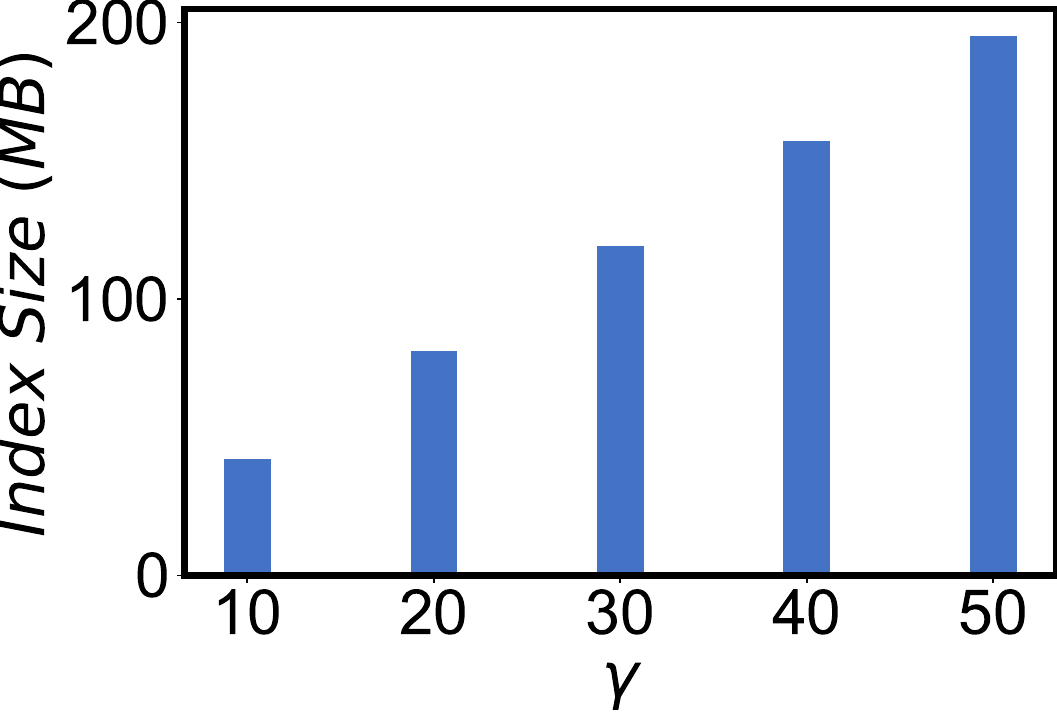}}{(a) ImageText1M}\hspace{0.2cm}
  \stackunder[0.5pt]{\includegraphics[scale=0.22]{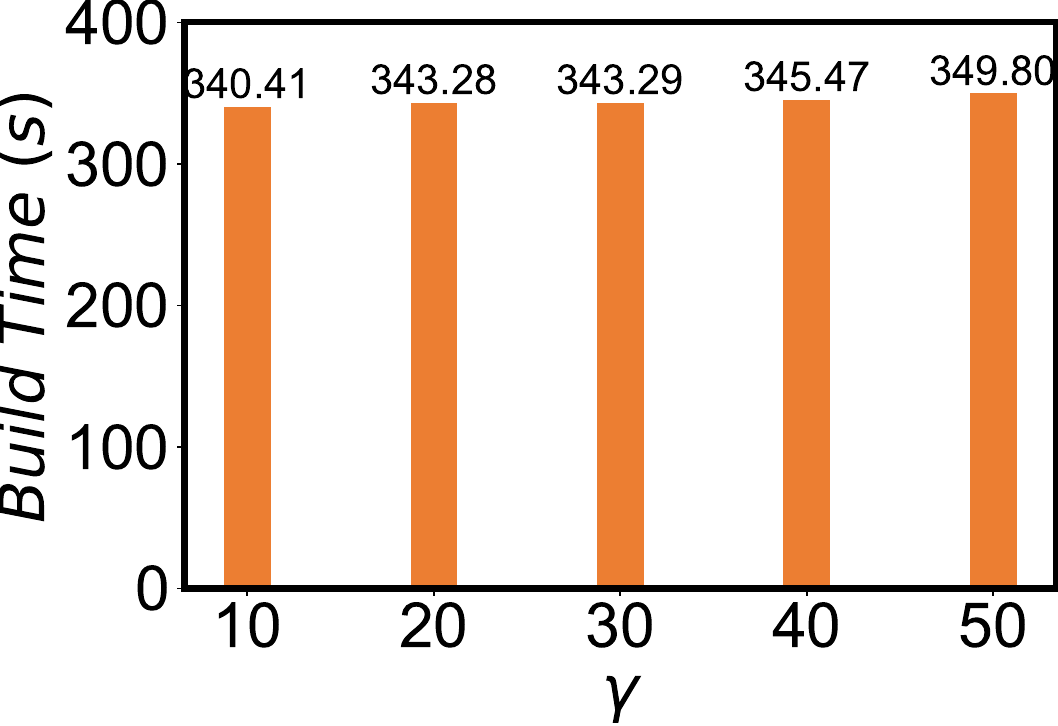}}{(b) ImageText1M}
  \newline
  \caption{Index size and index build time under different values of $\gamma$.}
  \label{fig: appendix neighbors parameters index}
\end{figure}

\begin{figure}[!t]
  \setlength{\abovecaptionskip}{0.1cm}
  \setlength{\belowcaptionskip}{0cm}
  \centering
  \footnotesize
  \stackunder[0.5pt]{\includegraphics[scale=0.22]{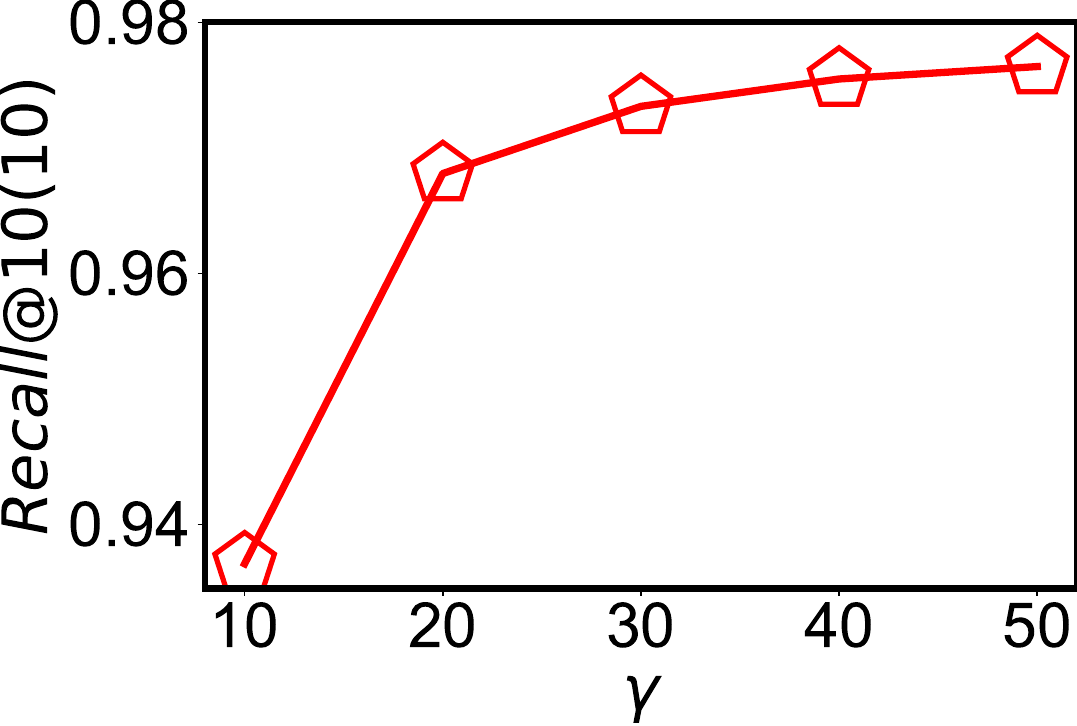}}{(a) ImageText1M ($l$=4000)}\hspace{0.2cm}
  \stackunder[0.5pt]{\includegraphics[scale=0.22]{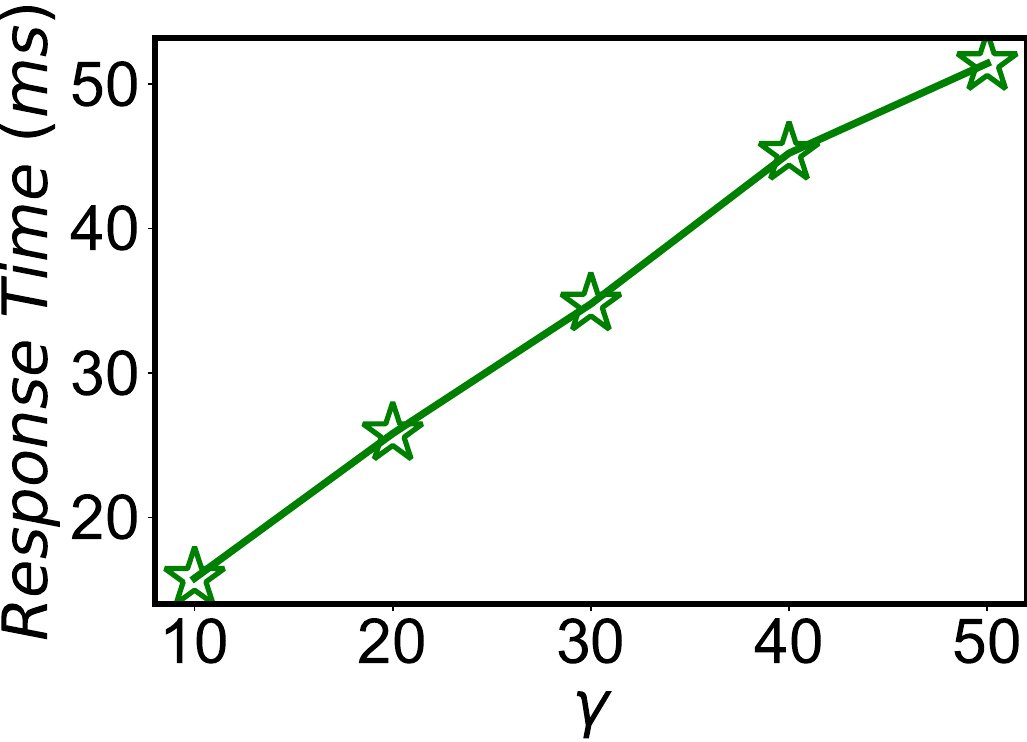}}{(b) ImageText1M ($l$=4000)}
  \newline
  \caption{Search performance under different values of $\gamma$.}
  \label{fig: appendix neighbors parameters search}
\end{figure}

\noindent\textit{Maximum Number of Neighbors ($\gamma$)}. Fig. \ref{fig: appendix neighbors parameters index} illustrates the relationship between $\gamma$ and the index size as well as the index build time. As $\gamma$ increases, both the index size and the index build time also increase. This is because a larger $\gamma$ requires handling more vertices in the \textit{Initialization}, \textit{Candidate acquisition}, and \textit{Neighbor selection} components, which increases the complexity of index construction.

In Fig. \ref{fig: appendix neighbors parameters search}, we maintain the other parameters unchanged (including the search parameters $k$ and $l$) and analyze the search performance for different values of $\gamma$. The result shows that the recall rate improves as $\gamma$ increases. This is because a larger $\gamma$ allows for visiting more neighbors of a vertex during the search process, leading to improved search accuracy. However, when $\gamma$ becomes very large, the computation of vector distances increases, resulting in lower efficiency. Therefore, it is crucial to strike a balance between accuracy and efficiency by adjusting $\gamma$ in practical scenarios. In our experiments, we set $\gamma$ to a default value of 30.

\begin{table}[!tb]
  \centering
  \setlength{\abovecaptionskip}{0.05cm}
  \setlength{\belowcaptionskip}{0.2cm}
  \setstretch{0.8}
  \fontsize{6.5pt}{3.3mm}\selectfont
  \caption{Graph quality under different number of iterations.}
  \label{tab: appendix graph quality iteration}
  \setlength{\tabcolsep}{.008\linewidth}{
  \begin{tabular}{|l|l|l|l|}
    \hline
    \textbf{\# Iterations ($\varepsilon$) $\downarrow$} & ImageText1M & AudioText1M & VideoText1M \\
    \hline
    1 & 0.0094 & 0.0088 & 0.0096 \\
    \hline
    2 & 0.7795 & 0.7945 & 0.7842 \\
    \hline
    3 & 0.9900 & 0.9900 & 0.9900 \\
    \hline
  \end{tabular}
  }
\end{table}

\noindent\textit{Maximum iterations ($\varepsilon$)}. The graph quality is defined as the mean ratio of $\gamma$ neighbors of a vertex over the top-$\gamma$ nearest neighbors based on joint similarity \cite{graph_survey_vldb2021}. Tab. \ref{tab: appendix graph quality iteration} presents the changes in graph quality with different numbers of iterations ($\varepsilon$). The results demonstrate that graph quality increases with $\varepsilon$ and reaches a value close to 1 when $\varepsilon$ is set to 3. Therefore, we set $\varepsilon$ to a default value of 3 in all our experiments.

By evaluating these parameters, we gain insights into their impact on the fused index construction. The findings suggest that selecting appropriate values for $\gamma$ and $\varepsilon$ can optimize index size, build time, and search performance, striking a balance between accuracy and efficiency in practice.

\subsection{Parameter for Joint Search}
\label{appendix: search parameters}
The parameter that affects the accuracy and efficiency of the search process in the joint search algorithm (please refer to Algorithm \hyperref[alg: search]{2} of the main text) is the result set size, denoted as $l$. In this section, we evaluate how the recall rate and response time change with different values of $l$.

Tab. \ref{tab: appendix result set size search} presents the results of the evaluation, showing the recall rate and response time for different $l$ values. It can be observed that both the recall rate and response time increase as $l$ becomes larger. The increase in recall rate is expected because a larger $l$ allows for visiting more vertices and considering more potential results during the search process. This leads to a higher likelihood of retrieving relevant objects, resulting in an improved recall rate. However, the response time also increases with larger $l$ values. This is because a larger result set size requires visiting more vertices and performing more vector calculations, which adds computational overhead and increases the overall response time. 
Therefore, selecting an appropriate value for $l$ involves a trade-off between recall rate and response time. A larger $l$ can improve recall but at the cost of increased response time. Researchers and practitioners need to consider the specific requirements of their application and strike a balance between accuracy and efficiency when choosing the value of $l$.

\begin{table}[!tb]
  \centering
  \setlength{\abovecaptionskip}{0.05cm}
  \setlength{\belowcaptionskip}{0.2cm}
  \setstretch{0.8}
  \fontsize{6.5pt}{3.3mm}\selectfont
  \caption{Search performance under different values of $l$ ($\gamma=30$).}
  \label{tab: appendix result set size search}
  \setlength{\tabcolsep}{.008\linewidth}{
  \begin{tabular}{|l|l|l|l|l|l|}
    \hline
    \textbf{$l$ $\rightarrow$} & 700 & 1000 & 1500 & 2000 & 4000 \\
    \hline
    $\boldsymbol{Recall@10(10)}$ & 0.506100 & 0.637260 & 0.766190 & 0.856250 & 0.973310 \\
    \hline
    \textbf{Response Time (ms)} & 5 & 7 & 11 & 15 & 35 \\
    \hline
  \end{tabular}
  }
\end{table}

\subsection{Datasets}
\label{appendix: datasets}
We utilize nine datasets with varying modalities and cardinalities obtained from public sources, as presented in Tab. \ref{tab: Dataset} of the main text. CelebA \cite{celeba}, MIT-States \cite{mit-states}, Shopping \cite{shopping}, and MS-COCO \cite{neculai2022probabilistic} are four real-world multimodal datasets \cite{jandial2022sac,vo2019composing}. For instance, CelebA comprises two modalities for each object: a facial image and a corresponding text description. In these datasets, we employ the original query samples, and each query contains one or more ground-truth objects. Since there are no publicly available datasets with up to four modalities, we simulated two additional modalities for CelebA using different encoders. This led to the creation of the CelebA+ dataset, with four vectors for each object, simulating four modalities. To evaluate performance at a large scale, we added the text modality to four single-modal datasets (DEEP and SIFT for images \cite{deep16m,sift1m}, MSONG for audio \cite{msong1m}, and UQ-V for video \cite{uqv1m}) using the same method described in \cite{Milvus_sigmod2021}. As a result, we formed four large-scale multimodal datasets: ImageText16M, ImageText1M, AudioText1M, and VideoText1M.

\vspace{0.1cm}
\noindent\textbf{CelebA.} 
CelebFaces Attributes Dataset (CelebA) \cite{celeba} is a comprehensive dataset of celebrity images, consisting of over 200,000 images. Each image is annotated with 40 attributes, 5 landmark locations, and a face identity. The dataset covers a wide range of poses and backgrounds, offering rich diversity, quantity, and annotations. CelebA can be utilized for various computer vision tasks, including face attribute recognition, face recognition, face detection, landmark localization, and face editing \& synthesis.

\vspace{0.1cm}
\noindent\textbf{MIT-States.} 
MIT-States \cite{mit-states} is a dataset comprising approximately 60,000 images, each labeled with an object/noun and a state/adjective (e.g., ``red tomato'' or ``new camera''). The dataset includes 245 nouns and 115 adjectives, with an average of around 9 adjectives per noun. MIT-States is commonly employed to evaluate image retrieval and image classification tasks in the field of computer vision.

\vspace{0.1cm}
\noindent\textbf{Shopping.} 
The Shopping100k dataset \cite{shopping} consists of 101,021 images of clothing items extracted from various e-commerce providers for fashion studies. It was developed to address limitations in existing fashion-related datasets, which often feature posed images with occlusion issues. Each image in the dataset is represented with general and special attributes, with the special attributes being more suitable for attribute manipulation and fashion searches. Shopping encompasses various categories, such as T-shirts and bottoms.

\vspace{0.1cm}
\noindent\textbf{MS-COCO.} 
The MS-COCO dataset \cite{neculai2022probabilistic} is a widely used dataset in computer vision research, standing for Microsoft Common Objects in Context. It comprises over 330,000 images with more than 2.5 million labeled object instances, annotated with object bounding boxes and belonging to 80 object categories. The dataset is designed to facilitate research on object detection, segmentation, captioning, and other related tasks, serving as a challenging benchmark for evaluating computer vision models due to its scale, diversity, and complexity.

\vspace{0.1cm}
\noindent\textbf{ImageText1M.} 
ImageText1M is a semi-synthetic dataset that combines real-world images with text. It consists of 1 million SIFT vectors with a dimension of 128 \cite{sift1m}. Each vector represents an image and is augmented with a text modality to form a multimodal dataset.

\vspace{0.1cm}
\noindent\textbf{AudioText1M.} 
AudioText1M is a semi-synthetic dataset that combines real-world audio with text. It comprises 1 million contemporary popular music tracks, each represented by 420 dimensions of audio features and metadata \cite{msong1m}. Similar to ImageText1M, each audio vector is paired with a text modality to create a multimodal dataset.

\vspace{0.1cm}
\noindent\textbf{VideoText1M.} 
VideoText1M is a semi-synthetic dataset that combines real-world videos with text. It extracts 256 dimensions of local features from keyframes of each video \cite{uqv1m}. These video vectors are then combined with a text modality, resulting in a multimodal dataset.

\vspace{0.1cm}
\noindent\textbf{ImageText16M.} 
ImageText16M is a semi-synthetic dataset that merges real-world images with text. It encompasses 16 million data points, with each point represented by 96 dimensions of deep neural codes derived from a convolutional neural network \cite{deep16m}. Similar to the other multimodal datasets, a text modality is added to each image vector.

\begin{table}[!tb]
  \centering
  \setlength{\abovecaptionskip}{0.05cm}
  \setlength{\belowcaptionskip}{0cm}
  \setstretch{0.8}
  \fontsize{6.5pt}{3.3mm}\selectfont
  \caption{Output weights of module for MIT-States dataset.}
  \label{tab: appendix weights setting mit-states}
  \setlength{\tabcolsep}{.008\linewidth}{
  \begin{tabular}{|l|l|l|l|}
    \hline
    \textbf{Encoder $\downarrow$} & $\omega_0^2$ (modality 0) & $\omega_1^2$ (modality 1) \\
    \hline
    ResNet17+LSTM & 0.3000 & 0.7000  \\
    \hline
    ResNet50+LSTM & 0.0012 & 1.4291 \\
    \hline
    ResNet17+Transformer & 0.1172 & 0.2669 \\
    \hline
    ResNet50+Transformer & 0.5000 & 0.5000 \\
    \hline
    TIRG+LSTM & 0.5000 & 0.5000 \\
    \hline
    TIRG+Transformer & 0.0295 & 0.0224 \\
    \hline
    CLIP+LSTM & 0.5000 & 0.5000 \\
    \hline
    CLIP+Transformer & 0.0670 & 0.0432 \\
    \hline
  \end{tabular}
  }
\end{table}

\begin{table}[!tb]
  \centering
  \setlength{\abovecaptionskip}{0.05cm}
  \setlength{\belowcaptionskip}{0.2cm}
  \setstretch{0.8}
  \fontsize{6.5pt}{3.3mm}\selectfont
  \caption{Output weights of module for CelebA dataset.}
  \label{tab: appendix weights setting celeba}
  \setlength{\tabcolsep}{.008\linewidth}{
  \begin{tabular}{|l|l|l|l|}
    \hline
    \textbf{Encoder $\downarrow$} & $\omega_0^2$ (modality 0) & $\omega_1^2$ (modality 1) \\
    \hline
    ResNet17+Encoding & 0.0007 & 0.9526  \\
    \hline
    ResNet50+Encoding & 0.0848 & 1.1855 \\
    \hline
    TIRG+Encoding & 0.1064 & 0.6414 \\
    \hline
    CLIP+Encoding & 0.1089 & 0.8551 \\
    \hline
  \end{tabular}
  }
\end{table}

\begin{table}[!tb]
  \centering
  \setlength{\abovecaptionskip}{0.05cm}
  \setlength{\belowcaptionskip}{0.2cm}
  \setstretch{0.8}
  \fontsize{6.5pt}{3.3mm}\selectfont
  \caption{Output weights of module for Shopping dataset.}
  \label{tab: appendix weights setting shopping}
  \setlength{\tabcolsep}{.008\linewidth}{
  \begin{tabular}{|l|l|l|l|}
    \hline
    \textbf{Encoder $\downarrow$} & $\omega_0^2$ (modality 0) & $\omega_1^2$ (modality 1) \\
    \hline
    ResNet17+Encoding & 0.0262 & 1.2124  \\
    \hline
    TIRG+Encoding & 0.0092 & 1.2042 \\
    \hline
  \end{tabular}
  }
\end{table}

\begin{table}[!tb]
  \centering
  \setlength{\abovecaptionskip}{0.05cm}
  \setlength{\belowcaptionskip}{0.2cm}
  \setstretch{0.8}
  \fontsize{6.5pt}{3.3mm}\selectfont
  \caption{Output weights of module for MS-COCO dataset.}
  \label{tab: appendix weights setting coco}
  \setlength{\tabcolsep}{.008\linewidth}{
  \begin{tabular}{|l|l|l|l|l|}
    \hline
    \textbf{Encoder $\downarrow$} & $\omega_0^2$ & $\omega_1^2$ & $\omega_2^2$ \\
    \hline
    MPC+GRU+ResNet50 & 0.0083 & 0.0342 & 0.0123  \\
    \hline
    ResNet50+GRU+ResNet50 & 0.0091 & 0.0233 & 0.0144  \\
    \hline
  \end{tabular}
  }
\end{table}

\begin{table}[!tb]
  \centering
  \setlength{\abovecaptionskip}{0.05cm}
  \setlength{\belowcaptionskip}{0.2cm}
  \setstretch{0.8}
  \fontsize{6.5pt}{3.3mm}\selectfont
  \caption{Output weights of module for CelebA+ dataset.}
  \label{tab: appendix weights setting shopping}
  \setlength{\tabcolsep}{.008\linewidth}{
  \begin{tabular}{|l|l|l|l|l|l|}
    \hline
    \textbf{Encoder $\downarrow$} & $\omega_0^2$ & $\omega_1^2$ & $\omega_2^2$ & $\omega_3^2$ \\
    \hline
    CLIP+Encoding+ResNet17+ResNet50 & 0.4092 & 3.1363 & 0.0721 & 0.0290  \\
    \hline
  \end{tabular}
  }
\end{table}

\begin{table}[!tb]
  \centering
  \setlength{\abovecaptionskip}{0.05cm}
  \setlength{\belowcaptionskip}{0.2cm}
  \setstretch{0.8}
  \fontsize{6.5pt}{3.3mm}\selectfont
  \caption{Output weights of module for ImageText1M, AudioText1M, VideoText1M, and ImageText16M datasets.}
  \label{tab: appendix weights setting semi-synthetic datasets}
  \setlength{\tabcolsep}{.008\linewidth}{
  \begin{tabular}{|l|l|l|l|}
    \hline
    \textbf{Dataset $\downarrow$} & $\omega_0^2$ (modality 0) & $\omega_1^2$ (modality 1) \\
    \hline
    ImageText1M & 0.1199 & 0.5572  \\
    \hline
    AudioText1M & 0.0453 & 0.8589  \\
    \hline
    VideoText1M & 0.3106 & 0.4440  \\
    \hline
    ImageText16M & 0.1123 & 0.8742  \\
    \hline
  \end{tabular}
  }
\end{table}

\subsection{Weights Setting}
\label{appendix: weights setting}
In {\name}, we employ a vector weight learning module to capture the significance of various modalities. The specific weights utilized for constructing indexes and performing query processing on different datasets and encoders are presented in Tab. \ref{tab: appendix weights setting mit-states} to \ref{tab: appendix weights setting semi-synthetic datasets}.

\begin{table}[!tb]
  \centering
  \setlength{\abovecaptionskip}{0.05cm}
  \setlength{\belowcaptionskip}{0.2cm}
  \setstretch{0.8}
  \fontsize{6.5pt}{3.3mm}\selectfont
  \caption{Search accuracy with target modality input only.}
  \label{tab: accuracy unimodal}
  \setlength{\tabcolsep}{.008\linewidth}{
  \begin{tabular}{|c|l|l|l|l|}
    \hline
    \textbf{Dataset} & \textbf{Encoder} & \textit{\textbf{Recall@1(1)}} & \textit{\textbf{Recall@5(1)}} & \textit{\textbf{Recall@10(1)}} \\
    \hline
    \multirow{2}*{\textbf{MIT-States}} & ResNet17 & 0.0268 & 0.1103 & 0.1822 \\
    \cline{2-5}
    ~ & ResNet50 & 0.0363 & 0.1393 & 0.2257 \\
    \hline
    \multirow{2}*{\textbf{CelebA}} & ResNet17 & 0.1499 & 0.4055 & 0.4913 \\
    \cline{2-5}
    ~ & ResNet50 & 0.1475 & 0.3785 & 0.4519 \\
    \hline
    \multirow{1}*{\tabincell{c}{\textbf{Shopping} \textbf{(T-shirt)}}} & ResNet17 & 0 & 0.0192 & 0.0399  \\
    \hline
  \end{tabular}
  }
\end{table}

\begin{table}[!tb]
  \centering
  \setlength{\abovecaptionskip}{0.05cm}
  \setlength{\belowcaptionskip}{0.2cm}
  \setstretch{0.8}
  \fontsize{6.5pt}{3.3mm}\selectfont
  \caption{Search accuracy with auxiliary modality only.}
  \label{tab: accuracy auxiliary modality}
  \setlength{\tabcolsep}{.008\linewidth}{
  \begin{tabular}{|c|l|l|l|l|}
    \hline
    \textbf{Dataset} & \textbf{Encoder} & \textbf{Recall@1(1)} & \textbf{Recall@5(1)} & \textbf{Recall@10(1)} \\
    \hline
    \multirow{2}*{\textbf{MIT-States}} & LSTM & 0.2747 & 0.4343 & 0.4844 \\
    \cline{2-5}
    ~ & Transformer & 0.2601 & 0.2641 & 0.2824 \\
    \hline
    \multirow{1}*{\textbf{CelebA}} & Encoding & 0.0377 & 0.0936 & 0.1291 \\
    \hline
    \multirow{1}*{\tabincell{c}{\textbf{Shopping} \textbf{(T-shirt)}}} & Encoding & 0.0964 & 0.4126 & 0.5362  \\
    \hline
  \end{tabular}
  }
\end{table}

\subsection{Search Accuracy Using a Single Modality}
\label{appendix: accuracy one modality}
In this section, we analyze the search accuracy achieved when using only the target modality input or the auxiliary modality input. Tab. \ref{tab: accuracy unimodal} presents the search accuracy using only the target modality input, while Tab. \ref{tab: accuracy auxiliary modality} displays the search accuracy using only the auxiliary modality input, both evaluated on three real-world datasets. Generally, these unimodal approaches exhibit lower performance compared to methods that combine multiple modalities. However, in certain cases, they outperform the JE framework, which combines the features of all modality inputs into a joint embedding. One possible explanation for this observation is that the JE framework introduces a significant encoder error, while the auxiliary modality can accurately describe an object in some datasets. This finding further emphasizes the necessity of utilizing multiple vectors from different encoders to effectively represent an object.

\begin{table}[!tb]
  \centering
  \setlength{\abovecaptionskip}{0.05cm}
  \setlength{\belowcaptionskip}{0.2cm}
  \setstretch{0.8}
  \fontsize{6.5pt}{3.3mm}\selectfont
  \caption{Search accuracy on Shopping (Bottoms).}
  \label{tab: appendix accuracy shopping bottoms}
  \setlength{\tabcolsep}{.006\linewidth}{
  \begin{tabular}{|c|l|l|l|l|}
    \hline
    \textbf{Framework} & \textbf{Encoder} & \textbf{Recall@1(1)} & \textbf{Recall@5(1)} & \textbf{Recall@10(1)} \\
    \hline
    \textbf{JE} & TIRG & 0.0905 & 0.2715 & 0.3924  \\
    \hline
    \multirow{2}*{\textbf{MR}} & ResNet17+Encoding & 0.0107 & 0.0551 & 0.0995  \\
    \cline{2-5}
    ~ & TIRG+Encoding & 0.0596 & 0.2552 & 0.3850 \\
    \hline
    \multirow{2}*{\tabincell{c}{\boldsymbol{\name}\\ \textbf{(ours)}}} & ResNet17+Encoding & \textbf{0.4840($\uparrow$434.8\%)} & 0.7960 & 0.8887 \\
    \cline{2-5}
    ~ & TIRG+Encoding & 0.4784 & \textbf{0.8162($\uparrow$200.6\%)} & \textbf{0.8999($\uparrow$129.3\%)} \\
    \hline
  \end{tabular}
  }
\end{table}

\subsection{Search Accuracy on Shopping (Bottoms)}
\label{appendix: search accuracy shopping bottoms}
In this section, we present the search accuracy specifically for the ``bottoms'' category of the Shopping dataset, as shown in Tab. \ref{tab: appendix accuracy shopping bottoms}. It is worth noting that different categories within the Shopping dataset share the same output weights. This observation highlights the generalization capability of the vector weight learning module employed in our framework.

\subsection{Discussion of Accuracy, Efficiency, and Scalability}
\label{appendix: discussion}
\noindent\textbf{Accuracy.} In the context of {\problem}, unimodal search results tend to be inaccurate. Two baseline methods, namely Joint Embedding ({\bii}) and Multi-streamed Retrieval ({\bi}), aim to improve accuracy by incorporating multimodal information. {\bii} creates a composition vector by combining different modalities, while {\bi} merges the results of multiple unimodal searches. One way to optimize these baselines is to use {\bii} as one of the separate search methods within {\bi}. However, this optimization is still limited by {\bi}'s inability to capture the importance of different modalities. {\name} addresses this limitation by employing a hybrid fusion mechanism that combines modalities at multiple levels. By utilizing a weight-learning module, it effectively captures the importance of different modalities and achieves the highest accuracy.

\noindent\textbf{Efficiency.} Performing {\problem} involves computationally intensive tasks such as building an index and performing approximate queries. {\bi} builds an index for each modality and searches them separately, which can be slow in large-scale scenarios, especially when the number of query results increases. {\name} achieves higher efficiency and recall rate by constructing a fused index for different modal vectors and performing joint searches. The main reason for the efficiency difference lies in the quality of the graph index. In the main text, Fig. \ref{fig: index show} provides a visualization of three neighbors for an object in the CelebA dataset. In {\name}'s index, the vertex and its neighbors balance the importance of different modalities and exhibit better joint similarity. On the other hand, in {\bi}'s indexes, the vertex and its neighbors only consider similarity within a single modality.

\noindent\textbf{Scalability.} As the number of modalities and the data scale increase, {\bi} requires more and larger indexes, resulting in high storage costs and low indexing efficiency. In contrast, the size of {\name}'s index and its construction time grow slightly (almost logarithmically) with the data size and remain the same regardless of the number of modalities. It is important to note that as more data is added, the merging operation in {\bi} becomes more complex, limiting its search efficiency and accuracy. {\name} proves to be more efficient and accurate in large-scale scenarios and does not require a merging operation. Additionally, {\name} handles multiple modalities more effectively than {\bi}.

\noindent\textbf{Highlight.} We introduce the concept of modality importance mining in the context of {\problem}, thereby opening up new research avenues in multimodal search. We anticipate that this will drive advancements in representation learning, vector indexing, and search algorithms.

\subsection{Result Examples on Real-world Datasets}
\label{appendix: recall examples}

\begin{figure*}[!tbh]
  \centering
  \setlength{\abovecaptionskip}{0.1cm}
  \setlength{\belowcaptionskip}{-0.2cm}
  \includegraphics[width=\linewidth]{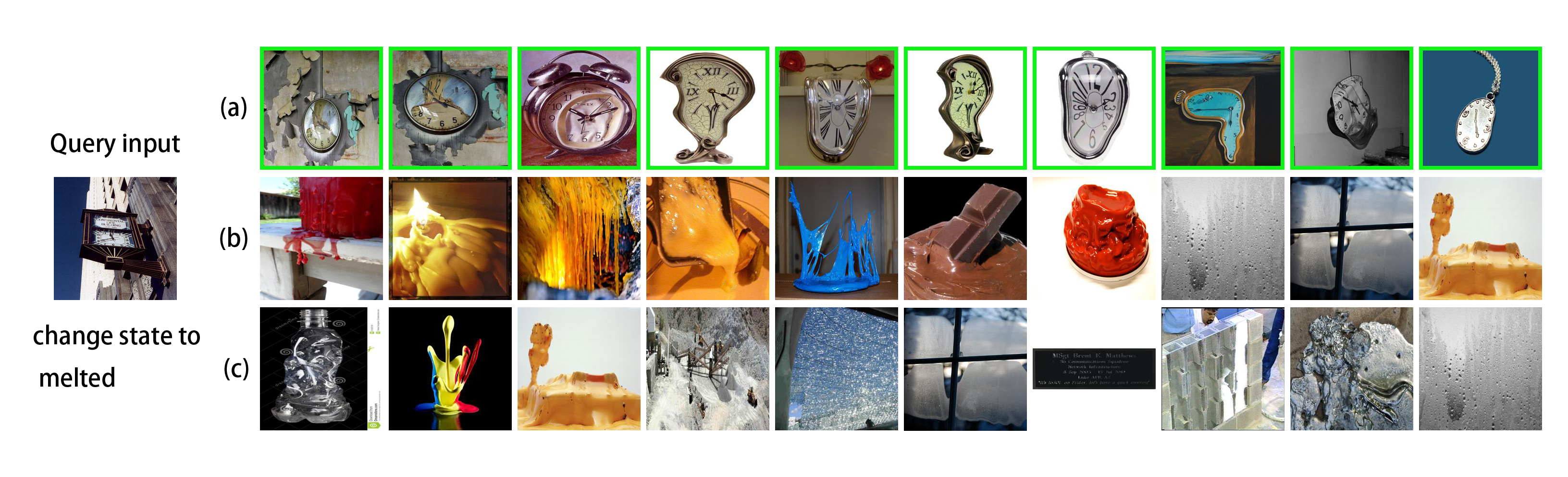}
  \caption{Some result examples of different frameworks with optimal encoders on MIT-States dataset. (a), (b), and (c) are the top-10 results of {\name}, {\bi}, and {\bii}, respectively. The green box marks the target objects.}
  \label{fig: mitstates_case1}
\end{figure*}

Fig. \ref{fig: mitstates_case1} shows the top-10 search results obtained using different frameworks for a query input consisting of a clock image and the text description ``change state to melted''. The results demonstrate that {\name} successfully retrieves all the ground-truth objects. However, {\bi} and {\bii} returns many unrelated objects. This discrepancy occurs because {\bi} combines dissimilar objects with low ranks from each candidate set, and the high encoding error of {\bii} leads to a larger inner product (IP) between the query input’s composition vector and the vectors of dissimilar objects in the target modality.

\begin{figure*}[!tbh]
  \centering
  \setlength{\abovecaptionskip}{0.1cm}
  \setlength{\belowcaptionskip}{-0.2cm}
  \includegraphics[width=\linewidth]{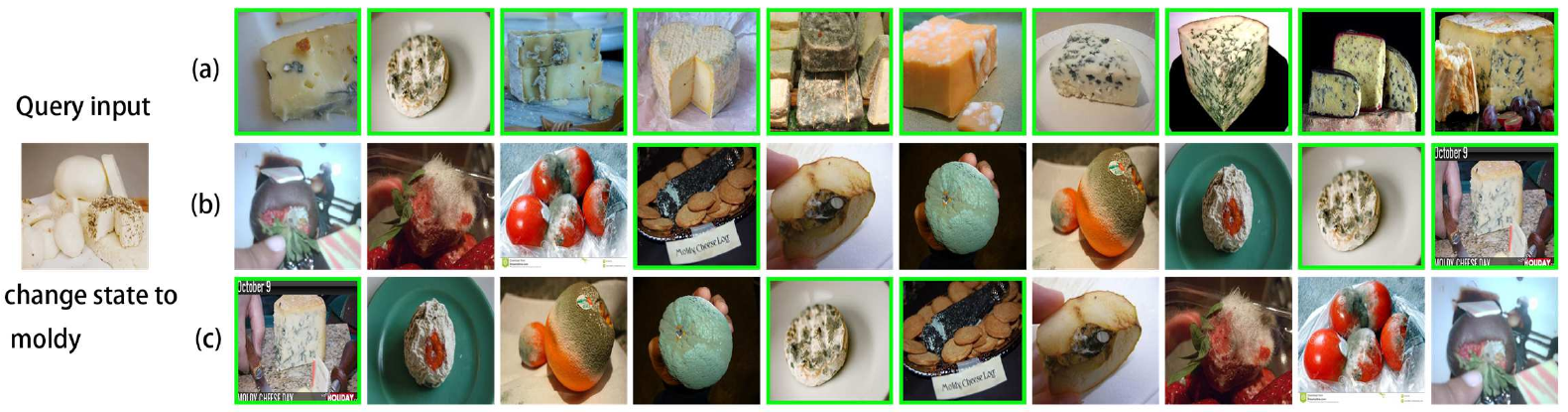}
  \caption{Some result examples of different frameworks with optimal encoders on MIT-States dataset. (a), (b), and (c) are the top-10 results of {\name}, {\bi}, and {\bii}, respectively. The green box marks the target objects.}
  \label{fig: mitstates_case2}
\end{figure*}

Fig. \ref{fig: mitstates_case2} displays the top-10 search results obtained using different frameworks for a query input consisting of a fresh cheese image and the text description ``change state to moldy''. The results indicate that {\name} successfully recalls all the ground-truth objects, whereas {\bi} and {\bii} only retrieve a few ground-truth objects, and many of the results only match the text description.

\begin{figure*}[!tbh]
  \centering
  \setlength{\abovecaptionskip}{0.1cm}
  \setlength{\belowcaptionskip}{-0.2cm}
  \includegraphics[width=\linewidth]{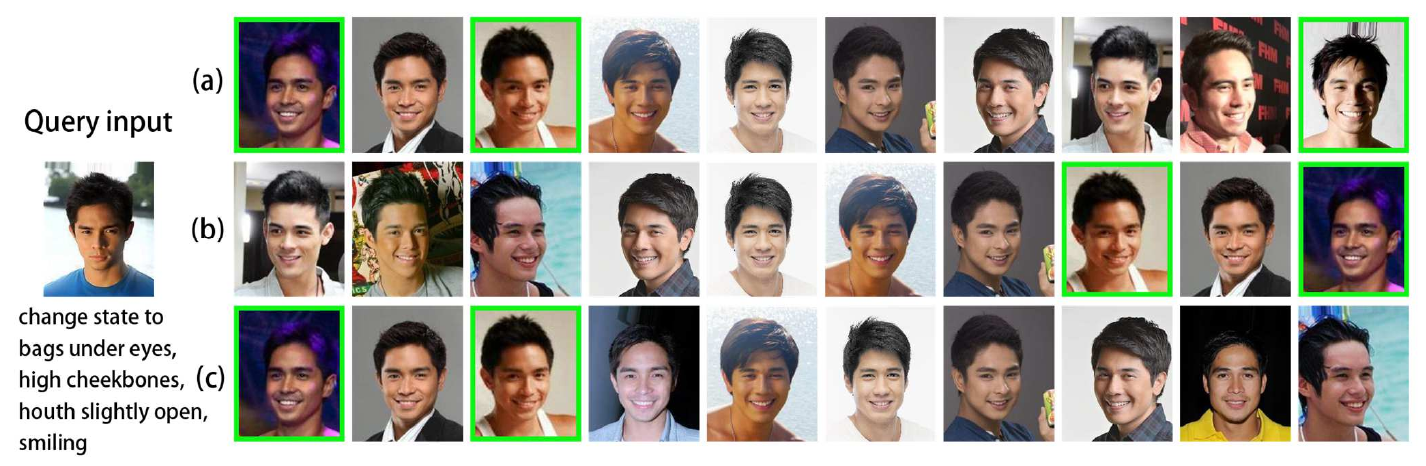}
  \caption{Some result examples of different frameworks with optimal encoders on CelebA dataset. (a), (b), and (c) are the top-10 results of {\name}, {\bi}, and {\bii}, respectively. The green box marks the target objects.}
  \label{fig: celeba_case1}
\end{figure*}

\begin{figure*}[!th]
  \centering
  \setlength{\abovecaptionskip}{0.1cm}
  \setlength{\belowcaptionskip}{-0.2cm}
  \includegraphics[width=\linewidth]{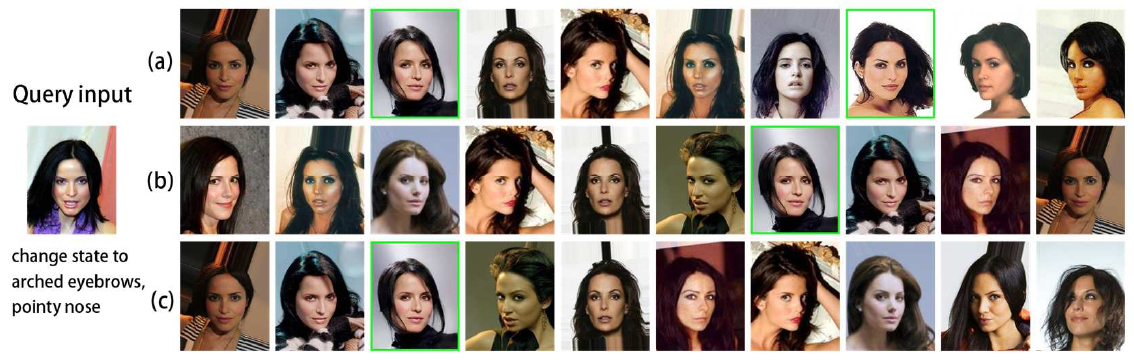}
  \caption{Some result examples of different frameworks with optimal encoders on CelebA dataset. (a), (b), and (c) are the top-10 results of {\name}, {\bi}, and {\bii}, respectively. The green box marks the target objects.}
  \label{fig: celeba_case2}
\end{figure*}

\begin{figure*}[!th]
  \centering
  \setlength{\abovecaptionskip}{0.1cm}
  \setlength{\belowcaptionskip}{-0.2cm}
  \includegraphics[width=\linewidth]{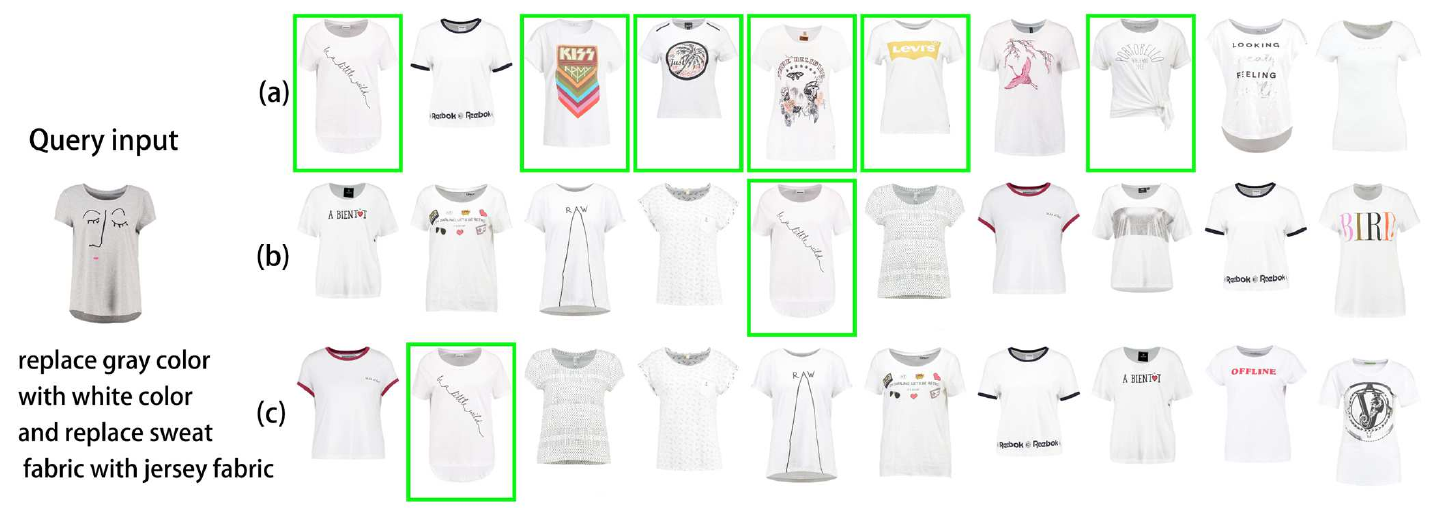}
  \caption{Some result examples of different frameworks with optimal encoders on Shopping dataset. (a), (b), and (c) are the top-10 results of {\name}, {\bi}, and {\bii}, respectively. The green box marks the target objects.}
  \label{fig: shopping_case1}
\end{figure*}

\begin{figure*}[!th]
  \centering
  \setlength{\abovecaptionskip}{0.1cm}
  \setlength{\belowcaptionskip}{-0.2cm}
  \includegraphics[width=\linewidth]{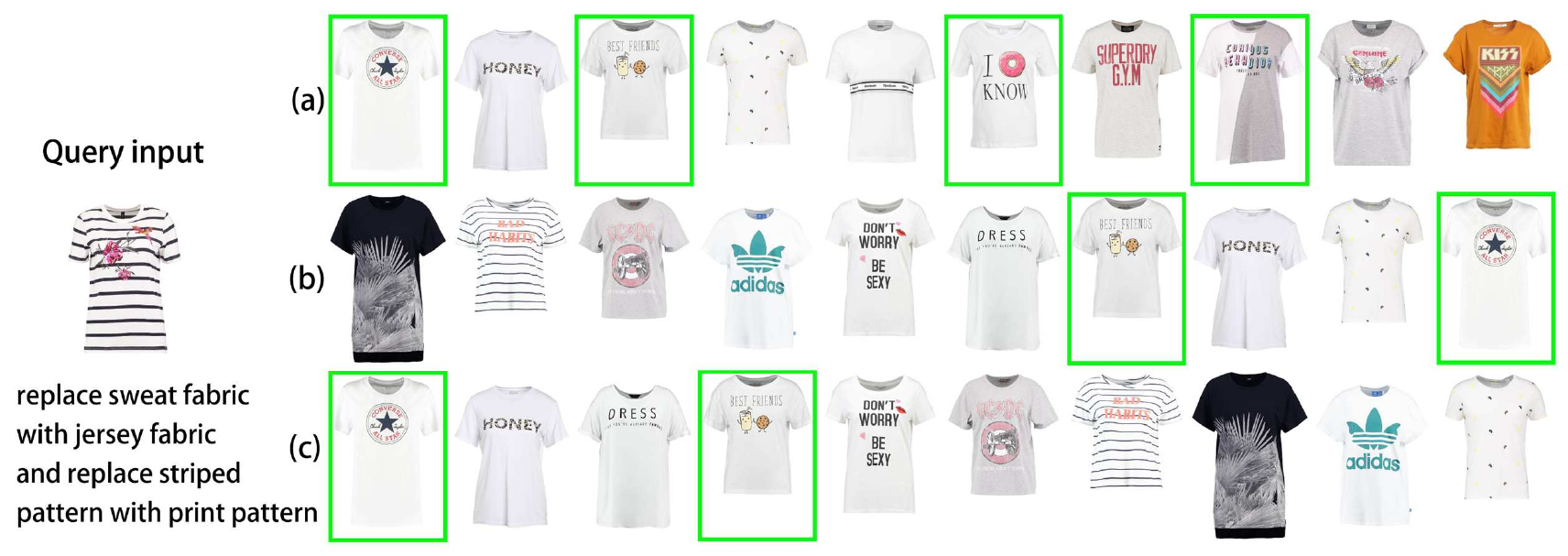}
  \caption{Some result examples of different frameworks with optimal encoders on Shopping dataset. (a), (b), and (c) are the top-10 results of {\name}, {\bi}, and {\bii}, respectively. The green box marks the target objects.}
  \label{fig: shopping_case2}
\end{figure*}

Fig. \ref{fig: celeba_case1} showcases the top-10 search results obtained using different frameworks for a query input consisting of a male face image and the text description ``change state to bags under eyes, high cheekbones, mouth slightly open, and smiling''. It is important to note that in addition to matching the query input face and text description, we also require that the result face and the query input face belong to the same identity. The results demonstrate that {\name} retrieves more ground-truth faces compared to other frameworks.

Fig. \ref{fig: celeba_case1} presents the top-10 search results obtained using different frameworks for a query input consisting of a female face image and the text description ``change state to arched eyebrows and pointy nose''. Similar to the previous example, we require that the result face and the query input face belong to the same identity. This requirement makes it more challenging to obtain the target face. Nevertheless, the results demonstrate that {\name} retrieves more ground-truth faces compared to other frameworks.

Fig. \ref{fig: shopping_case1} illustrates the top-10 search results obtained using different frameworks for a query input consisting of a T-shirt image and the text description ``replace gray color with white color and replace sweat fabric with jersey fabric''. The results indicate that {\name} retrieves more ground-truth T-shirts compared to other frameworks.

Fig. \ref{fig: shopping_case2} displays the top-10 search results obtained using different frameworks for a query input consisting of a T-shirt image and the text description ``replace sweat fabric with jersey fabric and replace striped pattern with print pattern''. The results demonstrate that {\name} retrieves more ground-truth T-shirts compared to other frameworks.

\end{document}